
\documentclass[]{IEEEtran}

\usepackage{algorithm}
\usepackage{algorithmic}
\usepackage{color}

\usepackage{epsfig}
\usepackage{graphics} 
\usepackage{graphicx}
\graphicspath{{./Figures/}}
\usepackage{amssymb}  
\usepackage{dsfont}
\usepackage[english]{babel}
\usepackage{amsmath} 
\usepackage{tikz} %
\usepackage{psfrag}
\usepackage{dsfont}


\newtheorem{teo}{Theorem}

\newtheorem{proposition}[teo]{Proposition}

\newtheorem{lemma}[teo]{Lemma}
\newtheorem{remark}{Remark}

\usepackage{colortbl}
    \definecolor{gray1}{rgb}{0,0,.8}
    \definecolor{gray2}{rgb}{0,1,0}
    \definecolor{gray3}{rgb}{0.8,0,.2}

\newcommand{\labeltext}[1]
{   \ifcase#1
    \or 24
    \or 20
    \or 5
    \or 2
    \or 10
    \or 6
    \fi
}

\newcommand{\argmax}[1]{\underset{#1}{\mathrm{argmax\,}}}

\newcommand{\argmin}[1]{\underset{#1}{\mathrm{argmin\,}}}

\newtheorem{es}{Example}


\newcommand{\map}[3]{#1: #2 \rightarrow #3}
\newcommand{\until}[1]{\{1,\ldots,#1\}}

\newcommand{\realpositive}{\mathbb{R}_{>0}}
\newcommand{\realnonnegative}{\mathbb{R}_{\geq0}}

\newcommand{\integernonnegative}{\mathbb{Z}_{\ge 0}}

\newcommand{\R}{\mathbb{R}} 
\newcommand{\N}{\mathbb{N}}  

\newcommand{\G}{\mathcal{G}} 
\newcommand{\V}{\mathcal{V}} 
\newcommand{\E}{\mathcal{E}} 

\newcommand{\eps}{\varepsilon}



\newcommand{\e}{\mathrm{e}} 

\newcommand{\1}{\mathbf{1}} 
\newcommand{\ind}{\mathds{1}} 



\usepackage{url}
\usepackage{soul}	
\setul{2pt}{2pt} 		
\usepackage{bm}
\usepackage[
Symbol						
]{upgreek}
\usepackage{tabularx}




\newcommand{\lessnew}[1]{{\color{black}#1}}

\title{\lessnew{Distributed estimation from relative measurements of heterogeneous and uncertain quality}%
\thanks{The research leading to this paper has been partly developed while C.~Ravazzi was at Department of Electronics and Telecommunications (DET), Politecnico di Torino, Italy, and  N.~P.~K. Chan and P. Frasca were with the Department of Applied Mathematics, University of Twente, Enschede, the Netherlands. The authors' research has been partially supported by the International Bilateral Joint CNR Lab COOPS and by IDEX Universit\'e Grenoble Alpes under C2S2 ``Strategic Research Initiative'' grant.}%
}%
\author{Chiara~Ravazzi\thanks{Chiara Ravazzi is with the National Research Council (CNR), Institute of Electronics, Computers and Telecommunication Engineering (IEIIT), c/o Politecnico di Torino, Italy. E-mail: {chiara.ravazzi@ieiit.cnr.it}.}, \and Nelson~P.~K. Chan\thanks{Nelson~P.~K. Chan is with the University of
Groningen, Groningen, the Netherlands. E-mail: {n.p.k.chan@rug.nl}.}, \and Paolo~Frasca\thanks{Paolo Frasca is with Univ.\ Grenoble Alpes, CNRS, Inria, Grenoble INP, GIPSA-lab, F-38000 Grenoble, France and Research Associate at the IEIIT-CNR, Torino, Italy. E-mail:paolo.frasca@gipsa-lab.fr.}}

\begin{document}

\maketitle
\thispagestyle{empty}

\begin{abstract}
\lessnew{This paper studies the problem of estimation from relative measurements in a graph, in which a vector indexed over the nodes has to be reconstructed from pairwise measurements of differences between its components associated to nodes connected by an edge.
In order to model heterogeneity and uncertainty of the measurements, we assume them to be affected by additive noise distributed according to a Gaussian mixture. In this original setup, we formulate the problem of computing the Maximum-Likelihood (ML) estimates and we design two novel algorithms, based on Least Squares regression and Expectation-Maximization (EM). The first algorithm (LS-EM) is centralized and performs the estimation from relative measurements, the soft classification of the measurements, and the estimation of the noise parameters. The second algorithm (Distributed LS-EM) is distributed 
 and performs estimation and soft classification of the measurements, but requires the knowledge of the noise parameters. 
We provide rigorous proofs of convergence of both algorithms and we present numerical experiments to evaluate and compare their performance with classical solutions. The experiments show the robustness of the proposed methods against different kinds of noise and, for the Distributed LS-EM, against errors in the knowledge of noise parameters.}
\end{abstract}

\section{Introduction}
Whenever measurements are used to estimate a quantity of interest, measurement errors must be properly taken into account and the statistical properties of these errors  should be identified to enable an efficient estimation. In this paper, we look at a specific case of this broad issue, within the context of network systems. Namely, we consider the problem of distributed estimation from relative measurements, defined as follows.
We assume to have a real vector that is indexed over the nodes of a graph with a known topology: the nodes are allowed to take pairwise measurements of the differences between their vector entries and those of their neighbors in the graph. The estimation problem consists in reconstructing the original vector, up to an additive constant. This prototypical problem can be applied in a variety of contexts~\cite{PB-JPH:08}. One example is relative localization of mobile automated vehicles, where the vehicles have to locate themselves by using only distance measurements~\cite{PB-JPH:07}. Another example is statistical ranking, where a set of items needs to be sorted according to their quality, which can only be evaluated comparatively~\cite{JX-LL-YY-YY:11,BO-CB-SO:14}.
In all these scenarios, the noise affecting the measurements can be drastically heterogeneous and, more importantly, its distribution may not be known a priori. For instance, in vehicle localization, distances between vehicles may be measured by more or less accurate sensors; in a ranking system, the items upon evaluation can be compared by more or less trustworthy entities. It is thus important to identify unreliable measurements and weight them differently in the estimation. In order to model this uncertainty, in this paper we assume that the measurement noise is sampled from a mixture of two Gaussian distributions with different variances, representing good and poor measurements, respectively. Our solution to this problem builds on the classical Expectation-Maximization (EM) approach~\cite{AD-NL-DR:77,TKM:96}, where the likelihood is maximized by alternating operations of expectation and maximization.

We are particularly interested in finding \lessnew{efficient} \textit{distributed} algorithms to solve this problem. More precisely, we say that an algorithm is distributed if it requires each node to use information that is directly available at the node itself or from its immediate neighbors. Actually, many distributed algorithms for relative estimation are available\lessnew{~\cite{AG-PRK:06a,PB-JPH:08,SB-SDF-LS-DV:10,WSR-PF-FF:17,AC-MT-RC-LS:13,NMF-AZ:12,CR-PF-RT-HI:13c,PF-HI-CR-RT:15,7810670}, but they assume that the quality of the measurements is known beforehand.}
At the same time, there is a large literature on robust estimation that also covers estimation from relative measurements, but often provides algorithms that are not distributed; see for instance~\cite{LC-AC-FD:14} and references therein. \lessnew{Perhaps the only work on robust distributed relative estimation is the recent~\cite{BoToCaSc2016}: their approach is very different from ours as it is based on $\ell_1$ optimization. 
By proposing our distributed EM algorithm,} we contribute to the growing body of research on distributed algorithms for network-related estimation problems with heterogeneous and unknown measurements~\cite{AC-FF-LS-SZ:10,FF-SF-CR:14,FFR2, BoToCaSc2016,3804a93088e144e992c75dec01319d0f}, where distributed algorithms based on consensus and ranking procedures have been proposed to approximate Maximum-Likelihood (ML) estimates.

Other authors have used EM to estimate Gaussian mixtures' parameters in other problems of distributed inference in sensor networks~\cite{1212679,1307319,citeulike:13511870}. In these works, a network is given and each node independently performs the E-step from local observations and this information is suitably propagated to collaboratively perform the M-step.  
EM is also a key instrument to design reliable learning systems based on unreliable information reported by users in the context of social sensing~\cite{DW-LK-TA:14,DW-TA-LK:15}.

\subsubsection*{Our contribution}
In this paper, we define the problem of robust estimation from relative measurements when measurement noise is drawn from a Gaussian mixture and we design two iterative algorithms that solve it. Both algorithms are based on combining classical Weighted Least Squares (WLS) with Expectation-Maximization (EM), which is a popular tool in statistical estimation problems involving incomplete data~\cite{Bishop:2006:PRM:1162264,MRG-YC:11}. The first algorithm is centralized, whereas the second algorithm is distributed but requires to know (approximately) the two variances of the Gaussian mixture. This knowledge is not necessary for the centralized version. Both algorithms are proved to converge and their performance is compared on synthetic data. We observe that the centralized algorithm has better performance, achieving smaller estimation errors. The centralized algorithm also requires less iterations to converge, but each iteration involves more computations.

\subsubsection*{Organization of the paper}
We formally present the problem of relative estimation in Section~\ref{sect:localization-problem}, \lessnew{where we also review some state-of-the-art algorithms.}  
The centralised LS-EM algorithm is described in Section~\ref{sect:algo} and the Distributed LS-EM algorithm in Section~\ref{sect:algo-distrib}. Section~\ref{sect:sims} contains some numerical examples and Section~\ref{sect:outro} our conclusions. The details of our proofs are postponed to the Appendix.

\subsubsection*{Notation}
\lessnew{Throughout this paper, we use the following notational conventions. Real
and nonnegative integer numbers are denoted by $\R$ and $\integernonnegative$,
respectively. 
Open intervals are denoted by parentheses, while closed intervals are denoted by square brackets. 
Given a finite set $\mathcal{V}$, $\R^{\mathcal{V}}$ denotes the Eucliean space of real vectors with components labelled  by elements of $\mathcal{V}$.
We denote column vectors with small letters, and matrices with capital letters. Given $x\in\R^\mathcal{V}$, we denote its $v$-th element by $x_v$ or $(x)_v$. Given $x\in\R^{\mathcal{V}}$ and $A\in\R^{\mathcal{V}\times \mathcal{V}}$, we denote the $\ell_p$ norm of vector $x$ with the symbol $\|x\|_p$ (the $ \ell_{2} $ norm is taken when subscript $ p $ is omitted), and with $\|A\|$ the spectral norm of matrix $A$.
The support set of a vector $x$ is defined by $\mathrm{supp}(x)=\{i\in\mathcal{V}:x_i\neq0\}$ and we define $\Sigma_k=\{x\in\R^\mathcal{V}:\|x\|_0\leq k\}$ with $\|x\|_0=|\mathrm{supp}(x)|$ denoting the $\ell_0$-pseudonorm. Given $\mathcal{E}$ with finite cardinality $|\mathcal{E}|$, we define $\map{\mathcal{P}_{\ell}}{[0,1]^\mathcal{E}}{\Sigma_{|\mathcal{E}|-\ell}}$ as the projection that zeroes the $\ell$ smallest components of the given vector. It should be noticed that in general the projection of a vector could be not unique: we assume that $\mathcal{P}_{\ell}(x)$ consistently selects one of the possible projections by a tie-breaking rule. 
Given a matrix $M$, $M^\top$ denotes its transpose. Given a vector $x$, we denote with $\mathrm{diag}(x)$ the diagonal matrix whose diagonal entries are the elements of $x$.

An (undirected) graph is a pair $\mathcal{G}=(\mathcal{V, E})$ where $\mathcal{V}$ is a set, called the set of vertices, and $\mathcal E\subseteq \{\{v,w\}: v, w \in \mathcal{V}\}$ is the set of edges. 
Graph $\mathcal{G}$ is connected if, for all $i,j\in \mathcal V$, there exist vertices $i_1,\dots i_s$ such that $\{i,i_1\}, \{i_1,i_2\},\dots ,\{i_s,j\}\in \mathcal E$. We let $A\in\{0,\pm1\}^{\mathcal{E}\times\mathcal{V}}$ be the edge incidence matrix of the graph $\mathcal{G}$, defined as follows. The rows and the columns of $A$ are indexed by elements of  $\E$ and $\V$, respectively. We assume to have an order on set $\mathcal{V}$, such as it would be for $\mathcal{V}=\until{n}$. By this order, the orientation of the edges is conventionally assumed to be such that, if $u<v$, then edge $\{v,u\}$ originates in $u$ and terminates in $v$.
The $(e, w)$-entry of $A$ is 0 if vertex $w$ and edge $e$ are not incident, and otherwise it is $1$ or $-1$ according as $e$ originates or terminates at $w$:
$$
A_{ew}=\begin{cases}
+1&\text{if } e= (v,w)\\
-1&\text{if } e=(w,v)\\
0&\text{otherwise.}\\
\end{cases}
$$
}
\section{Robust estimation from relative measurements}\label{sect:localization-problem}

A set of nodes $\V=\{1,\ldots,N\}$ of cardinality $N$ is considered, each of them endowed with an unknown scalar quantity $\widetilde x_v\in\R$ with $v\in\mathcal{V}$. Starting from a set of noisy measurements, the nodes' goal is to estimate their own absolute position.
More precisely, each node $u\in \mathcal{V}$ is interested in estimating the scalar value $\widetilde x_u$, based on noisy measurements of differences $\widetilde x_u-\widetilde x_v$ with $v$ and $u$ in $\V$. 
The set of available measurements can be conveniently represented by 
graph $\mathcal{G}=(\mathcal{V},\mathcal{E})$, where each edge represents a measurement: $A\in\{0,\pm1\}^{\mathcal{E}\times\mathcal{V}}$ is the edge incidence matrix of graph $\mathcal{G}$.
We let $b\in \mathbb{R}^{\E} $ be the vector collecting the measurements
$$
b=A \widetilde x+\eta,
$$
where $\eta_e,e\in\E$ are mutually independent random variables distributed according to a Gaussian distribution $\mathcal{N}(0,\sigma^2_e)$, having
$$\sigma_{e}=(1-{z}_e)\alpha+ {z}_e\beta,$$ with $0<\alpha<\beta$, with ${z}_e$ distributed as a Bernoulli distribution ${z}_e\sim\mathcal{B}(p)$ and $p\in(0,1/2)$. 
Provided $\alpha \ll \beta$, the value ${z}_e=1$ is associated to a measurement that is unreliable. 
With this formulation the random variables $\{\eta_e\}_{e\in\mathcal{E}}$ are Gaussian mixtures, whose model is completely described by three parameters: $p, \alpha$ and $\beta$.
\lessnew{For convenience, from now on we consider $p$ fixed and known. 
This choice is done for simplicity and does not entail a significant restriction to our analysis: on the one hand, the algorithms we propose are fairly robust to small errors in the estimate of $p$; on the other hand, our framework can be easily extended to include the estimation of $p$ as an unknown parameter.}

Our main goal is to obtain a robust estimate of the state vector $\widetilde{x}$ by suitably taking into account the different quality of the measurements. We thus consider a joint Maximum Likelihood estimation
{\begin{equation}\label{eq:first-ML}
\widehat{x}^{\text{ML}}=\argmax{x\in\R^{{\V}},\alpha>0,\beta>0}L(x,\alpha,\beta)
\end{equation}
where $L(x,\alpha,\beta):=\log f(b|x,{\alpha,\beta})$ and
\begin{align}\label{eq:cond_distr}\begin{split}
 f(b|x, {\alpha,\beta})=\prod_{e\in\mathcal{E}}&\left[\frac{1-p}{\sqrt{2\uppi\alpha^2}}\exp{\left(-\frac{(b-Ax)_e^2}{2\alpha^2}\right)}\right.\\
&\left.+\frac{p}{\sqrt{2\uppi\beta^2}}\exp{\left(-\frac{(b-Ax)_e^2}{2{\beta}^2}\right)}\right].\end{split}
\end{align}}
The computational complexity of optimization problem~\eqref{eq:first-ML} makes  a brute force approach infeasible for large graphs.

\subsection{Estimation via Weighted Least Squares}\label{sect:wls}
Problem~\eqref{eq:first-ML}  becomes much simpler if we assume to know the distribution that has produced the noise term for each measurement. Using the noise source information $\alpha$, $\beta$, and $\widetilde{z}_{e}$ for all $e\in \mathcal{E}$, where $\widetilde{z}_{e}$ is the realization of $z_{e}$,  the ML-estimation becomes
\begin{equation}\label{eq:ML0}
X_{\mathrm{ML}}=\argmax{x\in\R^{{\V}}}\log f(b|x,\widetilde{z},\alpha,\beta)
\end{equation}
{\lessnew{where $X_{\mathrm{ML}}$ is the set of maximizing values of the log-likelihood}}
\begin{align*}
 f(b|x,\widetilde{z},\alpha,\beta)
&=\prod_{e\in\mathcal{E}}\left[\frac{1-\widetilde{z_e}}{\sqrt{2\uppi\alpha^2}}\exp{\left(-\frac{(b-Ax)_e^2}{2\alpha^2}\right)}\right.\\
&\left.\qquad+\frac{\widetilde{z_e}}{\sqrt{2\uppi\beta^2}}\exp{\left(-\frac{(b-Ax)_e^2}{2{\beta}^2}\right)}\right].
\end{align*}
Noticing that $\log(\prod_{e}x_e)=\sum_e\log(x_e)$ and that $\widetilde{z} $ is a binary vector, it is easy to see that ML is equivalent to solving the Weighted Least Square (WLS) problem 
\begin{equation}\label{eq:WLS}
\argmin{x\in\R^\V}\frac{1}{2}\|b-Ax\|^2_{W}=\argmin{x\in\R^\V}\frac{1}{2}(b-Ax)^{\top}W(b-Ax)
\end{equation}
with ${W=\mathrm{diag}(\left(1-\widetilde{z}_e\right)\alpha^{-2}+\widetilde{z}_e{\beta^{-2}})}.$

The following lemma describes the solutions of~\eqref{eq:WLS}.

\begin{lemma}[WLS estimator]\label{lemma:centralized-WLS} 
Let the graph $\mathcal{G}$ be connected and $X$ be the set of solutions of~\eqref{eq:WLS}, and let $L_W:=A^\top WA$ denote the weighted Laplacian of the graph.
The following facts hold:
\begin{enumerate}
\item $x \in X_{\mathrm{ML}}$ if and only if $A^\top WAx =A^\top W b$;
\item there exists a unique $\widehat{x}^{\text{wls}}\in X_{\mathrm{ML}}$ such that $\|\widehat{x}^{\text{wls}}\|_2=\min_{x\in X_{\mathrm{ML}}}\|x\|_2$;
\item \begin{equation}\label{eq:wls}\widehat{x}^{\text{wls}}=L^{\dag}_W A^\top {W} b,\end{equation}
where $L^{\dag}_W $ denotes the Moore-Penrose pseudo-inverse of the weighted Laplacian 
$L_W$.
\end{enumerate}
\end{lemma}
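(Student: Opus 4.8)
The plan is to establish the three facts in sequence, treating the first as the analytic heart of the matter and the remaining two as consequences that follow from standard properties of the weighted Laplacian $L_W=A^\top W A$. First I would observe that since $W=\mathrm{diag}((1-\widetilde z_e)\alpha^{-2}+\widetilde z_e\beta^{-2})$ has strictly positive diagonal entries (both $\alpha^{-2}$ and $\beta^{-2}$ are positive), the objective $g(x)=\tfrac12(b-Ax)^\top W(b-Ax)$ is a convex quadratic in $x$. To prove fact (1), I would set the gradient to zero: computing $\nabla g(x)=-A^\top W(b-Ax)=A^\top W A x - A^\top W b$, the first-order stationarity condition is exactly $A^\top W A x = A^\top W b$, the normal equations. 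Because $g$ is convex, stationarity is both necessary and sufficient for global minimality, so $x\in X_{\mathrm{ML}}$ if and only if $L_W x = A^\top W b$.

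Next I would record the structural facts about $L_W$ needed for (2) and (3). Since $W$ is symmetric positive definite, $L_W=A^\top W A$ is symmetric positive semidefinite, so its null space coincides with the null space of $W^{1/2}A$, which (because $W^{1/2}$ is invertible) equals $\ker A$. For a connected graph, $\ker A=\mathrm{span}(\1)$, the constant vectors, which is one-dimensional; hence $L_W$ has rank $|\mathcal V|-1$. This is the key reduction that makes everything go through, and I would justify $\ker A=\mathrm{span}(\1)$ by noting that $Ax=0$ forces $x_u=x_v$ across every edge, so $x$ is constant on each connected component, i.e.\ constant on all of $\mathcal V$ by connectivity. I would also verify the consistency of the normal equations, namely that $A^\top W b$ lies in the range of $L_W$; this is automatic because $\mathrm{range}(A^\top W A)=\mathrm{range}(A^\top W^{1/2})\supseteq$ the relevant vector, or more cleanly because for a symmetric matrix $\mathrm{range}(L_W)=(\ker L_W)^\perp$ and $\1^\top A^\top W b=(A\1)^\top W b=0$.

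For fact (2), the solution set $X_{\mathrm{ML}}$ is an affine subspace: if $x_0$ is any particular solution, then $X_{\mathrm{ML}}=x_0+\ker L_W=x_0+\mathrm{span}(\1)$. Minimizing $\|x\|_2$ over a (nonempty, closed, convex) affine subspace yields a unique minimizer, namely the orthogonal projection of the origin onto that subspace; uniqueness follows from strict convexity of the Euclidean norm restricted to the affine set. This gives existence and uniqueness of $\widehat x^{\text{wls}}$. Finally, for fact (3), I would invoke the defining property of the Moore--Penrose pseudo-inverse: for a consistent system $L_W x=c$ with $c\in\mathrm{range}(L_W)$, the vector $L_W^\dagger c$ is precisely the minimum-norm solution, since $L_W^\dagger c\in\mathrm{range}(L_W^\dagger)=\mathrm{range}(L_W^\top)=(\ker L_W)^\perp$ and $L_W L_W^\dagger c=c$ by consistency. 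Applying this with $c=A^\top W b$ identifies $\widehat x^{\text{wls}}=L_W^\dagger A^\top W b$.

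The step I expect to be the main obstacle, or at least the one requiring the most care, is the consistency check $A^\top W b\in\mathrm{range}(L_W)$ that underlies both (2) and (3): without it the normal equations could be unsolvable and the pseudo-inverse formula would produce a least-squares residual rather than an exact solution. Everything else is a routine assembly of convex-optimization first-order conditions and standard pseudo-inverse identities.
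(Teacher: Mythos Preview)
Your proof is correct and complete. The paper, however, does not actually prove this lemma: it is stated as a standard fact about weighted least squares on graphs and left without proof, with the surrounding text merely recalling that $\1^\top L_W=0$ and $\1^\top L_W^\dagger=0$ and then moving on to Proposition~\ref{moments}. Your argument---normal equations from convexity, $\ker L_W=\ker A=\mathrm{span}(\1)$ via connectivity, consistency from $\1^\top A^\top W b=(A\1)^\top W b=0$, and the minimum-norm characterization of the Moore--Penrose pseudo-inverse---is exactly the standard derivation the authors are implicitly relying on, so there is nothing to compare.
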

We recall that $\1^{\top}L_W=0$ and $\1^{\top}L^{\dag}_W=0$.
Further useful properties are collected in the following result~\cite[Sect.~5.4]{FF-PF:17}.
\begin{proposition}[Moments of WLS estimator]\label{moments}
Provided $\mathcal G$ is connected, it holds that
\begin{align*}
\mathbb{E}[\widehat{x}^{\text{wls}}]=&\big(I-\frac{1}{N}\ind \ind^{\top}\big)\widetilde{x}\\
\mathbb{E}[(\widehat{x}^{\text{wls}}-\mathbb{E}[\widehat{x}^{\text{wls}}])&(\widehat{x}^{\text{wls}}-\mathbb{E}[\widehat{x}^{\text{wls}}])^{\top}]=L_W^{\dag},
\end{align*}
where $\ind$ is a vector of length $N$ whose entries are all 1.
\end{proposition}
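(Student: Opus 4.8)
The plan is to start from the closed-form solution provided by Lemma~\ref{lemma:centralized-WLS}, namely $\widehat{x}^{\text{wls}}=L_W^{\dag}A^\top W b$, and substitute the measurement model $b=A\widetilde{x}+\eta$. First I would record the single observation that makes everything work: because each $\widetilde z_e\in\{0,1\}$, the weight $w_e=(1-\widetilde z_e)\alpha^{-2}+\widetilde z_e\beta^{-2}$ equals exactly $1/\sigma_e^2$, so that $W$ is deterministic once the classification $\widetilde z$ is fixed and coincides with the inverse noise covariance, $\mathrm{Cov}(\eta)=W^{-1}$. Since $\mathbb{E}[\eta]=0$, all the expectations below are understood conditionally on $\widetilde z$.

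For the mean, I would write $\widehat{x}^{\text{wls}}=L_W^{\dag}A^\top W A\widetilde{x}+L_W^{\dag}A^\top W\eta=L_W^{\dag}L_W\widetilde{x}+L_W^{\dag}A^\top W\eta$ and take expectations, killing the noise term. It then remains to identify $L_W^{\dag}L_W$. Since $L_W$ is symmetric, $L_W^{\dag}L_W$ is the orthogonal projector onto $\mathrm{range}(L_W)=(\ker L_W)^\perp$. Here connectedness enters: because $W\succ0$, one has $x^\top L_W x=(Ax)^\top W(Ax)$, which vanishes iff $Ax=0$, i.e. iff $x$ is constant on the connected graph; hence $\ker L_W=\mathrm{span}(\1)$ and the projector is $I-\frac1N\1\1^\top$, giving the claimed mean.

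For the covariance, I would use the centered variable $\widehat{x}^{\text{wls}}-\mathbb{E}[\widehat{x}^{\text{wls}}]=L_W^{\dag}A^\top W\eta$ and compute $\mathbb{E}[(L_W^{\dag}A^\top W\eta)(L_W^{\dag}A^\top W\eta)^\top]=L_W^{\dag}A^\top W\,\mathrm{Cov}(\eta)\,W A(L_W^{\dag})^\top$. Substituting $\mathrm{Cov}(\eta)=W^{-1}$ telescopes $W W^{-1} W=W$, leaving $L_W^{\dag}(A^\top W A)(L_W^{\dag})^\top=L_W^{\dag}L_W L_W^{\dag}$; using symmetry of the pseudo-inverse, $(L_W^{\dag})^\top=L_W^{\dag}$, together with the Moore--Penrose identity $L_W^{\dag}L_W L_W^{\dag}=L_W^{\dag}$ collapses this to $L_W^{\dag}$, as claimed.

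The computation is essentially mechanical; the only genuine step, and the one I would treat most carefully, is identifying $L_W^{\dag}L_W$ with $I-\frac1N\1\1^\top$, since this is exactly where connectedness and positivity of $W$ are needed (without them $\ker L_W$ could be strictly larger than $\mathrm{span}(\1)$). Everything else reduces to the bookkeeping $W=\mathrm{Cov}(\eta)^{-1}$ and the standard facts $(L_W^{\dag})^\top=L_W^{\dag}$ and $L_W^{\dag}L_W L_W^{\dag}=L_W^{\dag}$, which already underlie the recalled identities $\1^\top L_W=0$ and $\1^\top L_W^{\dag}=0$.
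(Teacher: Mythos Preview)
Your argument is correct and is exactly the standard derivation; the paper itself does not prove this proposition but simply cites \cite[Sect.~5.4]{FF-PF:17}. The only subtlety---which you identify and handle properly---is that the expectations must be read conditionally on the realization $\widetilde z$, so that $W$ is deterministic and coincides with $\mathrm{Cov}(\eta\mid\widetilde z)^{-1}$; once this is noted, the identification $L_W^{\dag}L_W=I-\tfrac1N\1\1^\top$ via connectedness and the Moore--Penrose identity $L_W^{\dag}L_WL_W^{\dag}=L_W^{\dag}$ give both claims immediately.
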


\smallskip
It should be stressed that determining the state vector $\widetilde{x}$ from relative measurements is only possible up to an additive
constant, being $A\ind=0$, and $X_{\mathrm{ML}}=\widehat{x}^{\text{wls}}+\text{span}(\ind)$. 
This ambiguity can be avoided by assuming the centroid of the nodes as the origin of the Cartesian coordinate system.
In view of this comment and of the results above, we shall assume from now on that $\G$ is {\em connected} and $\ind^\top\widetilde{x}=0.$

As shown in Lemma~\ref{lemma:centralized-WLS}, the WLS solution is explicitly known and
can be easily computed solving a linear system. Furthermore, the following distributed
computation is also possible, by using a gradient descent algorithm. Observe that the gradient of the cost function in~\eqref{eq:WLS} is given by $L_Wx -A^{\top}Wb$. Set an initial condition $x^{(0)} = 0$ and fix $\tau>0$ and consider
\begin{equation}\label{eq:gradient_descent}
x^{(t+1)}=(I-\tau L_W)x^{(t)} +\tau A^{\top}Wb.
\end{equation}
Provided $\tau<2\|L_W\|_2^{-1}$, the gradient descent algorithm~\eqref{eq:gradient_descent} converges to the WLS solution~\cite{WSR-PF-FF:12,WSR-PF-FF:17}.

\subsection{Relations with literature and numerical example}
The WLS estimation and the subsequent developments in this paper share some ideas with several approaches in literature. We recall two methods based on optimization that focus on {\em Sparse outliers detection} \cite{LC-AC-FD:14} and {\em Least absolute estimation} \cite{Blum:12}. Then we will summarize the main advantages of WLS in the considered setting in contrast with these methods.

The problem of finding the smallest set that contains the outliers is considered in \cite{LC-AC-FD:14}. Using the same rationale of the {\em big $M$ trick approach}, introduced in \cite[Section III.C ]{LC-AC-FD:14}, and recalling that $|y_e-(A\widetilde{x})_e|\leq 3\sigma_e$ with a probability close to 1 (about 0.997), a reasonable adaptation of \cite{LC-AC-FD:14} can be formalized as an optimization problem in the $\ell_0$-pseudonorm:
\begin{align}\begin{split}\label{eq:l0}
&\min_{x \in\R^\V, \: z\in\{0,1\}^{\mathcal{E}}:\ \mathds{1}^{\top}x=0}\|z\|_0\\
&\qquad\text{s.t. }|y_e-(Ax)_e|\leq 3\alpha+3z_e(\beta-\alpha)\quad \forall e\in\mathcal{E}.\end{split}
\end{align}
The decision variables $z_e\in\mathcal{E}$ play the role of indicator variables for each measurement $e\in\mathcal{E}$. The test to label the measurements is based on a confidence interval: if $z_e$ is 0 then the measurement is trusted, if $z_e$ is 1 then the measurement is not trusted. This problem is combinatorial and becomes intractable for large scale problems.

For this reason, a standard approach is resorting to least absolute estimation \cite{Blum:12}, also known as $\ell_1$-minimization.
Problem \eqref{eq:l0} is relaxed by replacing the $\ell_0$-pseudonorm with the $\ell_1$-norm, which is expected to promote sparsity \cite{candes2007}:
\begin{align}\begin{split}\label{eq:l1_epi}
&\min_{x \in\R^\V, \: z\in\R^{\mathcal{E}}:\ \mathds{1}^{\top}x=0}\|z\|_1\\
&\qquad\text{s.t. }|y_e-(Ax)_e|\leq 3\alpha+3z_e(\beta-\alpha)\quad \forall e\in\mathcal{E}\end{split}
\end{align}
or
\begin{equation}\label{eq:l1}
\min_{x\in\R^\V:\ \mathds{1}^{\top}x=0}\|y-Ax\|_1
\end{equation}The problem in \eqref{eq:l1} has also a probabilistic characterization and can be interpreted as ML estimation assuming that the noise is distributed according to a Laplace distribution. 
The $\ell_1$-norm is less sensitive to outliers \cite{Blum:12} and performs better than LS-estimator in presence of different types of corrupted measurements. It should be noticed that the problem in \eqref{eq:l1} is not smooth but is still convex, indeed it is a linear program (LP) and can be solved efficiently by iterative algorithms, e.g. using subgradient methods \cite{Boyd:2004:CO:993483} or iterative reweighted least squares (IRLS, \cite{citeulike:9578552}) 
that admit a distributed
implementation. Observe that the subgradient of the cost function in~\eqref{eq:l1} is given by $A^\top\mathrm{sgn}(y-Ax)$. Set an initial condition $x^{(0)} = 0$ and fix $\tau>0$ and consider
\begin{equation}\label{eq:dlae}
x^{(t+1)}=x^{(t)} +\tau A^\top\mathrm{sgn}(y-Ax).
\end{equation}
Despite these interesting features, LAE has some drawbacks. First, there are no guarantees that the solution of \eqref{eq:l1} has the minimum cardinality property. Moreover, there are no theoretical conditions under which the problem in \eqref{eq:l0} is equivalent to \eqref{eq:l1}. Extensive numerical results show that $\ell_1$-norm encourage sparsity but in general the solution of \eqref{eq:l0} and \eqref{eq:l1} do not coincide \cite{LC-AC-FD:14}.
%
%
%
Using the noise source information $\alpha, \beta$, and $\widetilde{z}_e$ for all $e\in\mathcal{E}$, problems \eqref{eq:l0} and \eqref{eq:l1_epi} reduce to a linear feasibility program. In this sense, if $\widetilde{z}_e$ is 0, then the measurement is selected, if $z_e$ is 1 then the measurement is detected as outlier and not taken into account in the search of $x\in\R^{\V}$ satisfying the constraints. WLS instead uses all the measurements in the estimation by mitigating the effect of outliers: its covariance is given in Proposition~
\ref{moments}.
Furthermore, finding the optimal estimate using WLS approach is equivalent to solving a network of resistors~\cite{PB-JPH:09}. This intuitive electrical interpretation highlights the role of the topology of the measurement graph and allows distinguishing between topologies that lead to small or large estimation errors \cite{WSR-PF-FF:13,WSR-PF-FF:17}. In particular, using Proposition~\ref{moments} and the electrical interpretation, one can relate the measurement graph $\G$ to the error in the estimation. 
Such analysis of performance is not available for $\ell_0$ or $\ell_1$-minimization.
\medskip

Finally, we provide a numerical example for illustration.
\begin{es}\label{es1}
Consider the connected network in Figure~\ref{fig:es1} with $N=5$ nodes and set of edges $\E=\{(2,1),(5,1),(3,2),(5,2),(4,3),(5,4)\}$. Let $\widetilde x= [0.737,
    0.088,
    0.410,
    0.125,
   -1.362]^{\mathsf{T}}$, $\widetilde{z}=[0,0,0,0,1,1]^{\mathsf{T}}$, $\alpha=0.1$, and $\beta=1$. Then, the incidence matrix $A$  and the vector of measurements can be easily constructed as
\begin{gather*}
A=\left[
\begin{array}{ccccc}
1&-1&0&0&0\\
1&0&0&0&-1\\
0&1&-1&0&0\\
0&1&0&0&-1\\
0&0&1&-1&0\\
0&0&0&1&-1
\end{array}
\right]
\end{gather*}
and $b=[      0.658,
    2.105,
   -0.322,
    1.450,
   -0.094,
    1.190
]^{\mathsf{T}}.$
The resulting estimates are $\widehat{x}^{\text{wls}}=[   0.737,
    0.078,
    0.397,
    0.156,
   -1.368]^{\mathsf{T}}$ by weighted least squares, $\widehat{x}^{\text{ls}}=[ 0.803,
    0.084,
    0.222,
    0.132,
   -1.242
]^{\mathsf{T}}$  by unweighted least squares and 
 $\widehat{x}^{\text{lae}}=[0.803,
    0.144,
    0.242,
    0.112,
   -1.302]^{\mathsf{T}}$ by $\ell_1$-minimization.
We obtain that $\|\widehat{x}^{\text{wls}}-\widetilde{x}\|^2/\|\widetilde{x}\|^2=   4.89\cdot10^{-4}
$ and $\|\widehat{x}^{\text{ls}}-\widetilde{x}\|^2/\|\widetilde{x}\|^2=     
    2.09\cdot10^{-2}$ and $\|\widehat{x}^{\text{lae}}-\widetilde{x}\|^2/\|\widetilde{x}\|^2=     
    1.52\cdot10^{-2}$.
\hfill\QED
\end{es}
%
%
%

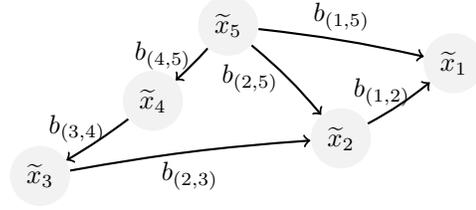
\begin{figure}\begin{center}
\begin{tikzpicture}
 [scale=0.5]
  \node [ circle,fill=gray!10 ] (n4) at (8,7)  {$\widetilde x_4$};
  \node  [circle,fill=gray!10 ] (n5) at (10,9)  {$\widetilde x_5$};
  \node  [circle,fill=gray!10 ](n1) at (16,8) {$\widetilde x_1$};
  \node [circle,fill=gray!10](n2) at (13,6)  {$\widetilde x_2$};
  \node [circle,fill=gray!10 ](n3) at (5,5)  {$\widetilde x_3$};
\draw (n4) edge[thick,<-,bend right=3] node[left, pos=.8]{$b_{(4,5)}$} (n5);
\draw (n1) edge[thick,<-,bend right=3] node[above]{$b_{(1,5)}$} (n5);
\draw (n1) edge[thick,<-,bend right=-5] node[above, pos=0.8]{$b_{(1,2)}$} (n2);
\draw (n2) edge[thick,<-,bend right=5] node[left]{$b_{(2,5)}$} (n5);
\draw (n2) edge[thick,<-,bend right=3] node[below]{$b_{(2,3)}$} (n3);
\draw (n3) edge[thick,<-,bend right=3] node[left,pos=0.8]{$b_{(3,4)}$} (n4);
\end{tikzpicture}
\caption{The network of 5 nodes considered in Example~\ref{es1}. }
\label{fig:es1}
\end{center}
\end{figure}

\section{Centralized algorithm}\label{sect:algo}
In this section, we tackle the likelihood maximization problem~\eqref{eq:first-ML} in its full generality. Since \eqref{eq:first-ML} does not admit a closed form solution, we propose an iterative algorithm that provides a solution in an iterative fashion.
Preliminarily to designing our algorithm, we convert the Maximum Likelihood problem into a minimization problem by the following result, whose proof is postponed to Appendix~\ref{sect:likelihood}.
\begin{teo}\label{prop:ML-V}The following optimization problems have the same solutions
\begin{gather}
\label{min0}\max_{\alpha, \beta}\max_{x}L(x,\alpha, \beta)\\
\label{min1}
-\min_{\alpha, \beta}\min_{x}\min_{\pi\in [0,1]^\E}V(x,\pi,\alpha,\beta)
\end{gather}
where 
\begin{align}\label{eq:loglikelihood}
& V(x,\pi,\alpha,\beta) \\
\nonumber & \quad  =  \frac{1}{2}\sum_{e\in\E}(b-Ax)^2_e\left(\frac{1-\pi_e}{\alpha^2}+\frac{\pi_e}{\beta^2}\right) \\
\nonumber & \quad \quad \: {+}\sum_{e\in\E}\left[{-}\pi_e\log\frac{p}{\beta}-(1-\pi_e)\log\frac{1-p}{\alpha}  -H(\pi_e)\right]
\end{align}
and $\map{H}{[0,1]}{\R}$ is the natural entropy function
$
H(\xi)=-\xi\log\xi-(1-\xi)\log(1-\xi).
$
\end{teo}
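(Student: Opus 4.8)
The plan is to recognize $V$ as (the negative of) the standard variational, or evidence-lower-bound, functional attached to the two-component Gaussian mixture, so that minimizing $V$ over the auxiliary ``responsibilities'' $\pi$ reproduces $-L$ up to an additive constant that is independent of the optimization variables. The engine is the elementary variational identity that, for any $a_1,a_2>0$,
\[
\log(a_1+a_2)=\max_{\pi\in[0,1]}\left[\pi\log\frac{a_1}{\pi}+(1-\pi)\log\frac{a_2}{1-\pi}\right],
\]
with the convention $0\log 0=0$. This follows from Jensen's inequality applied to the concave map $\log$: writing $a_1+a_2=\pi\,(a_1/\pi)+(1-\pi)\,(a_2/(1-\pi))$ yields the bound $\ge$, and equality is attained at the interior maximizer $\pi^\star=a_1/(a_1+a_2)\in(0,1)$, since both $a_i$ are strictly positive (the Gaussian densities never vanish and $p\in(0,1)$). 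Rewriting the bracket as $\pi\log a_1+(1-\pi)\log a_2+H(\pi)$ already exposes the entropy term.

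Next I would apply this identity factor by factor to $L=\sum_{e\in\E}\log(a_1^{(e)}+a_2^{(e)})$, with $a_1^{(e)}$ the $\beta$-component and $a_2^{(e)}$ the $\alpha$-component evaluated at the residual $(b-Ax)_e$. Expanding $\log a_1^{(e)}=\log(p/\beta)-\tfrac12\log(2\uppi)-(b-Ax)_e^2/(2\beta^2)$ and the analogous $\log a_2^{(e)}$, and summing over $e$, the quadratic residuals collect into $\tfrac12(b-Ax)_e^2\bigl((1-\pi_e)/\alpha^2+\pi_e/\beta^2\bigr)$, the mixing/scale terms into $\pi_e\log(p/\beta)+(1-\pi_e)\log((1-p)/\alpha)$, and the entropies into $H(\pi_e)$, while the normalizations aggregate into a single constant. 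The outcome is exactly
\[
L(x,\alpha,\beta)=\max_{\pi\in[0,1]^{\E}}\bigl[-V(x,\pi,\alpha,\beta)\bigr]-\frac{|\E|}{2}\log(2\uppi).
\]
Taking $\max$ over $(x,\alpha,\beta)$ and using that the joint minimum of $V$ may be computed in any order of its variables then gives $\max_{\alpha,\beta}\max_x L=-\min_{\alpha,\beta}\min_x\min_\pi V-\tfrac{|\E|}{2}\log(2\uppi)$; because the additive constant does not affect the optimizers, the maximizing triples $(x,\alpha,\beta)$ of \eqref{min0} coincide with the $(x,\alpha,\beta)$-components of the minimizers of \eqref{min1}.

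I expect the only substantive work to be the bookkeeping in the second step: tracking the convention by which $\pi_e$ weights the $\beta$-component (so that the residual weight in $V$ reads $(1-\pi_e)/\alpha^2+\pi_e/\beta^2$ and not the reverse), and verifying that the spurious $-\tfrac12\log(2\uppi)$ per edge aggregates into one variable-independent constant. The single genuinely analytic point is that the maximum in the variational identity is attained inside $[0,1]$ with equality; strict positivity of the densities guarantees the maximizer is interior, and the $0\log0$ convention covers the boundary, so no difficulty arises there.
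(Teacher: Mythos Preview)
Your proposal is correct and follows essentially the same approach as the paper: both use Jensen's inequality on the concave $\log$ to derive the ELBO-type lower bound, identify the optimizer $\pi_e^\star=f(z_e=1\mid b_e,x,\alpha,\beta)$ at which equality holds, expand to obtain $L(x,\alpha,\beta)=-\min_{\pi}V(x,\pi,\alpha,\beta)-\tfrac{|\E|}{2}\log(2\uppi)$, and then note that the additive constant does not affect the optimizing $(x,\alpha,\beta)$. The only stylistic difference is that you isolate the scalar identity $\log(a_1+a_2)=\max_{\pi}[\pi\log(a_1/\pi)+(1-\pi)\log(a_2/(1-\pi))]$ upfront, whereas the paper unfolds the same computation edge by edge and recovers the optimizer by differentiating $V$ in $\pi_e$.
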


\smallskip
Note that, with respect to the original problem~\eqref{min0}, problem~\eqref{min1} explicitly introduces the variable $\pi\in [0,1]^\E$ which represents the estimated probabilities that the edges have large variances.
\lessnew{Actually, instead of solving problem~\eqref{min1}, we will solve a suitably modified problem, which we are going to define next. This modification marks a key difference with classical EM approaches. Namely, we shall solve}
\begin{equation}\label{min2}
\min_{\alpha,\beta}\min_{x}\min_{\pi\in \Sigma_{|\mathcal{E}|-s}} \tilde V(x,\pi,\alpha,\beta,\epsilon),
\end{equation}
where $\map{\tilde V}{\R^\V\times[0,1]^\E\times\realpositive\times\realpositive\times\realnonnegative}{\R}$ is 
\begin{align}\label{Lyapunov}
\tilde V(x,\pi,&\alpha,\beta,\epsilon)\\
\nonumber=&\frac{1}{2}\sum_{e\in\E}\left((b-Ax)^2_e+\frac{\epsilon}{|\E|}\right)\left(\frac{1-\pi_e}{\alpha^2}+\frac{\pi_e}{\beta^2}\right)\\
\nonumber&{+}\sum_{e\in\E}\left[{-}\pi_e\log\frac{p}{\beta}-(1-\pi_e)\log\frac{1-p}{\alpha}-H(\pi_e)\right].
\end{align}
\lessnew{Compared to~\eqref{min1}, the optimization problem~\eqref{min2} introduces 
\begin{itemize}
\item the positive variable $\epsilon$, which has the goal to avoid possible singularities when one of the
Gaussian components of the mixture collapses to one point;
\item the constraint set $\Sigma_{|\mathcal{E}|-s}$, which implies that at least $s\geq 1$ measurements are classified as reliable. 
\end{itemize}
As will become clear in the proofs, these modifications are instrumental to guarantee the convergence of the algorithms that we design. 
By defining function $\tilde V$, we do not intend to pose any additional assumption in our original problem statement~\eqref{eq:first-ML}. 
However, problems~\eqref{min1} and~\eqref{min2} are not equivalent: instead, Problem~\eqref{min2} should be seen as a treatable approximation of~\eqref{min1}. The mismatch between the two problems is meant to be small, since $\eps$ is bound to be small and it suffices to choose $s$ as small as $1$.
}

The following lemma summarizes the main properties of $\tilde V$ in minimization problems that only involve one variable at the time. Its proof can be obtained by differentiating $\tilde V$. 
\begin{proposition}[Partial minimizations]\label{prop:pi}
Let us define
\begin{align*}
\widehat{x}&=\widehat{x}(\pi,\alpha,\beta,\epsilon)=\argmin{x\in\R^{\V}}\tilde V(x,\pi,\alpha,\beta,\epsilon)\\
\widehat{\pi}&=\widehat{\pi}(x,\alpha,\beta,\epsilon)=\argmin{\pi\in{\Sigma_{|\mathcal{E}|-s}}}\tilde V(x,\pi,\alpha,\beta,\epsilon)\\
\widehat{\alpha}&=\widehat{\alpha}(x,\pi,\beta,\epsilon)=\argmin{\alpha>0}\tilde V(x,\pi,\alpha,\beta,\epsilon)\\
\widehat{\beta}&=\widehat{\beta}(x,\pi,\alpha,\epsilon)=\argmin{\beta>0}\tilde V(x,\pi,\alpha,\beta,\epsilon)
\end{align*}
and denote $W=\mathrm{diag}\left(\frac{1-\pi}{\alpha^2}+\frac{\pi}{\beta^2}\right)$, 
$L_{W}=A^{\top}WA$, and 
$\xi_e=f(z_e=1|x,\alpha,\beta).$
%
Then, it holds true that
\begin{align*} 
\widehat{x}=L^{\dag}_W A^\top {W} b
\end{align*}
{\begin{gather*}
\widehat{\pi}=\mathcal{P}_{s}(\xi)
\end{gather*}}
$$\widehat{\alpha}=\sqrt{\frac{\sum_{e}(1-\pi_e)|b_e-(Ax)_e|^2+\epsilon}{\|\mathds{1}-\pi\|_1}}
$$
$$\widehat{\beta}=\sqrt{\frac{\sum_{e}\pi_e|b_e-(Ax)_e|^2+\epsilon}{\|\pi\|_1}}
$$
where $\map{\mathcal{P}_s}{[0,1]^\mathcal{E}}{\Sigma_{|\mathcal{E}|-s}}$ is the projection that zeroes the $s$ smallest components of the given vector. 
\end{proposition}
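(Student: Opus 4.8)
The plan is to treat the four claimed identities as four independent partial minimizations of $\tilde V$, exploiting that in each case all but one of the arguments are frozen. Three of them ($x$, $\alpha$, $\beta$) are smooth and unconstrained, so they reduce to setting a derivative to zero and checking that the stationary point is a genuine minimizer; the fourth ($\pi$) additionally carries the sparsity constraint $\pi\in\Sigma_{|\E|-s}$ and is the only part requiring a real argument. For $\widehat x$ I would observe that, with $\pi,\alpha,\beta,\epsilon$ fixed, the regularizing block $\frac{\epsilon}{|\E|}\big(\frac{1-\pi_e}{\alpha^2}+\frac{\pi_e}{\beta^2}\big)$ and the entire entropy/logarithmic sum are constant in $x$, so minimizing $\tilde V$ over $x$ is exactly the weighted least squares problem $\min_x \frac12 (b-Ax)^\top W (b-Ax)$ with $W=\mathrm{diag}\big(\frac{1-\pi}{\alpha^2}+\frac{\pi}{\beta^2}\big)$. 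The identity $\widehat x = L_W^\dagger A^\top W b$ then follows immediately from Lemma~\ref{lemma:centralized-WLS}, identifying $\widehat x$ with the minimum-norm solution.

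For $\widehat\alpha$ and $\widehat\beta$ I would isolate the dependence on a single scale parameter. Fixing the others, the $\alpha$-dependent part of $\tilde V$ is the sum of a term proportional to $\alpha^{-2}$ and a term proportional to $\log\alpha$ (the latter coming from $-(1-\pi_e)\log\frac{1-p}{\alpha}$, whose $\log\alpha$ contribution carries the positive coefficient $\|\ind-\pi\|_1$). Differentiating in $\alpha$ and equating to zero yields $\alpha^2=\big(\sum_e(1-\pi_e)(b-Ax)_e^2+(\epsilon\text{-term})\big)/\|\ind-\pi\|_1$, and the boundary behaviour ($\tilde V\to+\infty$ as $\alpha\to0^+$ and as $\alpha\to\infty$) together with uniqueness of the stationary point identifies it as the global minimizer. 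Here I would stress the role of the constraint set: since $\pi\in\Sigma_{|\E|-s}$ forces at least $s\ge1$ components to vanish, $\|\ind-\pi\|_1\ge s>0$, so the denominator is never zero; this is precisely why the modification~\eqref{min2} makes the problem well posed. The expression for $\widehat\beta$ is obtained symmetrically, requiring $\|\pi\|_1>0$ in its denominator.

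The substantive step is $\widehat\pi$. First I would note that $\tilde V$ is \emph{separable} in $\pi$: up to a constant it equals $\sum_{e}\phi_e(\pi_e)$, where each $\phi_e$ is the sum of an affine function of $\pi_e$ and the convex term $-H(\pi_e)$. Hence each $\phi_e$ is strictly convex on $[0,1]$ with derivative running from $-\infty$ to $+\infty$, so it has a unique interior minimizer; solving the stationarity condition $\log\frac{\pi_e}{1-\pi_e}=-a_e$ recovers exactly the posterior $\xi_e=f(z_e=1|x,\alpha,\beta)$ by Bayes' rule. The $\epsilon$ term only shifts every squared residual by the same amount, so it preserves the ordering of the $\xi_e$ and does not affect the projection.

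The difficulty is that the constraint $\pi\in\Sigma_{|\E|-s}$ forces at least $s$ coordinates to be zero, turning the problem into a combinatorial choice of which edges to silence. For any chosen set $S$ of zeroed edges the remaining coordinates are free and optimally set to $\xi_e$, so the total cost equals $\sum_e\phi_e(\xi_e)+\sum_{e\in S}\big(\phi_e(0)-\phi_e(\xi_e)\big)$. The crux is then the evaluation of the per-edge penalty: substituting $a_e=\log\frac{1-\xi_e}{\xi_e}$ I expect $\phi_e(0)-\phi_e(\xi_e)=-\log(1-\xi_e)$, which is nonnegative and \emph{strictly increasing} in $\xi_e$. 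Consequently the cheapest way to meet the constraint is to zero exactly the $s$ edges with the smallest values of $\xi_e$, which is by definition $\mathcal P_s(\xi)$, giving $\widehat\pi=\mathcal P_s(\xi)$. I expect this penalty computation and its monotonicity, rather than any of the three smooth minimizations, to be the only delicate point of the proof.
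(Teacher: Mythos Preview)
Your proposal is correct and follows the same route the paper indicates: the paper's entire proof is the sentence ``Its proof can be obtained by differentiating $\tilde V$'', and for $\widehat x$, $\widehat\alpha$, $\widehat\beta$ you do exactly that (invoking Lemma~\ref{lemma:centralized-WLS} for the first). Where you go beyond the paper is the $\widehat\pi$ part: differentiation alone does not handle the sparsity constraint $\pi\in\Sigma_{|\E|-s}$, and your separability-plus-penalty argument---computing $\phi_e(0)-\phi_e(\xi_e)=-\log(1-\xi_e)$ and using its monotonicity in $\xi_e$ to identify the optimal zeroed set---is a genuine proof of the claim $\widehat\pi=\mathcal P_s(\xi)$ that the paper simply asserts. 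One small caveat: the unconstrained coordinatewise minimizer of $\tilde V$ uses the $\epsilon$-shifted residuals $(b-Ax)_e^2+\epsilon/|\E|$, so strictly speaking the values on the support are the $\epsilon$-perturbed posteriors rather than $\xi_e$ itself; your remark that the common shift preserves the ordering (hence the projection's support) is correct, but the nonzero entries match $\xi_e$ only up to this perturbation---a looseness already present in the paper's statement, not in your argument.
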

From the expressions of $\widehat \alpha$ and $\widehat \beta$, we can notice that the regularization term makes them greater than zero, and consequently also $\|\pi\|_1$. 

Using the insights obtained by Proposition~\ref{prop:pi}, we propose an alternating method for the minimization of~\eqref{min2}. The resulting method is a combination of Iterative Reweighted Least Squares (IRLS) and Expectation Maximization.
The algorithm, which is detailed in Algorithm~\ref{algo:EM}, is based on the following four fundamental steps, which are iteratively repeated until convergence.

{\bf WLS solution:} Given the relative measurements $b$ and current parameters $\pi_e,\alpha, \beta$, a new estimation of the variable $x$ is obtained  
by solving the WLS problem with weights $$w_{e}=(1-\pi_{e})\alpha^{-2}+\pi_{e}\beta^{-2}, \quad \forall e \in \mathcal{E}.$$ 

{\bf Expectation:} \lessnew{The posterior distribution $\xi$ of the noise associated to the edges} is evaluated, based on the current $x,\alpha, \beta$.  

{\bf Projection:} \lessnew{The vector $\pi=\mathcal{P}_{s}(\xi)$ is the best $(|\mathcal{E}|-s)$-approximation of the posterior probability $\xi$.
Therefore, the $s$ smallest components of the posterior probabilities $\xi$ are \lessnew{set to zero to make sure that at least $s$ measurements are used} in the WLS estimation problem.}

{\bf Maximization:} Given the projected posterior probability $\pi$, we use it to re-estimate the mixture parameters  $\alpha $ and $\beta $. 

The procedure is iterated until a suitable Stopping Criterion (SC) is satisfied, e.g. a maximum number of iterations $T_{\max}$ can be fixed (${\text{SC} = }\{t\leq T_{\max}$\}) or the algorithm can be run until the estimate stops changing $({\text{SC} = }\{\|x^{(t+1)}-x^{(t)}\|/\|x^{(t)}\|<\mathsf{tol}\}$ for some $\mathsf{tol}>0$). 

 \begin{algorithm}[h!]
 \caption{LS-EM}\label{algo:EM}
 \begin{algorithmic}[1]
\REQUIRE Data: $(b,A)$.
 Parameters: $c_1,c_2>0, p\in (0,\frac12)$, $\mathsf{tol}>0$.
  \STATE Initialization: \\
  ${t  \gets 0}, \ \alpha^{(t)} \gets\alpha_0,\ \beta^{(t)} \gets\beta_0,\ $
  $ \pi^{(t)} \gets {0},\ \epsilon^{(t)} \gets 1, $ \lessnew{ $\text{SC} \gets 1 $}.
\WHILE{\lessnew{$ \text{SC} \geq \mathsf{tol} $ }}
 \STATE Computation of weights: $\forall e\in\mathcal{E}$
$$
w^{(t+1)}_e \gets\frac{1-\pi_e^{(t)}}{(\alpha^{(t)})^2}+\frac{\pi_e^{(t)}}{(\beta^{(t)})^2}
$$
\STATE WLS solution:  $W^{(t+1)} \gets \mathrm{diag}(w^{(t+1)})$ $$
 x^{(t+1)} \gets L^{\dag}_{W^{(t+1)}} A^{\top}W^{(t+1)}b,
$$
\STATE Posterior distribution evaluation: $\forall e\in\mathcal{E}$
{\small{\begin{align*}\xi_e^{(t+1)}& \gets f(z_e=1|x^{(t+1)},\alpha^{(t)},\beta^{(t)})
\end{align*}}}
\STATE Best $(|\mathcal{E}|-\lessnew{s})$-approximation: $$\pi^{(t+1)} \gets \mathcal{P}_{s}(\xi^{(t+1)})$$
\STATE Regularization parameter: $$\kappa^{(t+1)} \gets \mathrm{dim}(\mathrm{ker}(L_{W^{(t+1)}}))$$
$$\theta^{(t+1)} \gets \frac{1}{\log(t+1)}+c_1\|x^{(t+1)}-x^{(t)}\|+c_2(\kappa^{(t+1)}-1)$$ $$\epsilon^{(t+1)} \gets\min\left(\epsilon^{(t)},\theta^{(t+1)}\right)$$
\STATE Parameters estimation: $$\alpha^{(t+1)} \gets \sqrt{\frac{\epsilon^{(t)}+\sum_{e}(1-\pi_e^{(t+1)})|b_e-(Ax^{(t+1)})_e|^2}{\|\mathds{1}-\pi^{(t+1)}\|_1}}
$$
$$\beta^{(t+1)} \gets \sqrt{\frac{\epsilon^{(t)}+\sum_{e}\pi_e^{(t+1)}|b_e-(Ax^{(t+1)})_e|^2}{\|\pi^{(t+1)}\|_1}}
$$
 \STATE \lessnew{Evaluate $\text{SC} $} 
  \STATE ${ t  \gets  t + 1 }$
 \ENDWHILE
 \end{algorithmic}
 \end{algorithm}
 
{\lessnew{{Although Algorithm~\ref{algo:EM} is a modified version of classical EM, this fact is not sufficient to guarantee the convergence of the proposed method. In fact, as  observed in \cite{Bishop:2006:PRM:1162264}, \lessnew{ a generic EM algorithm is not guaranteed to converge to a limit point but only to produce a sequence of points along which the log-likelihood function does not decrease}.
Hence, an explicit convergence proof is required in all specific cases.
\lessnew{Algorithm~\ref{algo:EM} also includes a regularization sequence $\epsilon^{(t)}$, which appears in the ``Maximization'' step and is designed to be monotonic and to go to zero upon convergence of the algorithm (see Step 7).} The presence of such regularization is actually instrumental to prove the convergence to a local maximum of the log-likelihood function.

We underline that the proposed method (see Algorithm 1) can be interpreted also as an IRLS with a specific choice of the weights \cite{DBLP:journals/tsp/BaBPB14}.
In contrast to classical IRLS, LS-EM allows to perform a classification of the measurements and the weights depend on the weighted energy based on this classification and this marks its difference with IRLS where the weights associated to edge $e$ of the residual, chosen with the aim of approximating the $\ell_1$-norm of residual, turn out to depend exclusively on the residual of edge $|b_e-(Ax)_e|$. The combination of EM with IRLS has been shown to outperform classical IRLS for in terms of speed and robustness in presence of noise in sparse estimation problems \cite{DBLP:journals/tsp/RavazziM15}.
}}}

In order to state the convergence result, denote  $\zeta^{(t)}=(x^{(t)},\pi^{(t)},\alpha^{(t)},\beta^{(t)},\epsilon^{(t)})$: then Algorithm~\ref{algo:EM} can be seen as a map from $\R^\V\times[0,1]^\E\times\realpositive\times\realpositive \times \realnonnegative $ to itself that produces the sequence of iterates $\{\zeta^{(t)}\}_{t\in\integernonnegative}$.
\begin{teo}[LS-EM convergence]\label{thm:convergence}
 For any $b\in\R^{{\E}}$, the whole sequence $\zeta^{(t)}$ converges to $\zeta^\infty=(x^{\infty},\pi^{\infty},\alpha^{\infty},\beta^{\infty},\epsilon^{\infty})$ such that
\begin{subequations}\begin{gather}\label{eq:conv1}
x^{\infty}=L_{W^{\infty}}^{\dag}A^{\top}W^{{\infty}}b,\quad  W{^{\infty}}=\mathrm{diag}(w^{\infty})\\
w^{\infty}_e=\frac{1-\pi^{\infty}_e}{|\alpha^{\infty}|^2}+\frac{\pi_e^{\infty}}{|\beta^{\infty}|^2}\label{eq:conv2}\\
\pi^{\infty}_e=\mathcal{P}_{s}\Big( f(z_e=1|x^{\infty},\alpha^{\infty},\beta^{\infty}) \Big)\label{eq:conv3}\\
{\alpha^{\infty}=\sqrt{\frac{\sum_{e}(1-\pi_e^{\infty})|b_e-(Ax^{\infty})_e|^2+\epsilon^{\infty}}{\|\mathds{1} - \pi^{\infty}\|_1}}}\\ 
{\beta^{\infty}=\sqrt{\frac{\sum_{e}\pi_e^{\infty}|b_e-(Ax^{\infty})_e|^2+\epsilon^{\infty}}{\|\pi^{\infty}\|_1}}}.\label{eq:conv4}
\end{gather}
\end{subequations}
\end{teo}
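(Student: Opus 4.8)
The plan is to read Algorithm~\ref{algo:EM} as a block-coordinate (alternating) minimization of the regularized objective $\tilde V$ in \eqref{Lyapunov}, equipped with the monotone, vanishing regularization schedule $\epsilon^{(t)}$. By Proposition~\ref{prop:pi}, each of the four inner updates (the WLS step for $x$, the projection step for $\pi$, and the closed-form steps for $\alpha$ and $\beta$) returns the \emph{exact} partial minimizer of $\tilde V$ in that block, with the remaining variables and $\epsilon=\epsilon^{(t)}$ held fixed. Since all four updates can only decrease $\tilde V$, and since $\tilde V$ is strictly increasing in $\epsilon$ (its coefficient $\tfrac{1-\pi_e}{\alpha^2}+\tfrac{\pi_e}{\beta^2}$ is positive) while the schedule enforces $\epsilon^{(t+1)}\le\epsilon^{(t)}$, I would first chain these inequalities to obtain the monotonicity $V^{(t+1)}\le V^{(t)}$, where $V^{(t)}:=\tilde V(\zeta^{(t)})$.

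Next I would establish boundedness and lower-boundedness so that $V^{(t)}$ converges. The constraint $\pi\in\Sigma_{|\E|-s}$ guarantees at least $s$ reliable edges, hence $\|\1-\pi\|_1\ge s$, and combined with $\epsilon^{(t)}>0$ this keeps the closed-form $\widehat\alpha,\widehat\beta$ strictly positive, so the Gaussian components never collapse. Coercivity of $\tilde V$ as $\alpha\to0$ or $\beta\to0$ (the $1/\alpha^2$ and $1/\beta^2$ terms dominate the $-\log\alpha$, $-\log\beta$ terms) bounds $\alpha,\beta$ away from $0$, while the monotone decrease bounds them from above; connectivity of $\G$ with strictly positive weights makes $L_W$ a genuine weighted Laplacian with one-dimensional kernel, so, since the iterates stay in $\1^\perp$, the pseudo-inverse is uniformly controlled and $x^{(t)}$ stays bounded. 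Thus $\{\zeta^{(t)}\}$ lies in a compact set and $V^{(t)}\downarrow V^\infty$.

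I would then prove \emph{asymptotic regularity}, $\|\zeta^{(t+1)}-\zeta^{(t)}\|\to0$, via a sufficient-decrease estimate $V^{(t)}-V^{(t+1)}\ge c\,\|\zeta^{(t+1)}-\zeta^{(t)}\|^2$. This uses strong convexity of each subproblem on the compact region reached by the iterates: the $x$-quadratic has Hessian $L_W$, which is positive definite on $\1^\perp$ (eigenvalue $\ge\lambda_2(L_W)>0$); the $\alpha$- and $\beta$-problems have positive curvature at their interior minimizers; and the projection step gives a decrease tied to the displacement of $\pi$. Summing the sufficient-decrease inequality against the convergent $V^{(t)}$ yields $\|\zeta^{(t+1)}-\zeta^{(t)}\|\to0$; in particular $\|x^{(t+1)}-x^{(t)}\|\to0$, and because $\tfrac{1}{\log(t+1)}\to0$ and $\kappa^{(t)}=\dim\ker(L_{W^{(t)}})=1$ eventually (positive weights on a connected graph), the schedule forces $\theta^{(t)}\to0$ and hence $\epsilon^{(t)}\downarrow\epsilon^\infty=0$. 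By continuity of the four update maps on the compact reachable region, every accumulation point of $\{\zeta^{(t)}\}$ then satisfies the fixed-point system \eqref{eq:conv1}--\eqref{eq:conv4}.

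The main obstacle is upgrading ``every accumulation point is a fixed point'' to convergence of the \emph{whole} sequence, since $\tilde V$ is non-convex and the feasible set for $\pi$ is a union of faces selected by the combinatorial projection $\mathcal{P}_s$. I would handle this in two stages. First, I would show the active set of $\mathcal{P}_s$ (the indices of the $s$ zeroed posteriors) is eventually constant: because successive increments vanish, the posterior vector $\xi^{(t)}$ settles and, with a fixed tie-breaking rule, the selected indices stop changing after finitely many steps, reducing the tail of the iteration to a \emph{smooth} alternating minimization on a fixed face. Second, on that face $\tilde V$ is real-analytic in $(x,\pi,\alpha,\beta)$ (built from polynomials, logarithms, the entropy, and square roots on the open positivity domain), so I would invoke the Kurdyka--{\L}ojasiewicz inequality to promote asymptotic regularity to summability $\sum_t\|\zeta^{(t+1)}-\zeta^{(t)}\|<\infty$, which forces convergence to a single limit $\zeta^\infty$. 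An equivalent route, if the reduced problem has a nondegenerate (isolated) set of stationary points, is the classical argument that a bounded sequence with vanishing increments and isolated limit points must converge. I expect this last step---controlling the projection's active set and then securing a {\L}ojasiewicz estimate uniformly on the compact reachable set---to be the most delicate part of the proof.
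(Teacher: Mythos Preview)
Your high-level skeleton---monotone descent of $\tilde V$, boundedness, asymptotic regularity, then passing to accumulation points---matches the paper's Appendix~\ref{sect:theorem-6}. Two of your intermediate arguments, however, differ from the paper's and contain real gaps.

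First, your coercivity argument for keeping $\alpha^{(t)},\beta^{(t)}$ bounded away from zero relies on the regularizer $\epsilon^{(t)}>0$, but you also (correctly) want $\epsilon^{(t)}\to0$; the two pull in opposite directions. The term $\tfrac{1}{2\alpha^2}\sum_e(1-\pi_e)\big[(b-Ax)_e^2+\epsilon/|\E|\big]$ dominates $\log\alpha$ only if its coefficient stays bounded away from zero, and nothing in your sketch prevents the residuals on the reliable edges and $\epsilon^{(t)}$ from vanishing together, so $\alpha^{(t)}\to0$ is not ruled out. The paper does not try to exclude this a priori; instead Lemmas~\ref{as_reg} and~\ref{lemma:boundness} proceed by case analysis, exploiting the specific design of the schedule: the terms $c_1\|x^{(t+1)}-x^{(t)}\|$ and $c_2(\kappa^{(t+1)}-1)$ inside $\theta^{(t+1)}$ are there precisely so that degeneration of $\alpha$ or $\beta$ forces $\epsilon^{(t)}\to0$ and hence, reading the schedule backwards, $\|x^{(t+1)}-x^{(t)}\|\to0$ and $\kappa^{(t)}\to1$.

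Second, you pass to accumulation points ``by continuity of the four update maps,'' but $\mathcal{P}_s$ is \emph{not} continuous: the identities of the $s$ smallest entries can jump. The paper devotes Lemma~\ref{lemma:fixed_points} to exactly this, splitting into $i\in\supp(\pi^\sharp)$ versus $i\notin\supp(\pi^\sharp)$ and, in the latter case, further distinguishing eventually-zero from asymptotically-zero components of $\pi^{(t)}$ to recover the ordering inequality that characterizes $\mathcal{P}_s$ at the limit. Your later claim that the active set of $\mathcal{P}_s$ stabilizes has the same weakness: vanishing increments of $\xi^{(t)}$ do not preclude infinitely many rank swaps if the limiting posterior has ties at the $s$-th position.

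Your Kurdyka--{\L}ojasiewicz route to convergence of the \emph{whole} sequence is a genuine departure from the paper, which simply combines boundedness (Lemma~\ref{lemma:boundness}), asymptotic regularity of $x^{(t)}$ (Lemma~\ref{as_reg}), and the fixed-point characterization (Lemma~\ref{lemma:fixed_points}) via a subsequence argument. If you (i) replace your coercivity step by the paper's case analysis and (ii) either justify eventual constancy of the projection's active set or handle the constraint $\pi\in\Sigma_{|\E|-s}$ directly within the KL framework as a semi-algebraic set, the KL argument would give a cleaner and arguably more complete proof of full-sequence convergence than the paper's own.
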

The converge point $\zeta^\infty$ is a fixed point of the algorithm and a local minimum of $\tilde V(\cdot,\cdot,\cdot,\cdot,\epsilon^\infty)$. If $\epsilon^\infty=0$, then $\zeta^\infty$ locally maximizes the log-likelihood.
\smallskip

The proof of Theorem~\ref{thm:convergence} is based on observing that function $\tilde V$ in~\eqref{min2} is a Lyapunov function that is not increasing along the sequence of iterates.
Details are postponed to Appendix~\ref{sect:theorem-6}.

\section{Distributed algorithm}\label{sect:algo-distrib}
In this section, we design and study a distributed algorithm to solve problem~\eqref{eq:first-ML}, starting from the centralized one proposed in the previous section. Preliminary, let us examine steps 3--8 in Algorithm~1 in order to identify whether they are amenable to a distributed computation. Steps 3 and 5 only require information depending on edge $e$ and are thus inherently decentralized. Furthermore, we already know that the least squares problem in Step 4 can be solved by a distributed procedure. Instead, steps 6--8 involve global information and can not easily be distributed.

Based on this discussion, we propose a simple but effective variation of LS-EM algorithm, detailed in Algorithm~\ref{algo:dEM}. \lessnew{Algorithm \ref{algo:dEM} is totally distributed and can be performed by the nodes: at each iteration, every node $v\in\V$ computes $w_e^{(t)},\pi^{(t)}_e$ for every edge incident to it (see Step 3 and Step 5 in Algorithm~\ref{algo:dEM}) and the estimate of position $x_v^{(t)}$ (see Step 4 in Algorithm~\ref{algo:dEM}).}
The new algorithm is based on two design choices. The first choice is inspired by the distributed gradient dynamics~\eqref{eq:gradient_descent}: instead of fully solving a WLS problem at each iteration, we only perform {\em one step} of the corresponding gradient iteration. In the second choice, we assume $\alpha$ and $\beta$ to be known, thus removing the need for their estimation. A further advantage is that keeping $\alpha$ and $\beta$ fixed during the evolution of the algorithm avoids certain difficulties in the convergence analysis in Appendix~\ref{sect:theorem-6} and namely removes the need for regularization and projection steps. Consequently, this distributed algorithm solves the exact estimation problem~\eqref{min1}. Even though the knowledge of $\alpha$ and $\beta$ can be a restrictive assumption, we have observed that the algorithm is fairly robust to uncertainties in these values: this quality is further discussed in Remark~\ref{rem:uncertainty} below.
 \begin{algorithm}[h!]
 \caption{Distributed LS-EM}\label{algo:dEM}
 \begin{algorithmic}[1]
\REQUIRE Data: $(b,A)$. Parameters: $ p\in (0,\frac12), \tau>0$, $0<\alpha\ll\beta $, \lessnew{$ \mathsf{tol}>0$}. 
 \STATE Initialization: $t \gets 0,\ \pi^{(t)} \gets {0}, \ x^{(t)} \gets 0,\ $ \lessnew{ $ \text{SC} \gets 1$}
\WHILE{\lessnew{$ \text{SC}  \geq \mathsf{tol} $} }
 \STATE Computation of weights: $\forall e\in\mathcal{E}$
$$
w^{(t+1)}_e  \gets \frac{1-\pi_e^{(t)}}{\alpha^2}+\frac{\pi_e^{(t)}}{\beta^2}
$$

\STATE Gradient step: $ W \gets \mathrm{diag}(w^{(t+1)}) $
$$
 x^{(t+1)} \gets (I-\tau L_W)x^{(t)} +\tau A^{\top}Wb,\  
$$
\STATE Posterior distribution evaluation: $\forall e\in\mathcal{E}$
{\small{\begin{align*}\pi_e^{(t+1)} \gets f(z_e=1|x^{(t+1)},\alpha,\beta)
\end{align*}}}
\STATE \lessnew{Evaluate $\text{SC}  $}
 \STATE ${ t  \gets  t + 1 }$
 \ENDWHILE
 \end{algorithmic}
 \end{algorithm}
The convergence of Algorithm~\ref{algo:dEM} can be proved similarly to Theorem~\ref{thm:convergence}, under the condition that the parameter $\tau$ belongs to a certain range: details are postponed to Appendix~\ref{sect:proof-th2}. 
\begin{teo}[Distributed LS-EM convergence]\label{thm:convergence2}
If $\tau<\alpha/\|A\|^2$, then for any $b\in\R^{{\E}}$ the sequence $(x^{(t)},\pi^{(t)})$  generated by Algorithm~\ref{algo:dEM} converges to $(x^{\infty},\pi^{\infty})$ such that
\begin{subequations}
\begin{gather}\label{eq:conv1b}
x^{\infty}=L_{W^{\infty}}^{\dag}A^{\top}W^{\infty}b, \quad W{^{\infty}}=\mathrm{diag}(w^{\infty})\\
w^{\infty}_e=\frac{1-\pi^{\infty}_e}{\alpha^2}+\frac{\pi_e^{\infty}}{\beta^2}\label{eq:conv2b}\\
\pi^{\infty}_e=f(z_e=1|x^{\infty},\alpha,\beta).\label{eq:conv3b}
\end{gather}
\end{subequations}
\end{teo}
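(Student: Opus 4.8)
The plan is to exhibit the objective $V(\cdot,\cdot,\alpha,\beta)$ of~\eqref{min1}, with $\alpha$ and $\beta$ now frozen at their known values, as a Lyapunov function that strictly decreases along the iterates of Algorithm~\ref{algo:dEM}, and then to extract convergence of the whole sequence from this monotonicity together with compactness. Writing $g_W(x):=\tfrac12(b-Ax)^\top W(b-Ax)$ for the $x$-dependent part of $V$, I observe that one sweep of the algorithm is a two-block coordinate descent on $V$: Step~5 replaces $\pi^{(t)}$ by the \emph{exact} minimizer of $V(x^{(t+1)},\cdot,\alpha,\beta)$ over $[0,1]^\E$, which by the computation underlying Proposition~\ref{prop:pi} (here without the $\Sigma_{|\E|-s}$ constraint, and hence without the projection $\mathcal P_s$) is precisely the posterior $\pi^{(t+1)}_e=f(z_e=1\mid x^{(t+1)},\alpha,\beta)$; while Step~4 performs a \emph{single} gradient step $x^{(t+1)}=x^{(t)}-\tau\nabla g_{W^{(t+1)}}(x^{(t)})$ on the convex quadratic $g_{W^{(t+1)}}$, whose Hessian is $L_{W^{(t+1)}}$.

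The first step is to prove monotonicity. The $\pi$-update gives $V(x^{(t+1)},\pi^{(t+1)},\alpha,\beta)\le V(x^{(t+1)},\pi^{(t)},\alpha,\beta)$ for free, by exact minimization. For the $x$-update I would invoke the descent lemma for $g_{W^{(t+1)}}$: since $w^{(t+1)}_e=\tfrac{1-\pi^{(t)}_e}{\alpha^2}+\tfrac{\pi^{(t)}_e}{\beta^2}\le\alpha^{-2}$ for every edge, one has the \emph{uniform} bound $\|L_{W^{(t+1)}}\|\le\|A\|^2\|W^{(t+1)}\|\le\|A\|^2\alpha^{-2}$, so the hypothesis on $\tau$ keeps $\tau\|L_{W^{(t+1)}}\|$ below the descent threshold and yields
\begin{equation*}
V(x^{(t+1)},\pi^{(t)},\alpha,\beta)\le V(x^{(t)},\pi^{(t)},\alpha,\beta)-c\,\|\nabla g_{W^{(t+1)}}(x^{(t)})\|^2
\end{equation*}
for a constant $c>0$. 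Since $V$ is bounded below (the quadratic term is nonnegative and, with $\alpha,\beta$ fixed, the entropy and logarithmic terms are bounded on $[0,1]^\E$), the sequence $V(x^{(t)},\pi^{(t)},\alpha,\beta)$ decreases to a finite limit, the per-step decrements are summable, and therefore $\nabla g_{W^{(t+1)}}(x^{(t)})\to0$, equivalently $x^{(t+1)}-x^{(t)}\to0$.

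Next I would establish compactness and identify the accumulation points. The iterates stay in the hyperplane $\{\mathbf{1}^\top x=0\}$, because $\mathbf{1}^\top L_{W}=0$ and $\mathbf{1}^\top A^\top=0$ preserve $\mathbf{1}^\top x^{(t)}=0$ from the initialization; on this hyperplane $g_W$ is coercive uniformly in $W$, since $W\succeq\beta^{-2}I$ and the connectivity of $\mathcal G$ force $\ker L_W=\mathrm{span}(\mathbf{1})$ with smallest nonzero eigenvalue bounded below by $\beta^{-2}\lambda_{\min}^+(A^\top A)>0$. The decrease of $V$ then confines $x^{(t)}$ to a bounded sublevel set, while $\pi^{(t)}\in[0,1]^\E$ is bounded automatically. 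Passing to any convergent subsequence $(x^{(t_k)},\pi^{(t_k)})\to(x^\infty,\pi^\infty)$ and using $x^{(t+1)}-x^{(t)}\to0$ together with the continuity of $\pi\mapsto W$, of $x\mapsto f(z_e=1\mid x,\alpha,\beta)$, and of the gradient step, I would let $t_k\to\infty$ in $\pi^{(t_k)}_e=f(z_e=1\mid x^{(t_k)},\alpha,\beta)$ and in $L_{W^{(t_k+1)}}x^{(t_k)}-A^\top W^{(t_k+1)}b\to0$ to obtain exactly~\eqref{eq:conv2b},~\eqref{eq:conv3b} and $L_{W^\infty}x^\infty=A^\top W^\infty b$; the latter, combined with $\mathbf{1}^\top x^\infty=0$ and the invertibility of $L_{W^\infty}$ on $\mathrm{span}(\mathbf{1})^\perp$, yields~\eqref{eq:conv1b}.

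The last step, and the one I expect to be the main obstacle, is to upgrade ``every accumulation point is a fixed point'' into convergence of the \emph{whole} sequence. Since for $t\ge1$ the pair $(x^{(t)},\pi^{(t)})$ is a continuous image of $x^{(t)}$ alone, it suffices to argue that $x^{(t)}$ converges. Here I would combine two facts: the set of accumulation points of a bounded sequence with $x^{(t+1)}-x^{(t)}\to0$ is compact and \emph{connected}, and it is contained in the solution set of the fixed-point relations~\eqref{eq:conv1b}--\eqref{eq:conv3b}. If that solution set is totally disconnected the accumulation set reduces to a single point and the proof is complete; to secure this I would either verify that the fixed points are isolated, or---following the route used for Theorem~\ref{thm:convergence}---invoke a \L ojasiewicz gradient inequality for the real-analytic Lyapunov function $V$, which turns the summable-decrement estimate into a \emph{finite-length} trajectory and hence forces convergence regardless of the geometry of the fixed-point set. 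Establishing the applicable gradient inequality, and checking that the single-gradient-step (rather than exact) update in the $x$-block still fits its hypotheses, is the delicate part of the argument.
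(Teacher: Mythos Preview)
Your plan matches the paper's proof in structure: $V$ with $\alpha,\beta$ frozen is the Lyapunov function, the $\pi$-step is exact minimization, the $x$-step is shown to decrease $V$ under the step-size bound, summability of the decrements gives asymptotic regularity $x^{(t+1)}-x^{(t)}\to0$, boundedness is established on $\{\mathbf 1^\top x=0\}$, and continuity identifies any accumulation point with the fixed-point relations~\eqref{eq:conv1b}--\eqref{eq:conv3b}.

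Two small differences are worth noting. First, for the $x$-descent the paper does not appeal to the descent lemma directly but introduces the majorizing surrogate
\[
V^{\mathsf S}(x,z,\pi)=V(x,\pi)+\tfrac{1}{2\tau}(x-z)^\top(I-\tau L_W)(x-z),
\]
observes that the gradient step is exactly $\arg\min_x V^{\mathsf S}(x,x^{(t)},\pi^{(t)})$, and reads off the decrement $\tfrac{1}{2\tau}(x^{(t)}-x^{(t+1)})^\top(I-\tau L_{W^{(t)}})(x^{(t)}-x^{(t+1)})$; this is the same inequality you obtain from the quadratic descent lemma, just packaged as majorization--minimization. Second, for boundedness the paper does not argue via coercivity of $V$ but directly from the recursion, bounding $\|x^{(t+1)}\|\le(1-\tau\mu_2)\|x^{(t)}\|+\tau\gamma$ on $\{\mathbf 1^\top x=0\}$; your coercivity argument (uniform lower bound on the Fiedler eigenvalue via $W\succeq\beta^{-2}I$) is equally valid and arguably cleaner.

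On your ``main obstacle'': you are right that asymptotic regularity plus ``every accumulation point is a fixed point'' does not by itself force convergence of the whole sequence. The paper's write-up is terse here --- it passes to a convergent subsequence, notes $x^{(t_j+1)}\to x^\infty$ as well, and declares the limit a fixed point, without explicitly ruling out multiple accumulation points. Your proposed remedy via the \L ojasiewicz inequality for the real-analytic $V$ (which indeed applies, and accommodates the inexact $x$-update through the surrogate descent bound) is a legitimate and more rigorous way to close this gap than what the paper actually writes; it is more machinery than the paper invokes, but it buys an honest proof of full-sequence convergence.
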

The limit point $(x^{\infty},\pi^{\infty})$ is a fixed point of the algorithm and a local minimum of $V$.

\section{Numerical results}\label{sect:sims}
In this section, we provide simulations illustrating the performance of the proposed algorithms: we are mainly interested in comparing them in terms of their convergence times and final estimation errors. 

\subsection{Performance analysis of proposed algorithms}\label{sect:sims-analysis}
We examine how  the performance of proposed techniques depends on the parameters of the problem, such as the parameters of the Gaussian mixture ($\alpha$, $\beta$, $p$) and the topology of the measurement graph $\G$. As a measure of performance, we consider the normalized quadratic error (NQE), defined as 
$\mathrm{NQE}=	{\|\widehat{x} - \widetilde{x}\|^{2}}/{\|\widetilde{x}\|^{2}} * 100 \ \left[\%\right]
$. 

Let us begin by describing our baseline simulation setup in details. First, \lessnew{we generate synthetic data to define the estimation problem. The number of nodes is set to $N=50$. The $N$ components of the state vector are generated randomly according to a uniform distribution in the interval $\left(0, 1\right) $: then, the mean is subtracted yielding a state vector with mean 0. The topology is generated as Erd\H{o}s-R\'enyi random graphs with edge probability $p_{\text{edge}}$ ranging from 0.1 up to 1 (i.e., an edge is created between two arbitrary nodes with probability $p_{\text{edge}}$).
In the extreme case of $ p_{\text{edge}} = 1 $, the graph generated is the complete graph where all nodes are connected to all others}. 
We fix $ \alpha = 0.05$ and $ \beta/\alpha $ in the range from 2 to 10. Also the probability of getting a bad measurement $ p $ is taken between 0 and 1/2. The noise vector is then sampled from a normal distribution using a combination of the above parameters. 

Next, we simulate the iterative algorithms. We initialize the state vector to be all zeros and also the vector $ \pi $ to be all zeros, \lessnew{meaning that all measurements are initially presumed to be good}. For the LS-EM, an initial value for $ \alpha $ and $ \beta $ is specified: $ \alpha $ is randomly chosen from the set \{0.1, 0.2, 0.3, 0.4, 0.5\} and $ \beta=2\alpha$ in order to meet the constraint $ \beta > \alpha $. They are then held fixed for the different trials. For the Distributed LS-EM, the fixed values for $ \alpha $ and $ \beta $ are taken to be the true values.
The stopping criterion $ \text{SC} $ is chosen according to a tolerance $ \mathsf{tol} = 10^{-4} $, which has been verified to be small enough to represent numerical convergence. 
\lessnew{For the $(|\mathcal{E}|-\lessnew{s})$-approximation, we choose $s=N-1$. This ``optimistic'' choice accounts to assume a number of valid measurements that could suffice to construct a spanning tree: this assumption is not imposed on our synthetic data.
Similar $\mathcal{P}_{s}$ projections have already shown useful to accelerate convergence of iterative reweighted least square methods for estimation problems with sparsity prior~\cite{DBLP:journals/tsp/RavazziM15}.}

Then, we simulate different trials whereby for each trial the vector $ \widetilde{z} $ is regenerated.
%
%
		\begin{figure}
			\centering
			\includegraphics[width=0.92\columnwidth]{./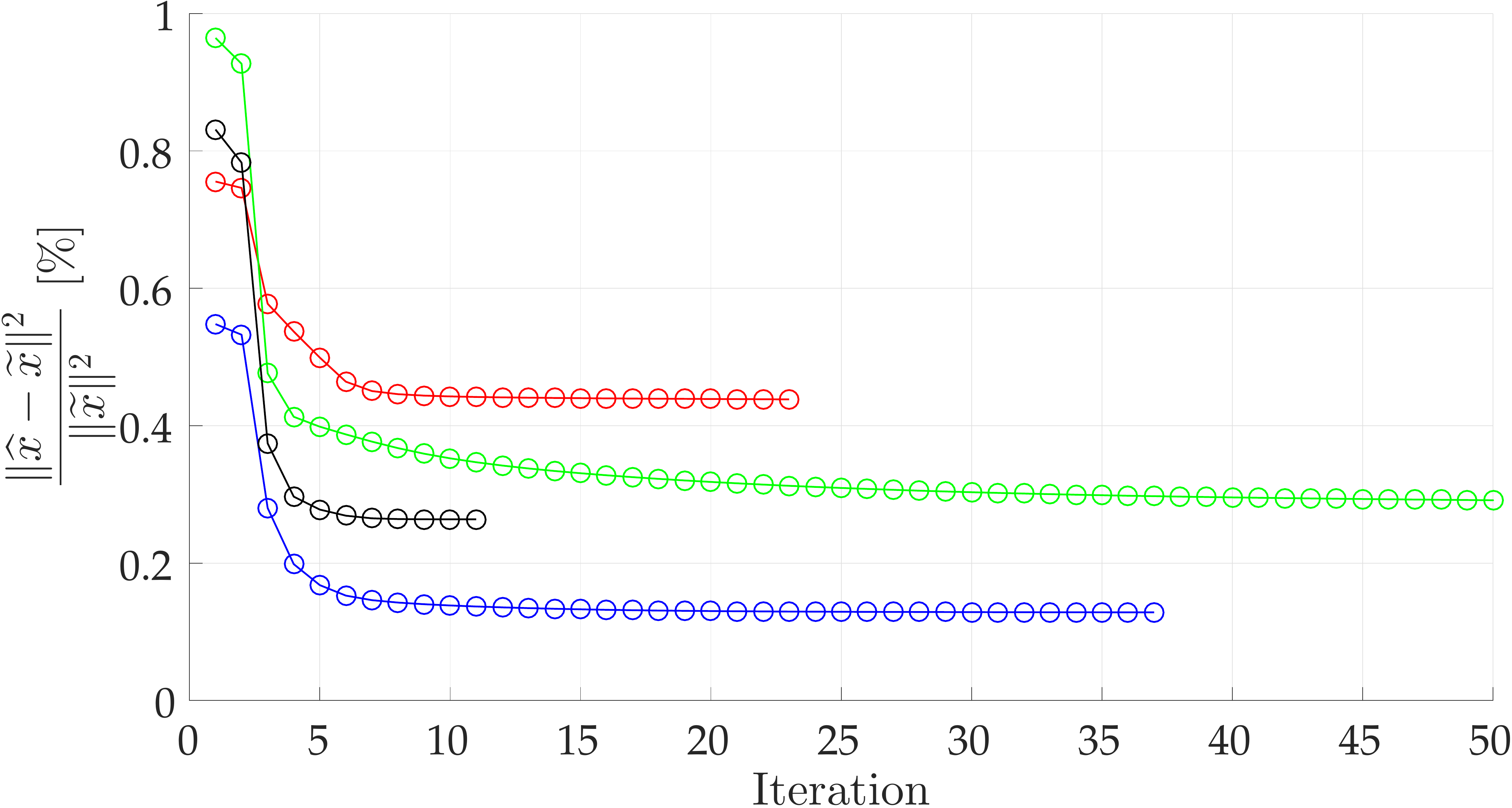}
			\caption{NQE plotted against iteration count for 4 randomly chosen trials obtained using LS-EM Algorithm (each color represents the result of a trial); the parameter set is $N = 50, p_{\text{edge}} = 0.3, p = 0.1, \alpha = 0.05, \beta/\alpha = 5 $. Note that the color of each chosen trial matches its counterpart in Fig. \ref{fig:EVO_NQE_EMDB}}\label{fig:EVO_NQE_EMCT}
		\end{figure}
		\begin{figure}
			\centering
			\includegraphics[width=0.92\columnwidth]{./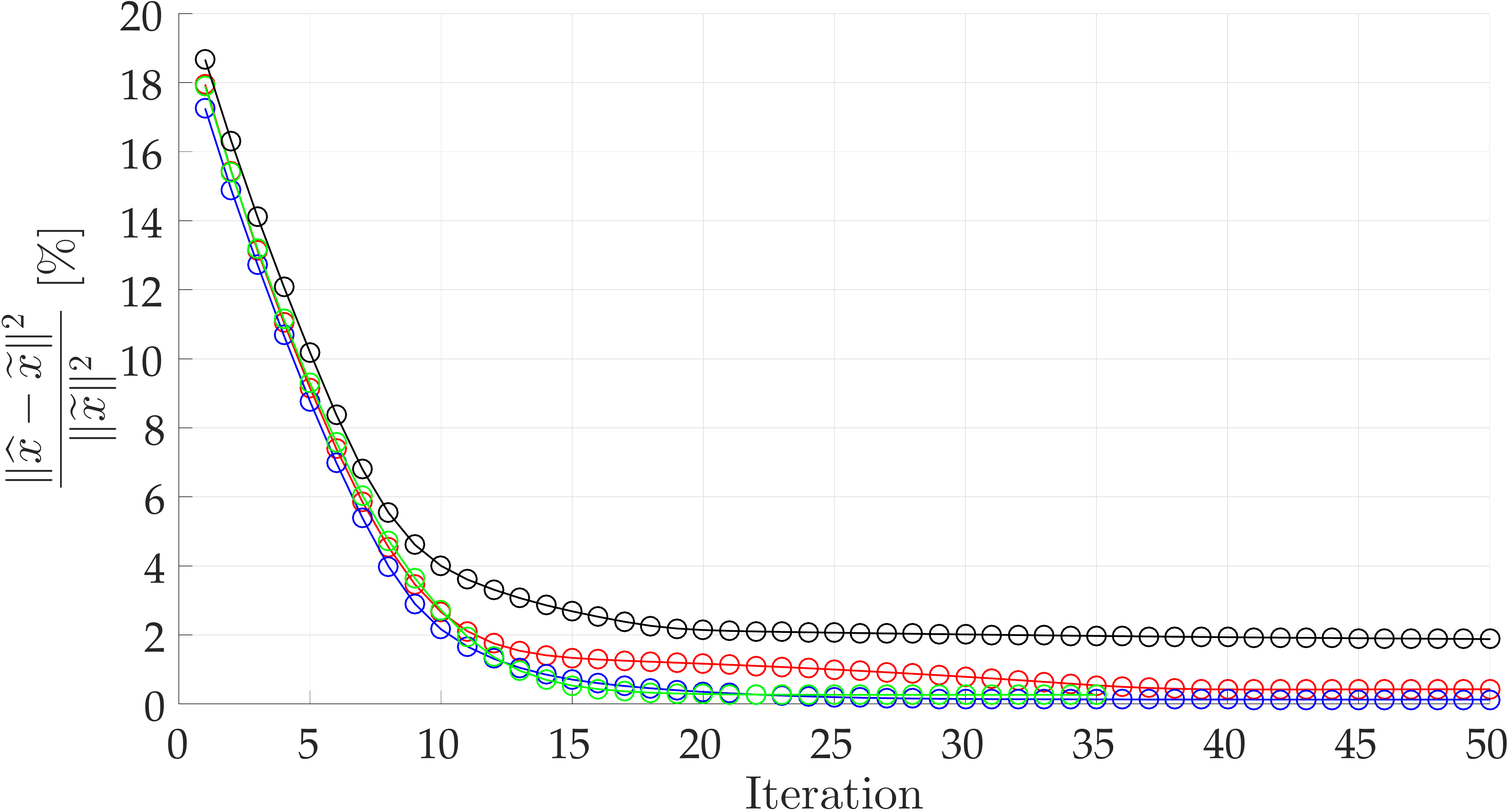}
			\caption{NQE plotted against iteration count for 4 randomly chosen trials obtained using Distributed LS-EM Algorithm (each color represents the result of a trial); the parameter set is $ N = 50, p_{\text{edge}} = 0.3, p = 0.1, \alpha = 0.05, \beta/\alpha = 5 $. Note that the color of each chosen trial matches its counterpart in Fig. \ref{fig:EVO_NQE_EMCT}}\label{fig:EVO_NQE_EMDB}
		\end{figure}
In order to illustrate the evolution of the algorithms, we plot the NQE against the iteration count for Algorithm~\ref{algo:EM} in Fig.~\ref{fig:EVO_NQE_EMCT} and Algorithm~\ref{algo:dEM} in Fig.~\ref{fig:EVO_NQE_EMDB}. We have chosen four trials out of a set of 250: the same trials (that is, the same random graphs and measurements) are chosen for both algorithms. 
We can observe that Algorithm~\ref{algo:EM} converges faster than Algorithm~\ref{algo:dEM}, but the two algorithms achieve similar final errors in a majority of trials.
		\begin{figure}
			\centering
			\includegraphics[width=0.92\columnwidth]{./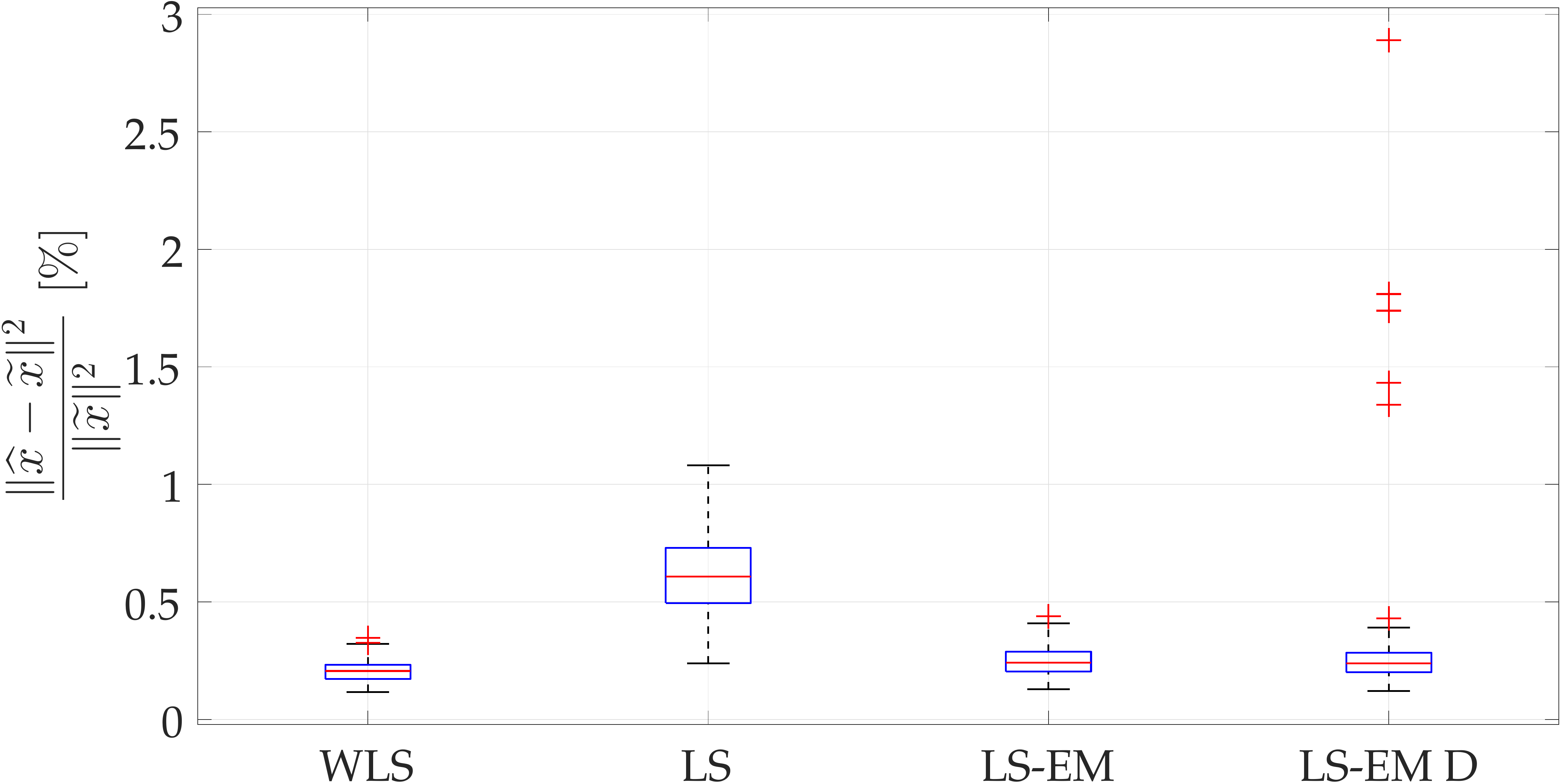}
			\caption{Boxplot showing NQE for the different approaches; the parameter set is $ N = 50, p_{\text{edge}} = 0.3, p = 0.1, \alpha = 0.05, \beta/\alpha = 5 $. + are the NQE considered as outliers by the \texttt{boxplot} command;
			}\label{fig:EVO_NQE_BOX}
			\end{figure}
			
The comparison between Algorithms~\ref{algo:EM} and Algorithm~\ref{algo:dEM} is further explored in Fig.~\ref{fig:EVO_NQE_BOX}, where the  distributions of the final NQEs for all the 250 trials mentioned above are summarized via the \texttt{boxplot} command in \texttt{MATLAB} with the default settings, showing the 25th (lower edge), 50th or median (central mark) and 75th (upper edge) percentiles. In order to make the comparison more complete and provide benchmarks, we also include the weighted least squares (WLS) as per~\eqref{eq:wls} \lessnew{and the ``naive'' unweighted least squares estimator (LS) $ \widehat{x}^{\text{ls}} $, in which we assume all measurements to be good. As expected, WLS outperforms all other estimators, thanks to using a-priori information on the noise parameters $\alpha,\beta$ and complete knowledge of the type of measurements. Instead, the naive LS has the worst performance.}

We can observe that all our approaches have a median that is clearly lower than the median of the LS approach. Actually, the bulks of the error distributions are very similar to the WLS benchmark, except for few trials of the Distributed LS-EM that perform more poorly. \lessnew{A careful inspection of these few trials shows that these large errors are due to incorrect classification of the type of a small number of edges. This phenomenon is not observed in LS-EM, possibly thanks to the fact that in a centralized algorithm the information of all nodes is used at each iteration.} 

		\begin{figure}
			\centering
			\includegraphics[width=0.92\columnwidth]{./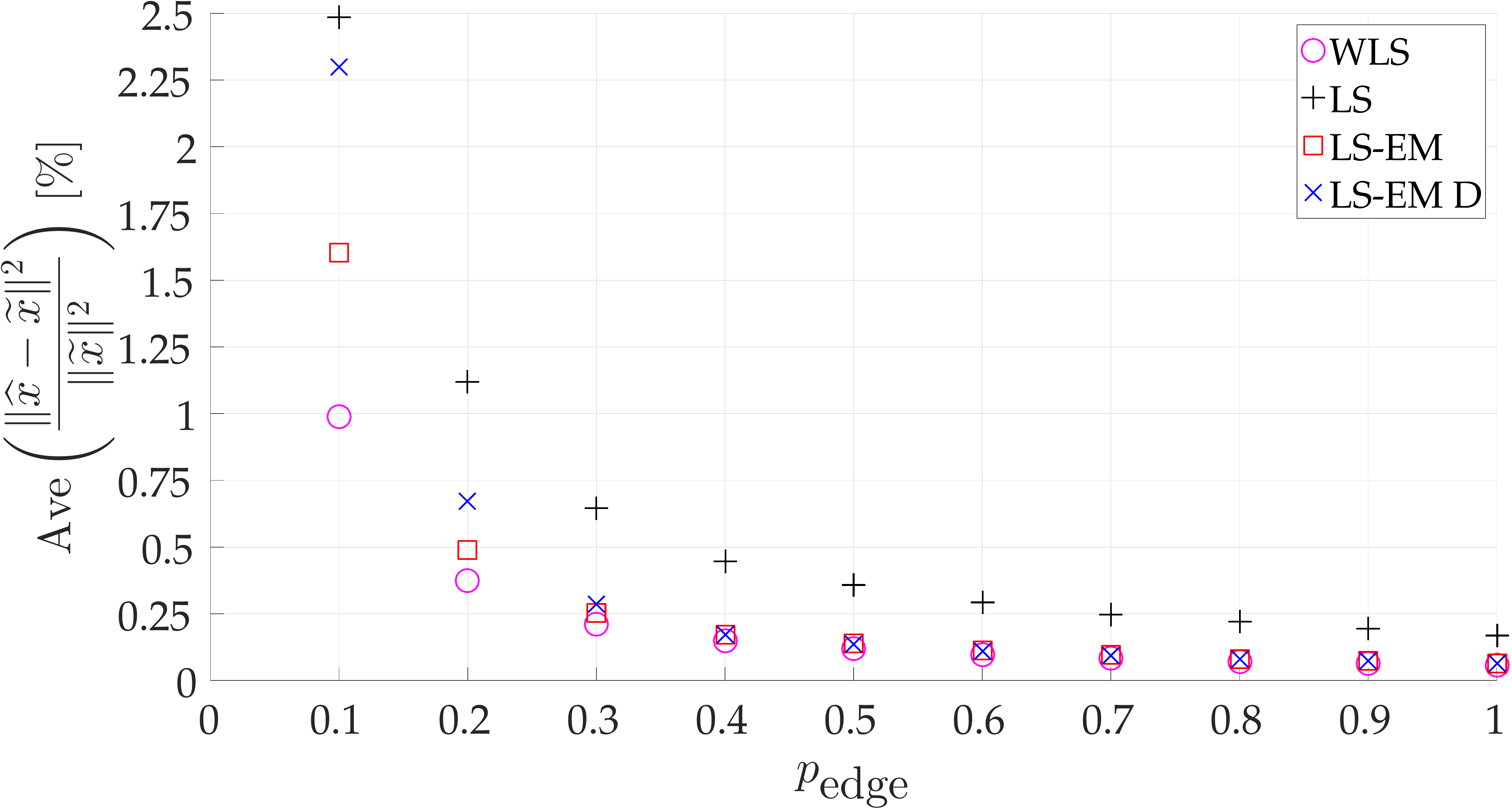}
		\caption{Mean NQE with respect to $ p_{\text{edge}} $; 
		the parameter set is $  N = 50, p = 0.1, \alpha = 0.05, \beta/\alpha = 5 $ and the number of trials is $ 1000 $. {\color{magenta}$\circ$} = WLS, {\color{black} +} = LS, {\color{red} $\Box$} = LS-EM, { $\times$} = Distributed LS-EM.}
			\label{fig:compare_pedge}
		\end{figure}

		\begin{figure}
			\centering
			\includegraphics[width=0.92\columnwidth]{./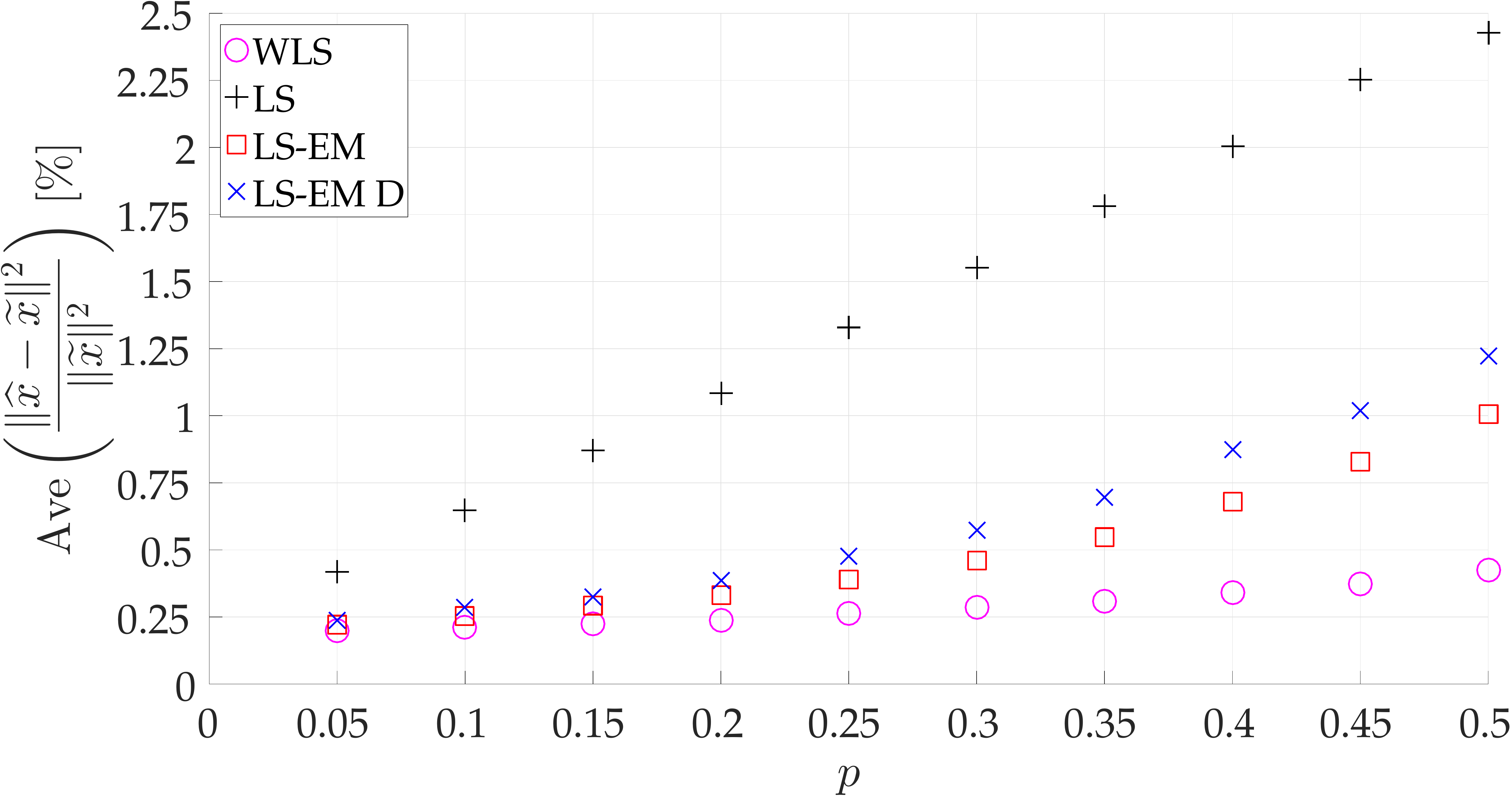}
			\caption{Mean NQE with respect to $ p $; the parameter set is $  N = 50, p_{\text{edge}} = 0.3, \alpha = 0.05, \beta/\alpha = 5 $ and the number of trials is $ 1000 $. {\color{magenta}$\circ$} = WLS, {\color{black} +} = LS, {\color{red} $\Box$} = LS-EM, $\times$ = Distributed LS-EM.}
			\label{fig:compare_p}
		\end{figure}

		\begin{figure}
			\centering
			\includegraphics[width=0.92\columnwidth]{./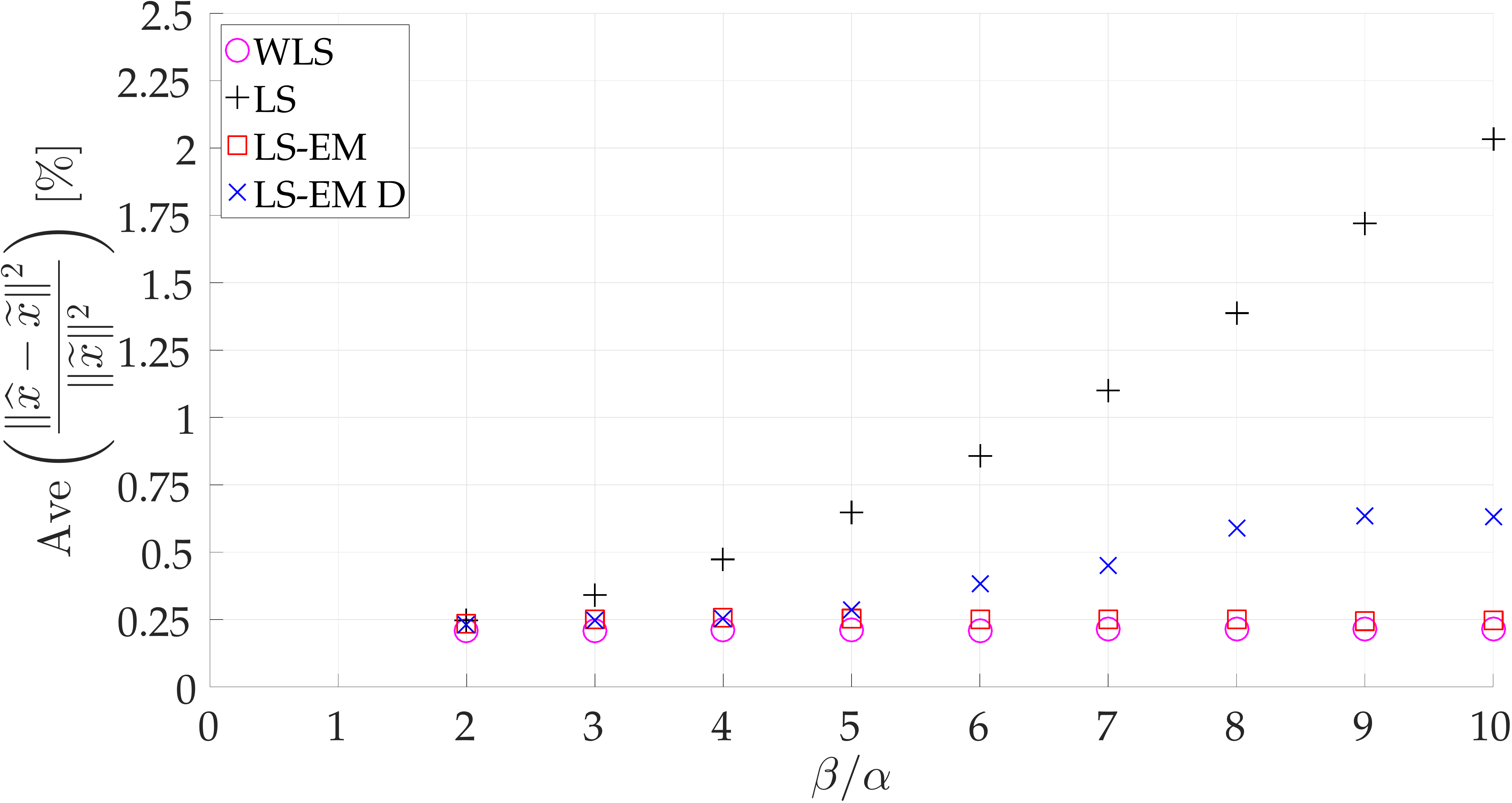}
			\caption{Mean NQE with respect to $ \beta $; the parameter set is $ N = 50, p_{\text{edge}} = 0.3, p = 0.1, \alpha = 0.05$ and the number of trials is $ 1000 $. {\color{magenta}$\circ$} = WLS, {\color{black} +} = LS, {\color{red} $\Box$} = LS-EM, { $\times$} = Distributed LS-EM.}
			\label{fig:compare_ratio}
		\end{figure}
We also performed a parameter study in order to quantify the behavior of the mean NQE with respect to $ p_{\text{edge}} $, $ p $ and $ \beta $. 
In Fig.~\ref{fig:compare_pedge}, we can observe that the mean NQE decreases with increasing $ p_{\text{edge}} $: as the graph becomes more connected, there are more measurements available to estimate the state variables. From the same figure, we can also observe that starting from $ p_{\text{edge}}=0.4$, the performance of the Distributed LS-EM is similar to that of LS-EM. In Fig.~\ref{fig:compare_p}, the mean NQE increases with increasing $ p $: this is due to the presence of more bad measurements. A similar reasoning explains the increase of NQE for increasing ratios $ \beta/\alpha $ in Fig.~\ref{fig:compare_ratio}. These dependencies on the parameters are consistent with intuition. From these three figures, it is clear that the Distributed LS-EM has larger average error than the centralised LS-EM (which has in turn a larger error than the WLS estimate). However, we should recall that the mean error of the Distributed LS-EM is driven up by the aforementioned sporadic errors: a comparison of median values would show a smaller gap from the centralized approach.

\begin{remark}[{Parameter uncertainty in Distributed LS-EM}]\label{rem:uncertainty}
Crucially, in Algorithm~\ref{algo:dEM} the values for $ \alpha $ and $ \beta $ are assumed to be known a priori. In practice one would usually not be able to have this information: hence we want to explore the sensitivity of the algorithm to incorrect choices of $\alpha$ and $\beta$.
In Fig.~\ref{fig:ROB_NQE_RC}, we assume not to know the actual values of $ \alpha $ and $ \beta $, but only their ratio $ \beta/\alpha $: we can observe that choosing $ \alpha$ and $ \beta $ to be smaller than their actual value yields a higher median, while larger values than the real one yield similar results. In Fig.~\ref{fig:ROB_NQE_RV}, we assume to know $ \alpha $ but not $ \beta$: we can observe that an incorrect and too large value of $\beta$ increases the presence of large errors, even though the bulk of the error distribution remains similar. Overall, we conclude that the algorithm is fairly robust to moderate uncertainties in the knowledge of the parameters. 
\end{remark}
		\begin{figure}[]
			\centering
			\includegraphics[width=0.92\columnwidth]{./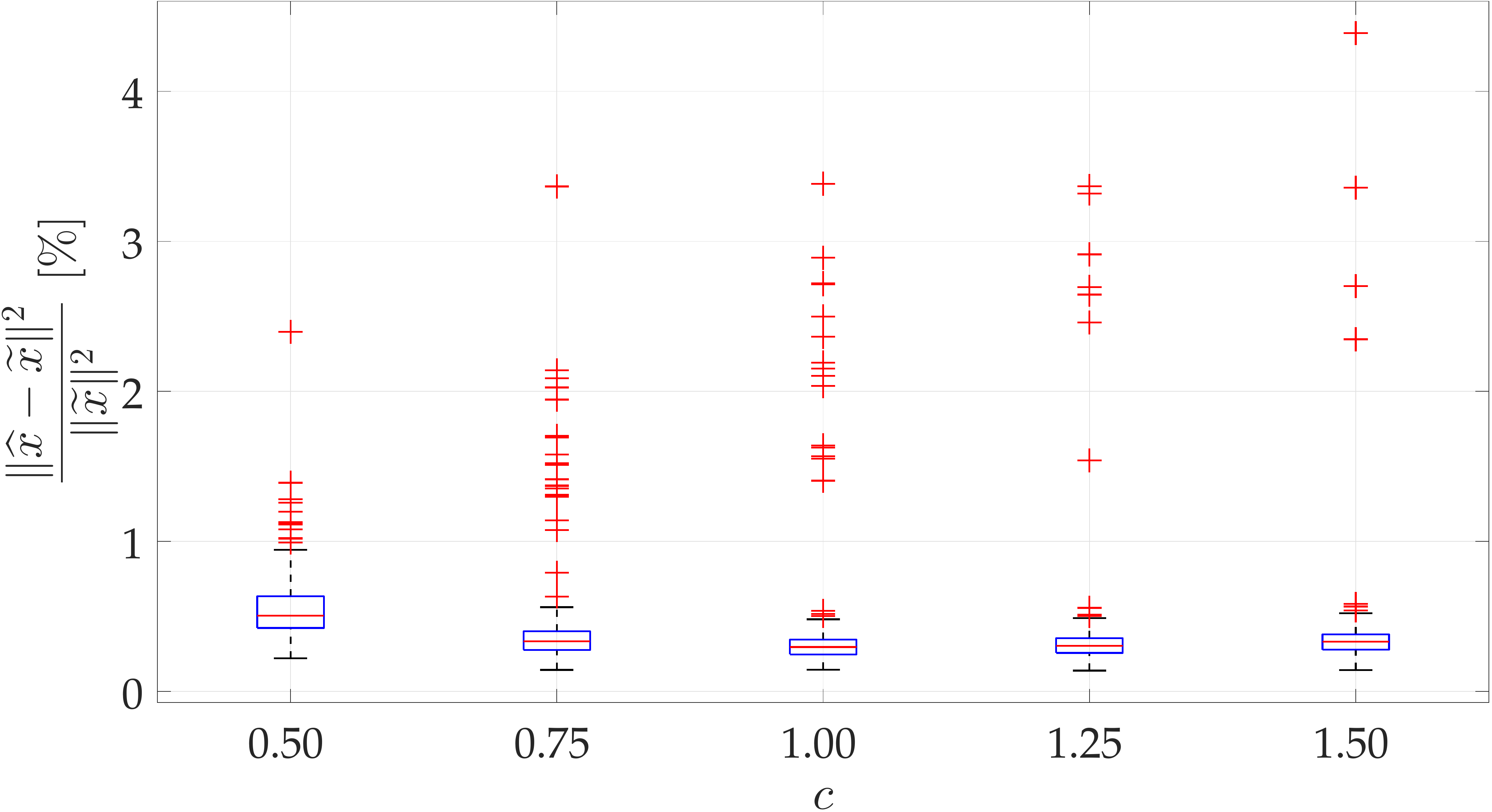}
			\caption{Boxplot of NQE for different initialization of $ \alpha $ and $ \beta $ obtained using Distributed LS-EM; we consider $ \beta/\alpha $ to be known and the actual $ \alpha $ and $ \beta $ values are set as $ c  $ times their real values, with $ c \in \left\lbrace 0.50, 0.75, 1.00, 1.25, 1.50 \right\rbrace $. The parameter set is $N = 50, p_{\text{edge}} = 0.3, p = 0.1, \alpha = 0.05, \beta/\alpha = 5$.}\label{fig:ROB_NQE_RC}
			\end{figure}

		\begin{figure}[]
			\centering
			\includegraphics[width=0.92\columnwidth]{./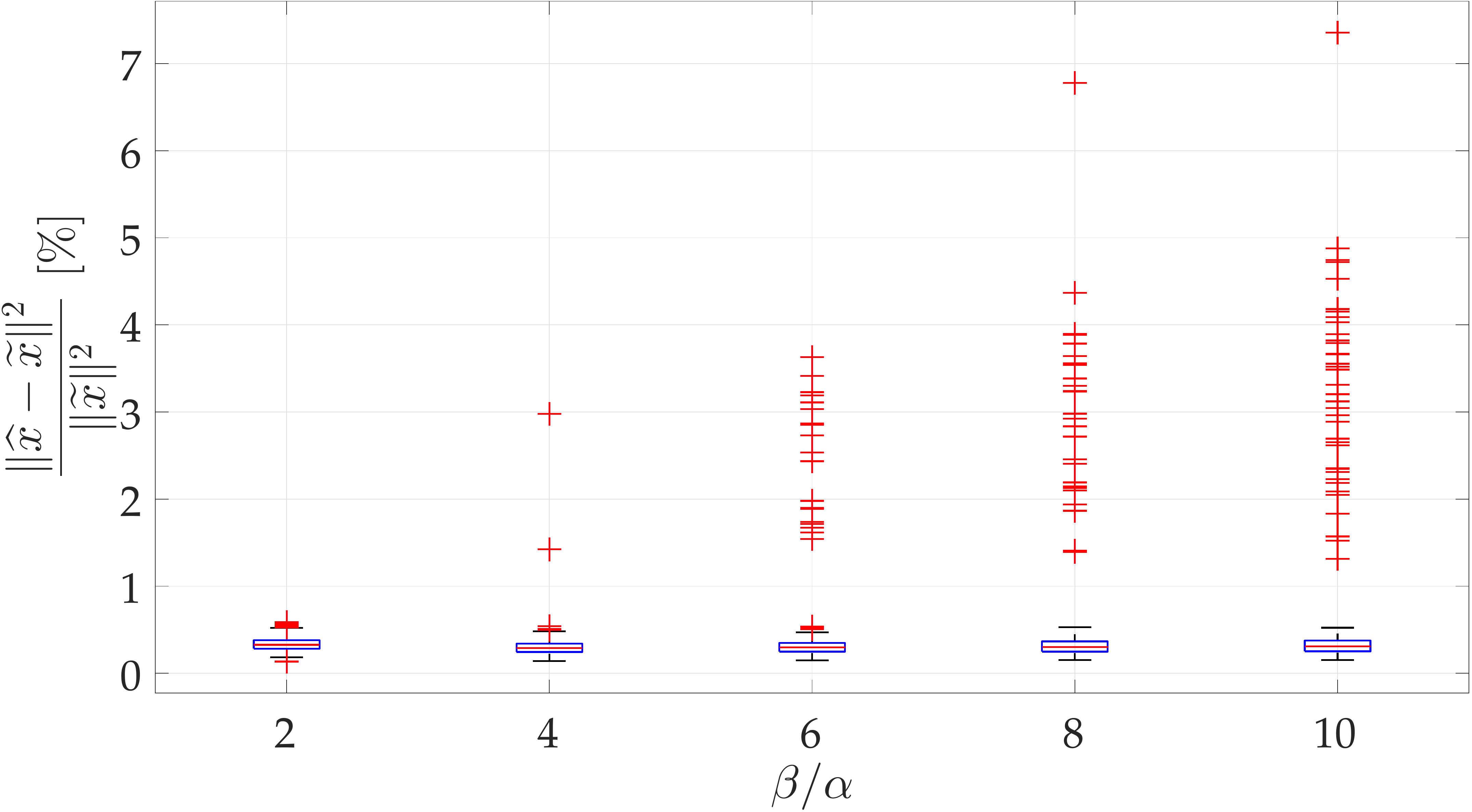}
			\caption{Boxplot of NQE for different initialization of $ \beta $ obtained using Distributed LS-EM; we consider $ \alpha $ to be known and let the actual ratio $ \beta/\alpha $ to be $ \left\lbrace 2, 4, 6, 8, 10 \right\rbrace $. The parameter set is $ N = 50, p_{\text{edge}} = 0.3, p = 0.1, \alpha = 0.05, \beta/\alpha = 5 $.}\label{fig:ROB_NQE_RV}
			\end{figure}
		
\subsection{Distributed LS-EM versus Distributed LAE}
In this section we compare Distributed LS-EM with a distributed version of LAE (see \eqref{eq:dlae} for the update). In order to perform a fair comparison we introduce a mismatch in the measurements model.
The measurements are not generated as a Gaussian mixtures and we consider an experiment coming from the geometric estimation problems in multi robot localization  \cite{LC-AC-FD:14}. More precisely, we consider the following setting.
Ground truth $\widetilde x$ of positions of $n=30$ nodes are drawn from a uniform distribution over the interval $[-1,1]$. Connections among nodes are generated according to the Erd\H{o}s-R\'enyi random
graph model, where each edge is included in the graph with
probability $p_{\mathrm{edge}}$ independently from every other edge. 
The outliers indicator vector $\widetilde{z}\sim \mathcal{B}(0.1)$ and the measurements are generated as follows
 $$
 b=A\widetilde x+(1-\widetilde{z})\alpha\eta+\widetilde{z}\gamma
 $$
 where $\eta$ is white Gaussian noise and $\gamma_e$ is distributed according to a uniform distribution over the interval $[\Delta/4, \Delta/4]$ with $\Delta$ is the size of the environment.
 In Figure 11 we show a comparison between Distributed LS-EM and Distributed LAE in terms of speed of convergence for different values of $p_{\mathrm{edge}}\in\{0.25,0.5,0,75\}$.  In order to perform a fair comparison we have fixed $p=0.2$ (and not equal to 0.1) and $\alpha =0.05$ and $\beta=5\alpha$. The figure depicts the NQE averaged over 50 experiments as a function of number of iterations. The efficiency of the proposed algorithm allows to reduce the number of iterations required to achieve a satisfactory level of accuracy. As can be noticed, Distributed LS-EM need fewer updates (about 40 itarations) than Distributed LAE (more than 300 iterations) to achieve $\mathrm{NQE} = 10^{-3}$ if $p_{\mathrm{edge}}=0.25$.
This gain reduces when the graph originated by the measurement become denser and more connected: when $p_{\mathrm{edge}}=0.5$ few iterations (about 5) are needed to guarantee the convergence of Distributed LS-EM and about 90 for Distributed LAE.
\begin{figure*}[t]
			\centering
			\includegraphics[trim=2cm 0cm 2cm 0cm, clip,width=2.1\columnwidth]{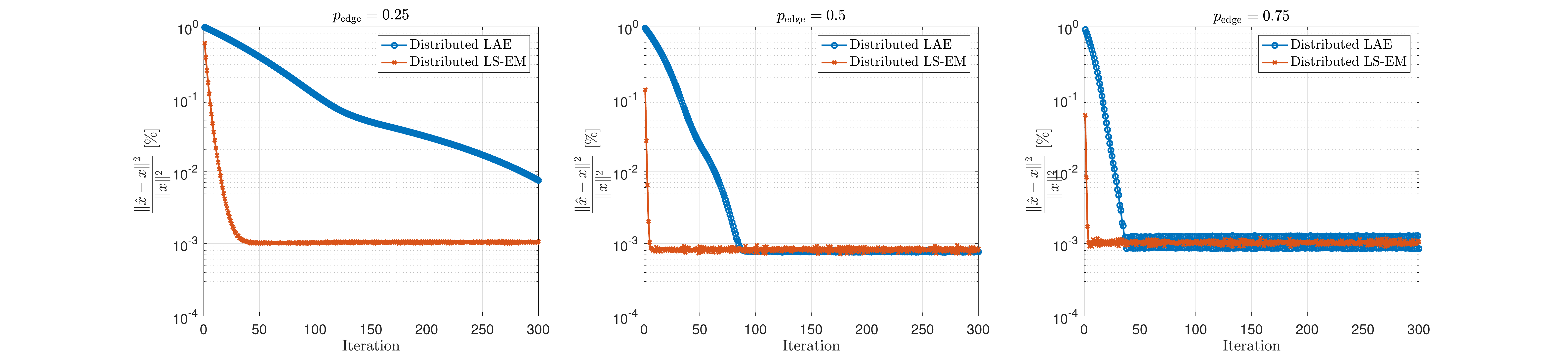}
			\caption{NQE plotted against iteration count averaged over 50 trials for Distributed LAE and Distributed LS-EM Algorithm.}\label{fig:L1vsEM1}
		\end{figure*}

\section{Concluding remarks}\label{sect:outro}
In this paper, we have studied the problem of estimation from relative measurements with heterogeneous quality. \lessnew{We have introduced a novel formulation for the problem and we have proposed two novel algorithms based on the method of Expectation-Maximization. 
One of the algorithms has the important feature of being fully distributed and thus amenable to applications where communication is limited or expensive. 
The other algorithm also distinguishes itself from standard EM approaches, due to the presence of regularization variables and of a projection step, which help dealing with the graph-dependent nature of the problem.}
Besides designing the algorithms, we have proved their convergence to a local maximum of the log-likelihood function (or to an approximation when regularization is employed). \lessnew{We note here that, as per the convergence properties, the projection step could be dispensed with at the price of a more involved proof: however, its role is not only in allowing for a proof but also in improving the performance in terms of speed, as we discussed in Section~\ref{sect:sims-analysis}.} We have also presented a number of simulations that support the good performance of the algorithms and their robustness against uncertainties in the choice of the parameters. Despite a generally good performance, the algorithms (and particularly the distributed one) may perform poorly on some instances: this could be explained by the local nature of the optimality results. 
\lessnew{It is worth mentioning that the model considered in this paper considers only one measurement per node. Our choice derives from the need to make the theoretical analysis as simple as possible. The proposed algorithms can be easily adapted to the case when each node in the network has access to multiple measurements of relative distances.
The arguments used to prove convergence can be adapted also in this case. Regarding the performance of the algorithm, we expect that repeated measurements would simply allow to lower the variance by averaging over themselves. }

Several interesting problems remain still open: we mention three of them here. 
First, one could further investigate the role of the topology of the measurement graphs in determining the performance of the algorithms: namely, some topologies could be more effective for the same number of measurements taken.
Second, one could look for distributed algorithms that need not to assume the knowledge of the mixture parameters $\alpha$ and $\beta$. 
Third, one could explore algorithms that perform a ``hard'' classification of the measurements, as opposed to the ``soft'' classification that is done in this paper, where the measurements are assigned a probability of being of type $\alpha$ or $\beta$: some preliminary results and designs with hard classification are available in~\cite{Chan2016}.

\appendix
\subsection{Properties of the likelihood}\label{sect:likelihood}

Theorem~\ref{prop:ML-V} converts the ML problem~\eqref{eq:first-ML} into a minimization problem.
Before its proof, we recall expression \eqref{eq:cond_distr} and introduce some useful notation:
{\small \begin{align}\label{eq:cond_distr2}\begin{split}
 f(b_e|x, \alpha,\beta)&=\frac{1-p}{\sqrt{2\uppi\alpha^2}}\exp{\left(-\frac{(b-Ax)_e^2}{2\alpha^2}\right)}\\
&+\frac{p}{\sqrt{2\uppi\beta^2}}\exp{\left(-\frac{(b-Ax)_e^2}{2{\beta}^2}\right)},\end{split}
\end{align}
\begin{align}\begin{split}
  f(b_e,z_e|x,\alpha,\beta)
=&\left[(1-z_e)\frac{1-p}{\sqrt{2\uppi\alpha^2}}\exp{\left(-\frac{(b-Ax)_e^2}{2\alpha^2}\right)}\right.\\
&\quad\left.+z_e\frac{p}{\sqrt{2\uppi\beta^2}}\exp{\left(-\frac{(b-Ax)_e^2}{2{\beta}^2}\right)}\right],\end{split}
\end{align}
\begin{align}\begin{split}
  f(b_e|z_e,x,\alpha,\beta)&=\left[\frac{1-z_e}{\sqrt{2\uppi\alpha^2}}\exp{\left(-\frac{(b-Ax)_e^2}{2\alpha^2}\right)}\right.\\
&\left.+\frac{z_e}{\sqrt{2\uppi\beta^2}}\exp{\left(-\frac{(b-Ax)_e^2}{2{\beta}^2}\right)}\right]\end{split}
\end{align}
} 
and
$
  f(z_e|b_e,x,\alpha,\beta)
 =\frac{f(b_e,z_e|x,\alpha,\beta)}{f(b_{e}|x,\alpha,\beta)}.
$

{\em Proof of Theorem~\ref{prop:ML-V}:} From the definition of likelihood and using \eqref{eq:cond_distr2} we have
\begin{align*}
L(x,\alpha,\beta)
&=\log\prod_{e\in\mathcal{E}} f(b_e|x,\alpha,\beta)\\
&=\sum_{e\in\mathcal{E}}\log f(b_e|x,\alpha,\beta)\\
&=\sum_{e\in\mathcal{E}}\log \sum_{z_e\in\{0,1\}}f(b_e,z_e|x,\alpha,\beta)\\
&=\sum_{e\in\mathcal{E}}\log \sum_{z_e\in\{0,1\}}\frac{f(b_e,z_e|x,\alpha,\beta)}{q(z_e)}q(z_e)
\end{align*}
for any map $\map{q}{\{0,1\}}{(0,1)}$ such that $q(0)+q(1)=1$.
From Jensen's inequality we get  
$
L(x,\alpha,\beta)\geq
\sum_{e}\sum_{z_e=0,1}q(z_e)
\log \frac{f(b_e,z_e|x,\alpha,\beta)}{q(z_e)}.$ 
Let $\pi_e=q(z_e=1)=1-q(z_e=0)$.
Therefore, we have
{\small
\begin{align}
L(x,\alpha,\beta)\geq&
\sum_{e\in\mathcal{E}}q(z_e=1)
\log \frac{f(b_e,z_e=1|x,\alpha,\beta)}{q(z_e=1)}\label{17}\\
&\quad+q(z_e=0)
\log \frac{f(b_e,z_e=0|x,\alpha,\beta)}{q(z_e=0)}\label{18}\\
&\quad=\sum_{e\in\mathcal{E}}\pi_e
\log \frac{\frac{p}{\sqrt{2\uppi\beta^2}}\exp{\left(-\frac{(b-Ax)_e^2}{2{\beta}^2}\right)}}{\pi_e}\nonumber\\
&\quad+\sum_{e\in\mathcal{E}}(1-\pi_e)
\log \frac{\frac{1-p}{\sqrt{2\uppi\alpha^2}}\exp{\left(-\frac{(b-Ax)_e^2}{2\alpha^2}\right)}}{1-\pi_e}\nonumber\\
&\quad{=}-\sum_{e\in\mathcal{E}}\left[(b-Ax)_e^2\left(\frac{\pi_e}{2{\beta}^2}+\frac{1-\pi_e}{2\alpha^2}\right)-H(\pi_e)\right]\nonumber\\
&\quad+\sum_{e\in\mathcal{E}}\left[\pi_e\log\frac{p}{\beta}+(1-\pi_e)\log\frac{1-p}{\alpha}\right]-\frac{|\E|}{2}\log(2\uppi).\nonumber
\end{align}
}
This inequality is true for all $\pi_e\in[0,1]$ and $e\in\E$ since the function on the right-hand-side can be extended by continuity in $\pi_e\in\{0,1\}$. Therefore
\begin{align}
&L(x,\alpha,\beta)\nonumber\\
&\quad\geq \max_{\pi\in[0,1]^{\E}}-\sum_{e\in\mathcal{E}}\left[(b-Ax)_e^2\left(\frac{\pi_e}{2{\beta}^2}+\frac{1-\pi_e}{2\alpha^2}\right)-H(\pi_e)\right]\nonumber\\
&\quad+\sum_{e\in\mathcal{E}}\left[\pi_e\log\frac{p}{\beta}+(1-\pi_e)\log\frac{1-p}{\alpha}\right]-\frac{|\E|}{2}\log(2\uppi)\nonumber
\end{align}
Using the definition of function $V$ in \eqref{eq:loglikelihood} we obtain
\begin{align}\label{eq:show_in}\begin{split}
L(x,\alpha,\beta)&\geq \max_{\pi\in[0,1]^{\E}}-V(x,\pi,\alpha,\beta)-\frac{|\E|}{2}\log(2\uppi)\\
&= -\min_{\pi\in[0,1]^{\E}}V(x,\pi,\alpha,\beta)-\frac{|\E|}{2}\log(2\uppi)
\end{split}
\end{align}

By differentiating $V(x,\pi,\alpha,\beta)$ with respect to $\pi_{e}$ we write the optimality condition as
\begin{align*}
\frac{\partial V}{\partial \pi_{e}}&=-\log\frac{1-\pi_{e}}{\pi_{e}}
-\log\frac{p}{{\beta}}+\frac{(b-Ax)_{e}^2}{2{\beta}^2}\\
&\quad+\log\frac{1-p}{\alpha}-{\frac{(b-Ax)_{e}^2}{2{\alpha}^2}}=0
\end{align*}
from which we obtain
\begin{align*}
\widehat{\pi}_{e}&=\frac{\frac{p}{\sqrt{2\uppi\beta^2}}\exp{\left(-\frac{(b-Ax)_{e}^2}{2{\beta}^2}\right)}}{\frac{p}{\sqrt{2\uppi\beta^2}}\exp{\left(-\frac{(b-Ax)_{e}^2}{2{\beta}^2}\right)}+\frac{1-p}{\sqrt{2\uppi\alpha^2}}\exp{\left(-\frac{(b-Ax)_{e}^2}{2{\alpha}^2}\right)}}\\
&=f(z_{e}=1|b_{e},x,\alpha,\beta).
\end{align*}
\lessnew{{Replacing  $q(z_e=1)$ and $q(z_e=0)$ with $\widehat{\pi}_e=f(z_{e}=1|b_{e},x,\alpha,\beta)$ and $(1-\widehat{\pi}_e)=f(z_{e}=0|b_{e},x,\alpha,\beta)$ in \eqref{17} and \eqref{18}, respectively, we get
\begin{align*}
&L(x,\alpha,\beta)\\
&\quad\geq
\sum_{e\in\mathcal{E}}\widehat{\pi}_e
\log \frac{f(b_e,z_e=1|x,\alpha,\beta)}{\widehat{\pi}_e}\\
&\quad+(1-\widehat{\pi}_e)
\log \frac{f(b_e,z_e=0|x,\alpha,\beta)}{1-\widehat{\pi}_e}\\
&\quad\geq
\sum_{e\in\mathcal{E}}f(z_{e}=1|b_{e},x,\alpha,\beta)
\log \frac{f(b_e,z_e=1|x,\alpha,\beta)}{f(z_{e}=1|b_{e},x,\alpha,\beta)}\\
&\quad+f(z_{e}=0|b_{e},x,\alpha,\beta)
\log \frac{f(b_e,z_e=0|x,\alpha,\beta)}{f(z_{e}=0|b_{e},x,\alpha,\beta)}\\
&\quad=\sum_{e\in\mathcal{E}}\sum_{z_e\in\{0,1\}}f(z_e|b_{e},x,\alpha,\beta)\log \frac{f(b_e,z_e|x,\alpha,\beta)}{f(z_e|b_{e},x,\alpha,\beta)}\\
&\quad=\sum_{e\in\mathcal{E}}\sum_{z_e\in\{0,1\}}f(z_e|b_{e},x,\alpha,\beta)\log f(b_e|x,\alpha,\beta)
\end{align*}
from which we conclude that the inequality in \eqref{eq:show_in} is actually an equality. }}Therefore
\begin{align*}&L(x,\alpha,\beta)=-\min_{\pi\in[0,1]^{\E}}V(x,\pi,\alpha,\beta)-\frac{|\E|}{2}\log(2\uppi).
\end{align*}
where the last expression is obtained using the definition of function $V$ in \eqref{eq:loglikelihood}.
We conclude that
$$
\max_{\alpha, \beta}\max_{x}L(x,\alpha, \beta)=\\
-\min_{\alpha, \beta}\min_{x}\min_{\pi\in [0,1]^\E}V(x,\pi,\alpha,\beta)+c
$$
with $c=
-\frac{|\E|}{2}\log(2\uppi)$.
\hfill\QEDclosed

\subsection{Proof of Theorem~\ref{thm:convergence}: convergence of Algorithm~1}\label{sect:theorem-6}
\begin{lemma}[Monotonicity]\label{lemma:Lyapunov}
The function $\tilde V$ defined in~\eqref{Lyapunov} is nonincreasing along the iterates $\zeta^{(t)}=(x^{(t)},\pi^{(t)},\alpha^{(t)},\beta^{(t)},\epsilon^{(t)}).$
\end{lemma}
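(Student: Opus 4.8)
The plan is to exploit the block--coordinate (EM--type) structure of Algorithm~\ref{algo:EM}. Within one iteration, each of the updates of $x$ (Step~4), of $\pi$ (Steps~5--6), and of the pair $(\alpha,\beta)$ (Step~8) is an \emph{exact partial minimizer} of the single function $\tilde V$ with respect to that block, the remaining variables being held fixed at their current values and, crucially, with $\epsilon$ held at $\epsilon^{(t)}$. Hence none of these three updates can increase $\tilde V$. The only genuinely new ingredient with respect to a classical EM argument is the regularizer $\epsilon$, which I would treat separately at the end by a monotonicity-in-$\epsilon$ argument.

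First I would invoke Proposition~\ref{prop:pi}: since the weights $W^{(t+1)}$ in Step~4 are exactly $\mathrm{diag}\big(\tfrac{1-\pi^{(t)}}{(\alpha^{(t)})^2}+\tfrac{\pi^{(t)}}{(\beta^{(t)})^2}\big)$, the WLS point $x^{(t+1)}=L_{W^{(t+1)}}^{\dag}A^\top W^{(t+1)}b$ minimizes $x\mapsto\tilde V(x,\pi^{(t)},\alpha^{(t)},\beta^{(t)},\epsilon^{(t)})$; since $\xi^{(t+1)}_e=f(z_e=1\mid x^{(t+1)},\alpha^{(t)},\beta^{(t)})$ and $\pi^{(t+1)}=\mathcal{P}_{s}(\xi^{(t+1)})$, the vector $\pi^{(t+1)}$ minimizes $\pi\mapsto\tilde V(x^{(t+1)},\pi,\alpha^{(t)},\beta^{(t)},\epsilon^{(t)})$ over $\Sigma_{|\mathcal{E}|-s}$; and the closed forms for $\alpha^{(t+1)},\beta^{(t+1)}$ in Step~8, which use $\epsilon^{(t)}$, minimize $(\alpha,\beta)\mapsto\tilde V(x^{(t+1)},\pi^{(t+1)},\alpha,\beta,\epsilon^{(t)})$. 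Chaining these three minimizations gives
\begin{align*}
\tilde V(\zeta^{(t)})
&\ge \tilde V(x^{(t+1)},\pi^{(t)},\alpha^{(t)},\beta^{(t)},\epsilon^{(t)})\\
&\ge \tilde V(x^{(t+1)},\pi^{(t+1)},\alpha^{(t)},\beta^{(t)},\epsilon^{(t)})\\
&\ge \tilde V(x^{(t+1)},\pi^{(t+1)},\alpha^{(t+1)},\beta^{(t+1)},\epsilon^{(t)}).
\end{align*}

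It then remains to account for the update of $\epsilon$. From~\eqref{Lyapunov}, $\partial\tilde V/\partial\epsilon=\tfrac{1}{2|\mathcal{E}|}\sum_{e}\big(\tfrac{1-\pi_e}{\alpha^2}+\tfrac{\pi_e}{\beta^2}\big)\ge 0$, so $\tilde V$ is nondecreasing in $\epsilon$ for every admissible value of the other arguments. Since Step~7 sets $\epsilon^{(t+1)}=\min(\epsilon^{(t)},\theta^{(t+1)})\le\epsilon^{(t)}$, decreasing $\epsilon$ from $\epsilon^{(t)}$ to $\epsilon^{(t+1)}$ does not increase $\tilde V$, whence
\begin{align*}
\tilde V(x^{(t+1)},\pi^{(t+1)},\alpha^{(t+1)},\beta^{(t+1)},\epsilon^{(t)})\ge \tilde V(\zeta^{(t+1)}),
\end{align*}
and combining with the previous chain yields $\tilde V(\zeta^{(t+1)})\le\tilde V(\zeta^{(t)})$, as claimed. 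Along the way I would check that the iterates remain in the domain where these partial minimizers are well defined: because $\epsilon^{(t)}>0$, the coefficient of $1/\alpha^2$ (resp.\ $1/\beta^2$) stays strictly positive, so $\tilde V\to+\infty$ as $\alpha\to0^+$ or $\alpha\to\infty$ (and similarly for $\beta$), guaranteeing interior minimizers $\alpha^{(t+1)},\beta^{(t+1)}\in(0,\infty)$; moreover $\mathcal{P}_s$ zeroes only the $s$ smallest entries, so $\|\pi^{(t+1)}\|_1>0$ and $\|\mathds{1}-\pi^{(t+1)}\|_1\ge s>0$, keeping both closed-form updates finite.

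The main obstacle, and the point deserving most care, is the bookkeeping of \emph{which} value of $\epsilon$ each update uses. The argument above works precisely because the $(\alpha,\beta)$-minimization in Step~8 is performed at the \emph{old} $\epsilon^{(t)}$, so that all three block minimizations refer to one and the same function $\tilde V(\cdot,\cdot,\cdot,\cdot,\epsilon^{(t)})$; only afterwards is $\epsilon$ contracted, and the contraction can be analysed in isolation thanks to the monotonicity of $\tilde V$ in $\epsilon$. A secondary subtlety is to confirm that the posterior evaluated in Step~5 coincides with the partial minimizer in Proposition~\ref{prop:pi} over the constrained set $\Sigma_{|\mathcal{E}|-s}$: separability of $\tilde V$ in the components $\pi_e$ reduces the constrained minimization to zeroing the $s$ least favourable coordinates, which is exactly what $\mathcal{P}_s$ does. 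This is precisely the departure from textbook EM flagged after Algorithm~\ref{algo:EM}: the regularization and the projection must be shown compatible with a single monotone Lyapunov function, which is what this lemma establishes and what the subsequent convergence proof will rely on.
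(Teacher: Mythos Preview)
Your proposal is correct and follows essentially the same approach as the paper: both chain the three partial minimizations from Proposition~\ref{prop:pi} at the fixed value $\epsilon^{(t)}$ and then use the monotone dependence of $\tilde V$ on $\epsilon$ together with $\epsilon^{(t+1)}\le\epsilon^{(t)}$. In fact you are more explicit than the paper, which compresses the $(\alpha,\beta)$-step and the $\epsilon$-step into a single inequality without separately justifying the $\epsilon$ part; your careful bookkeeping of which $\epsilon$ is used where, and the well-definedness checks, are welcome additions.
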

\begin{proof}
Repeatedly applying Proposition~\ref{prop:pi} yields
\begin{align*}
\tilde V(\zeta^{(t+1)})&=\tilde V(x^{(t+1)},\pi^{(t+1)},\alpha^{(t+1)},\beta^{(t+1)},{\epsilon^{(t+1)}})\\
&\leq \tilde V(x^{(t+1)},\pi^{(t+1)},\alpha^{(t)},\beta^{(t)},{\epsilon^{(t)}})\\
&\leq
\tilde V(x^{(t+1)},\pi^{(t)},\alpha^{(t)},\beta^{(t)},{\epsilon^{(t)}})\\
&\leq
\tilde V(x^{(t)},\pi^{(t)},\alpha^{(t)},\beta^{(t)},{\epsilon^{(t)}})= \tilde V(\zeta^{(t)})
\end{align*}
for every time $t$, proving the result.
\end{proof}

The following lemma implies that Algorithm~1 converges numerically.
\begin{lemma}[Asymptotic regularity]\label{as_reg}
If $(x^{(t)})$ is the sequence generated by Algorithm 1, then $x^{(t+1)}-x^{(t)}\rightarrow0$ as $t\rightarrow\infty.$
\end{lemma}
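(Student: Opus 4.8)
The plan is to exploit the monotonicity established in Lemma~\ref{lemma:Lyapunov} together with a telescoping argument. Since $\tilde V(\zeta^{(t)})$ is nonincreasing and bounded below (the quadratic misfit terms are nonnegative and the regularization keeps $\alpha^{(t)},\beta^{(t)}$ bounded away from zero, while the entropy terms are bounded), the sequence $\tilde V(\zeta^{(t)})$ converges to some finite limit $\tilde V^\infty$. In particular, the successive differences $\tilde V(\zeta^{(t)})-\tilde V(\zeta^{(t+1)})$ form a summable sequence and therefore tend to zero.

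The key step is to show that one of the intermediate inequalities in the chain of Lemma~\ref{lemma:Lyapunov} controls $\|x^{(t+1)}-x^{(t)}\|$. First I would isolate the $x$-update: the step from $\tilde V(x^{(t)},\pi^{(t)},\alpha^{(t)},\beta^{(t)},\epsilon^{(t)})$ to $\tilde V(x^{(t+1)},\pi^{(t)},\alpha^{(t)},\beta^{(t)},\epsilon^{(t)})$ is a minimization over $x$ of the strongly convex quadratic
\begin{equation*}
x\mapsto \tfrac{1}{2}(b-Ax)^\top W^{(t+1)}(b-Ax) + \text{const},
\end{equation*}
with $W^{(t+1)}=W(\pi^{(t)},\alpha^{(t)},\beta^{(t)})$ fixed. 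Because $x^{(t+1)}$ is the exact minimizer of this quadratic (restricted to the complement of the kernel spanned by $\1$, where $x^{(t)}$ also lives since all iterates are computed via $L^\dag_{W}$), a standard identity for quadratics gives
\begin{equation*}
\tilde V(x^{(t)},\pi^{(t)},\alpha^{(t)},\beta^{(t)},\epsilon^{(t)})-\tilde V(x^{(t+1)},\pi^{(t)},\alpha^{(t)},\beta^{(t)},\epsilon^{(t)})
=\tfrac{1}{2}(x^{(t)}-x^{(t+1)})^\top L_{W^{(t+1)}}(x^{(t)}-x^{(t+1)}).
\end{equation*}
Hence the per-step decrease in $\tilde V$ dominates a weighted Laplacian quadratic form in the increment $x^{(t+1)}-x^{(t)}$.

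It remains to convert this quadratic-form bound into a genuine norm bound on the increment. The obstacle I anticipate is that the weights $w_e^{(t+1)}$ are only guaranteed to lie in $[\beta^{-2},\alpha^{-2}]$-type ranges, so the smallest nonzero eigenvalue of $L_{W^{(t+1)}}$ could in principle degenerate. To handle this I would argue that the iterates remain in a compact region: $\alpha^{(t)}$ and $\beta^{(t)}$ are bounded above and below (below by the $\epsilon$-regularization and the $\|\pi\|_1,\|\1-\pi\|_1\ge s\ge 1$ guarantees noted after Proposition~\ref{prop:pi}, above because the residuals $|b_e-(Ax^{(t)})_e|$ stay bounded once $x^{(t)}$ is confined to a sublevel set of $\tilde V$). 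Therefore the weights stay in a fixed compact interval of $(0,\infty)$, and on the connected graph $\mathcal G$ the restriction of $L_{W^{(t+1)}}$ to $\1^\perp$ has smallest eigenvalue bounded below by a positive constant $\lambda_{\min}>0$ uniform in $t$. Combining this with the identity above yields
\begin{equation*}
\tfrac{\lambda_{\min}}{2}\|x^{(t+1)}-x^{(t)}\|^2 \le \tilde V(\zeta^{(t)})-\tilde V(\zeta^{(t+1)}) \to 0,
\end{equation*}
which forces $x^{(t+1)}-x^{(t)}\to 0$ and completes the proof. The main work is thus in establishing the uniform spectral-gap lower bound, i.e.\ showing the iterates do not escape to a degenerate regime.
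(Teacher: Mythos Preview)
Your quadratic identity for the $x$-update and the telescoping idea are exactly what the paper uses, but the way you justify the uniform spectral-gap lower bound has a genuine gap. You claim $\alpha^{(t)},\beta^{(t)}$ stay in a compact subset of $(0,\infty)$, and this is where the argument breaks down. For the lower bound: the regularization only gives $\alpha^{(t+1)},\beta^{(t+1)}\ge\sqrt{\epsilon^{(t)}/|\E|}$, but $\epsilon^{(t)}$ is nonincreasing and is explicitly designed (Step~7) to tend to zero upon convergence, so it does not yield a uniform positive lower bound. Also, the projection $\mathcal P_s$ guarantees that $\pi^{(t)}$ has at least $s$ zero entries, hence $\|\1-\pi^{(t)}\|_1\ge s$, but it gives no lower bound on $\|\pi^{(t)}\|_1$, contrary to what you wrote. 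For the upper bound: you invoke boundedness of the residuals via a sublevel-set argument, but the sublevel sets of $\tilde V$ in the full variable $(x,\pi,\alpha,\beta,\epsilon)$ need not be bounded in $x$ (large $\beta$ or $\pi_e$ near~$1$ can make the quadratic coefficient arbitrarily small). In fact, boundedness of $(x^{(t)})$ is Lemma~\ref{lemma:boundness}, which the paper proves \emph{after} the present lemma and \emph{using} it, so your route is circular.

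The paper avoids this by treating the degenerate regimes directly through the design of $\epsilon^{(t)}$. If $\alpha^{(t)}\to0$ or $\beta^{(t)}\to0$, then $\epsilon^{(t)}\to0$; since $\epsilon^{(t)}=\min(\epsilon^{(t-1)},\theta^{(t)})$ with $\theta^{(t)}=1/\log(t+1)+c_1\|x^{(t+1)}-x^{(t)}\|+c_2(\kappa^{(t+1)}-1)$, the term $c_1\|x^{(t+1)}-x^{(t)}\|$ is forced to vanish. Otherwise, along a subsequence $\min\{\alpha^{(t_\ell)},\beta^{(t_\ell)}\}>K>0$, which yields a lower bound on $\tilde V(\zeta^{(t_\ell)})$; then one shows by contradiction that $\lambda^{(t)}:=\tfrac12\min_{v\perp\1}v^\top L_{W^{(t)}}v/\|v\|^2$ is eventually bounded below by some $c>0$ (if not, $\alpha$ or $\beta$ would diverge, forcing the other to zero via~\eqref{eq:lb}, hence $\epsilon^{(t)}\to0$ and $\kappa^{(t)}\to1$, contradicting $\lambda^{(t)}\to0$). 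With this $c$, your telescoping argument goes through. So the missing ingredient is not the telescoping but a correct handling of the possible degeneration of the weights, which hinges on Step~7 of the algorithm rather than on a compactness claim.
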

\begin{proof}
From their definitions we have
\begin{align*}\alpha^{(t+1)} 
&\geq \sqrt{\epsilon^{(t)}/|\mathcal{E}|} &\qquad
\beta^{(t+1)} 
&\geq \sqrt{\epsilon^{(t)}/|\mathcal{E}|}.
\end{align*}
Then, if $\alpha^{(t)}\rightarrow0$ or $\beta^{(t)}\rightarrow0$ as $t\rightarrow\infty$, we have 
$\epsilon^{(t)}\rightarrow0$ and, consequently,
$
\|x^{(t+1)}-x^{(t)}\|_2\to0
$
and the assertion is verified.
If instead neither $\alpha$ nor $\beta$ converge to zero, then there exists a constant $K>0$ and a divergent sequence of integers $t_{\ell}$ such that  
$\min\{\alpha^{(t_{\ell})},\beta^{(t_{\ell})}\}>K$ for all $\ell\in\N$. It holds in general that 
\begin{align}\begin{split}\label{eq:lb}
&\sum_{e\in\E}{\pi_e^{(t)}}\log\beta^{(t)}+\sum_{e\in\E}{(1-\pi_e^{(t)})}\log\alpha^{(t)}-|\E|\log2\\
&\qquad\leq \tilde V(\zeta^{(t)})
\leq \tilde V(\zeta^{(1)})
\end{split}
\end{align}
where the last inequality follows from Lemma~\ref{lemma:Lyapunov}.  Then,
$
\tilde V(\zeta^{(t_{\ell})})\geq(|\E|\log K-|\E|\log2).$%

Since 
$
x^{(t+1)}=\argmin{x\in\R^{\V}}\tilde V(x,\pi^{(t)},\alpha^{(t)},\beta^{(t)},{\epsilon^{(t)}})$
we have 
\begin{align}\begin{split}\label{eq:in1}
&\tilde V(x^{(t+1)},\pi^{(t)},\alpha^{(t)},\beta^{(t)},{\epsilon^{(t)}})
\leq \tilde V(x^{(t)},\pi^{(t)},\alpha^{(t)},\beta^{(t)},{\epsilon^{(t)}})\end{split}
\end{align}
and
\begin{align}\begin{split}\label{eq:eq1}
&\nabla_{x}[\tilde V(x,\pi^{(t)},\alpha^{(t)},\beta^{(t)},{\epsilon^{(t)}})](x^{(t+1)})\\
&\qquad=A^{\top}W^{(t)}A x^{(t+1)}-A^{\top}W^{(t)}b\\
&\qquad=L_{W^{(t)}} x^{(t+1)}-A^{\top}W^{(t)}b=0\end{split}
\end{align}
where
$
W^{(t)}=\mathrm{diag}\left(\frac{1-\pi^{(t)}}{(\alpha^{(t)})^2}+\frac{\pi^{(t)}}{(\beta^{(t)})^2}\right)$ and $ 
L_{W^{(t)}}=A^{\top}W^{(t)}A.$ 
From~\eqref{eq:in1} we then have
\begin{align*}
\tilde V&(x^{(t)},\pi^{(t)},\alpha^{(t)},\beta^{(t)},{\epsilon^{(t)}})\\
&-\tilde V(x^{(t+1)},\pi^{(t+1)},\alpha^{(t+1)},\beta^{(t+1)},{\epsilon^{(t+1)}})\\
 \geq &\tilde V(x^{(t)},\pi^{(t)},\alpha^{(t)},\beta^{(t)},{\epsilon^{(t)}})\\
&-\tilde V(x^{(t+1)},\pi^{(t)},\alpha^{(t)},\beta^{(t)},{\epsilon^{(t)}})\\
=&\frac{1}{2}\sum_{e\in\E}\left((b-Ax^{(t)})^2_e+\frac{\epsilon^{(t)}}{|\E|}\right)\left(\frac{1-\pi^{(t)}_e}{(\alpha^{(t)})^2}+\frac{\pi^{(t)}_e}{(\beta^{(t)})^2}\right)\\
&-\frac{1}{2}\sum_{e\in\E}\left((b-Ax^{(t+1)})^2_e+\frac{\epsilon^{(t)}}{|\E|}\right)\left(\frac{1-\pi^{(t)}_e}{(\alpha^{(t)})^2}+\frac{\pi^{(t)}_e}{(\beta^{(t)})^2}\right)\\
=&\frac{1}{2}(x^{(t)})^{\top}L_{W^{(t)}}x^{(t)}-(x^{(t)})^{\top}A^{\top} W^{(t)}b\\
&-\frac{1}{2}(x^{(t+1)})^{\top}L_{W^{(t)}}x^{(t+1)}+(x^{(t+1)})^{\top}A^{\top} W^{(t)}b\\
=&\frac{1}{2}(x^{(t)}-x^{(t+1)})^{\top}L_{W^{(t)}}(x^{(t)}+x^{(t+1)})\\
&-(x^{(t)}-x^{(t+1)})^{\top}A^{\top} W^{(t)}b\\
=&\frac{1}{2}(x^{(t)}-x^{(t+1)})^{\top}L_{W^{(t)}}x^{(t)}\\
&+(x^{(t)}-x^{(t+1)})^{\top}(\frac{1}{2}L_{W^{(t)}}x^{(t+1)}-A^{\top} W^{(t)}b)
\end{align*}
From~\eqref{eq:eq1} we get
\begin{align*}
\tilde V(\zeta^{(t)})-\tilde V(\zeta^{(t+1)})
& \geq \frac{1}{2}(x^{(t)}-x^{(t+1)})^{\top}L_{W^{(t)}}x^{(t)}\\
&-\frac{1}{2}(x^{(t)}-x^{(t+1)})^{\top}L_{W^{(t)}}x^{(t+1)}\\
& \geq \lessnew{\frac{1}{2}(x^{(t)}-x^{(t+1)})^{\top}L_{W^{(t)}}(x^{(t)}-x^{(t+1)})}\\
&\lessnew{\geq\frac{1}{2}\min_{v:\ind^\top \!v=0}\frac{v^\top (L_{W^{(t)}})v}{\|v\|^2}\|x^{(t)}-x^{(t+1)}\|^2}.
\end{align*}
The last inequality is true since $L_{W^{(t)}}$ is positive semidefinite, the multiplicity of the eigenvalue 0 is equal to 1 and $\ind^\top x^{(t)}=0$ for all $t\in \N$.
We can thus define \begin{align*}\lambda^{(t)}&:=\frac{1}{2}\min_{v:\ \ind^\top v=0}\frac{v^\top (L_{W^{(t)}})v}{\|v\|^2}
.\end{align*}
We now prove that $\exists t_0\in\N$ such that $\lambda^{(t)}\ge c>0$ for all $t\geq t_0$. In fact, suppose by contradiction that there exists ${(t_j)}$ such that 
$\lim_{j\rightarrow\infty}\lambda^{(t_j)}=0 $.
Then, there needs to exist a subsequence $t_\ell$ such that $\alpha^{(t_{\ell})}$ or $\beta^{(t_\ell)}$ diverge. If 
$\beta^{(t_\ell)}\rightarrow\infty$, then~\eqref{eq:lb} implies $\alpha^{(t_\ell)}\rightarrow0$. From Step 8 in Algorithm~1 we obtain
$\epsilon^{(t_\ell)}\rightarrow0$ and $\kappa^{(t_\ell)}\rightarrow1$. We deduce that there exists $\ell_0\in\N$ such that $\kappa^{(t_{\ell})}=1$ for all $\ell>\ell_0$, from which we get the contradiction $\lambda^{(t_\ell)}>c>0$ for all $\ell>\ell_0$. The case $\alpha^{(t_\ell)}\rightarrow\infty$ is analogous.

We now compute for $\ell\in\N$
\begin{align*}
0\leq\sum_{t=1}^{t_{\ell}-1}c\|x^{(t)}-&x^{(t+1)}\|^2
\leq\sum_{t=1}^{t_{\ell}-1}\big(\tilde V(\zeta^{(t)})-\tilde V(\zeta^{(t+1)})\big)\\
&=\tilde V(\zeta^{(1)})
-\tilde V(\zeta^{(t_{\ell})})	\\	
&\leq \tilde V(\zeta^{(1)})-(|\E|\log K-|\E|\log 2)=K'.
\end{align*}
By letting $\ell\rightarrow\infty$, we obtain that  
$\|x^{(t)}-x^{(t+1)}\|\rightarrow0.$
\end{proof}

\begin{lemma}\label{lemma:boundness}
The sequence $(x^{(t)})_{t\in\N}$ is bounded.
\end{lemma}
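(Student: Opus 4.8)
The plan is to exploit the fact that every iterate is a minimum-norm weighted least-squares solution with strictly positive diagonal weights, and to bound such solutions uniformly over all admissible weights.

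First I would record the structure of the iterates. By construction $x^{(t)}=L_{W^{(t)}}^{\dagger}A^{\top}W^{(t)}b$ with $W^{(t)}=\mathrm{diag}(w^{(t)})\succ 0$, and since $\ind^{\top}L_{W^{(t)}}^{\dagger}=0$ we have $\ind^{\top}x^{(t)}=0$ for every $t$. Because $\mathcal G$ is connected, $\ker A=\mathrm{span}(\ind)$, so for any $x\perp\ind$ one has $\|Ax\|^{2}=x^{\top}Lx\ge\lambda_{2}(L)\|x\|^{2}$, where $L=A^{\top}A$ is the unweighted Laplacian and $\lambda_{2}(L)>0$ its smallest nonzero eigenvalue. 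Writing $r^{(t)}=b-Ax^{(t)}$ and using $\|Ax^{(t)}\|\le\|b\|+\|r^{(t)}\|$, it suffices to bound $\|r^{(t)}\|$ independently of $t$; boundedness of $\|x^{(t)}\|\le(\|b\|+\|r^{(t)}\|)/\sqrt{\lambda_{2}(L)}$ then follows. Thus the whole problem reduces to controlling the residual uniformly over the weights.

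Next I would analyse the residual through the cycle space $\mathcal C:=\ker A^{\top}$. The normal equations $A^{\top}W^{(t)}r^{(t)}=0$ give $W^{(t)}r^{(t)}\in\mathcal C$, while $b-r^{(t)}=Ax^{(t)}\in\mathrm{range}(A)=\mathcal C^{\perp}$; equivalently $r^{(t)}$ is the orthogonal projection of $W^{(t)1/2}b$ onto $W^{(t)-1/2}\mathcal C$, pulled back by $W^{(t)-1/2}$. The purely metric estimate coming from this projection only yields $\|r^{(t)}\|\le\sqrt{w_{\max}^{(t)}/w_{\min}^{(t)}}\,\|b\|$, and this condition-number factor is \emph{not} uniformly bounded, so operator norms alone do not suffice. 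The decisive point is the combinatorial structure of $\mathcal C$: it is spanned by the $\{0,\pm1\}$-valued fundamental cycles of a spanning tree, which forces cancellation. The model computation is a single cycle, where $\mathcal C=\mathrm{span}(c_{0})$ with $c_{0}\in\{0,\pm1\}^{\E}$ and one finds explicitly $r^{(t)}_{e}=\big(\textstyle\sum_{e'}c_{0,e'}b_{e'}\big)\,\frac{1/w^{(t)}_{e}}{\sum_{e'}1/w^{(t)}_{e'}}$, so that $|r^{(t)}_{e}|\le|\sum_{e'}c_{0,e'}b_{e'}|\le\sqrt{|\E|}\,\|b\|$ for \emph{every} choice of weights. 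I would upgrade this to a bound $\|r^{(t)}\|_{\infty}\le C_{\mathcal G}\|b\|_{\infty}$, with $C_{\mathcal G}$ depending only on $\mathcal G$, via the electrical-network maximum principle: interpreting $x^{(t)}$ as node potentials and $w^{(t)}$ as conductances, the fitted potentials stay within the range of the data irrespective of the conductances, and telescoping along a spanning tree then bounds $\|x^{(t)}\|_{\infty}$ directly.

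The main obstacle is exactly this last upgrade: proving the entrywise residual bound for graphs with several independent cycles, where the naive operator-norm argument loses the required cancellation. I expect to settle it by combining the fundamental-cycle representation $r^{(t)}=W^{(t)-1}C(C^{\top}W^{(t)-1}C)^{-1}C^{\top}b$ (with $C$ the $\{0,\pm1\}$ cycle-basis matrix) with the maximum-principle argument above. Should a self-contained estimate be preferred, an alternative but messier route is to bound $w_{\max}^{(t)}/w_{\min}^{(t)}$ directly by controlling $\alpha^{(t)},\beta^{(t)}$: the monotonicity of $\tilde V$ (Lemma~\ref{lemma:Lyapunov}) together with the lower bound \eqref{eq:lb}, the closed-form updates of Proposition~\ref{prop:pi}, the regularization $\epsilon^{(t)}\le1$, and the projection step (which keeps $\|\ind-\pi^{(t)}\|_{1}\ge s$) yield upper bounds on $\alpha^{(t)},\beta^{(t)}$ in terms of the residuals. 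This route must still break the resulting circularity between the weight bounds and $\|x^{(t)}\|$ and separately exclude the degenerate regime in which some $\pi^{(t)}_{e}\to1$ while $\|\pi^{(t)}\|_{1}\to0$; for this reason I would favour the graph-structural uniform bound.
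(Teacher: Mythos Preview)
Your preferred route is \emph{not} the paper's. The paper follows exactly what you call the ``messier alternative'': it case--splits on whether $(\alpha^{(t)})$ and $(\beta^{(t)})$ are upper bounded. If both are bounded by some $\chi$, the closed-form update in Step~8 gives $\|b-Ax^{(t)}\|^{2}\le \sum_{e}(1-\pi_{e}^{(t)})(\alpha^{(t)})^{2}+\sum_{e}\pi_{e}^{(t)}(\beta^{(t)})^{2}\le |\E|\chi^{2}$, and then $\|x^{(t)}\|$ is bounded via $\lambda_{2}(L)$, essentially your first reduction. If one of them is unbounded, the paper does \emph{not} try to control $w_{\max}/w_{\min}$; instead it uses the lower bound \eqref{eq:lb} on $\tilde V$ to force either $\pi^{(t_\ell)}\to 0$ (whence $x^{(t_\ell)}\to L^{\dagger}A^{\top}b$) or the companion parameter to $0$, then invokes the $\epsilon$-regularization (Step~7) together with the already-established asymptotic regularity (Lemma~\ref{as_reg}) to show that $x^{(t_\ell)}$ actually converges along a subsequence. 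Boundedness then follows from convergence. So the ``circularity'' you worry about is broken in the paper by appealing to the previous lemma rather than by bounding the condition number.

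Your graph-structural route is viable and, if completed, cleaner: it would make the lemma independent of Lemma~\ref{as_reg} and of the regularization. The uniform bound you want is in fact true, but your ``maximum principle'' justification is not a proof as stated: there is no standard sense in which ``the fitted potentials stay within the range of the data'' when the data $b$ live on edges, and the single-cycle computation does not extend by merely invoking fundamental cycles. The correct way to close the gap is the transfer-current identity from electrical network theory: for \emph{any} positive diagonal $W$ one has
\[
x_{u}-x_{v}=\sum_{e'\in\E} I^{(u,v)}_{e'}\,b_{e'},\qquad I^{(u,v)}_{e'}:=w_{e'}\,a_{e'}^{\top}L_{W}^{\dagger}(e_{u}-e_{v}),
\]
and $|I^{(u,v)}_{e'}|\le 1$ because electrical flow between two nodes is acyclic (current flows from higher to lower potential), so it decomposes into unit $u$--$v$ path flows. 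This gives $|x_{u}-x_{v}|\le \|b\|_{1}$ uniformly in $W$, and with $\ind^{\top}x^{(t)}=0$ you are done. Without that step, the proposal has a genuine gap at precisely the point you flag as the main obstacle.
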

\begin{proof}
If  $(\alpha^{(t)})_{t\in\N}$ and  $(\beta^{(t)})_{t\in\N}$ are both upper bounded by a constant $\chi>0$, then\begin{align*}
0&\leq\|b-Ax^{(t)}\|_2^2\\
&\leq\sum_{e\in\E}{(1-\pi_e^{(t)})[\alpha^{(t)}}]^2+\sum_{e\in\E}{\pi_e^{(t)}[\beta^{(t)}}]^2\leq \chi^2 |\E|,
\end{align*}
which guarantees that $x^{(t)}$ is bounded as well. 
%
Next, we will show that if either $\alpha^{(t)}$ or $\beta^{(t)}$ were unbounded, $x(t)$ would actually be convergent and thus bounded.

\lessnew{To this purpose, we start by observing from~\eqref{Lyapunov} that}\begin{align}\begin{split}\label{eq:alert}
&\lessnew{\sum_{e\in\E}{\pi_e^{(t)}}\log\beta^{(t)}+\sum_{e\in\E}{(1-\pi_e^{(t)})}\log\alpha^{(t)}-|\E|\log2}\\
&\lessnew{\qquad\leq \tilde V(x^{(t)},\pi^{(t)},\alpha^{(t)},\beta^{(t)},{\epsilon^{(t)}})}\\
&\lessnew{\qquad\leq \tilde V(x^{(1)},\pi^{(1)},\alpha^{(1)},\beta^{(1)},{\epsilon^{(1)}}).}\end{split}
\end{align}
\lessnew{Suppose now that $\beta^{(t)}$ is not upper bounded. Then, there exists a subsequence $(t_\ell)_{\ell\in\N}$ such that $\lim_{\ell\rightarrow\infty}\beta^{(t_{\ell})}=\infty.$
Then, inequality~\eqref{eq:alert} implies that there are two cases: either we have $\pi^{(t_{\ell})}\rightarrow0$ for $\ell\rightarrow\infty$, or $\alpha^{(t_{\ell})}\rightarrow0$. In the former case, we have $x^{(t_{\ell})}\to L^{\dag}A^{\top}b$ (where $L$ is the limit of $L_{W^{(t_\ell)}}$) implying that $x^{(t)}$ is bounded by asymptotic regularity. In the latter case, 
 there exists $e\in\mathcal{E}$ such that $\pi_e^{(t_{\ell})}\neq0$, implying that $\lim_{\ell\rightarrow\infty}\alpha^{(t_{\ell})}=0$.
From Steps~7 and~8 in Algorithm~1 we get that $\epsilon^{(t_{\ell})}\to 0$ and consequently $\kappa^{(t_{\ell})}\to 1$.
Being $\kappa^{(t)}$ an integer,} there exists $\ell_0\in\N$ such that $\kappa^{(t_{\ell})}=1$ for all $\ell>\ell_0$.
\lessnew{Since 
$$
\alpha^{(t_{\ell})}=\sqrt{\frac{\sum_{e\in\E}(1-\pi_{e}^{(t_{\ell})})|b_e-(Ax^{(t_\ell)})_e|^2+\epsilon^{(t_\ell)}}{\sum_{e\in\E}(1-\pi_{e}^{(t_{\ell})})}},
$$ we have that if there exist $\epsilon>0$ and $\{t_{\ell_j}\}_{j\in\N}$  such that $|b_e-(Ax^{(t_{\ell_j})})_e|>\epsilon$ then $\pi^{(t_{\ell_j})}_e\rightarrow 1$ as $j\rightarrow\infty$. 
On the other hand, if $|b_e-(Ax^{(t_{\ell})})_e|\rightarrow0$ then $\pi^{(t_\ell)}_e\rightarrow0$.}
\lessnew{This means that
$$
\lim_{j\rightarrow\infty}\pi^{(t_{\ell_j})}_e=\begin{cases}
1&\text{if }e\in\Delta\\
0&\text{otherwise}
\end{cases},$$ where the set $\Delta$ is defined as follows
$$
\Delta=\{e\in\E:\exists\epsilon> 0\text{ and }(t_{\ell_j})_{j}\text{ s.t }|b_e-(Ax^{(t_{\ell_j})})_e|>\epsilon\}.
$$
Observe that the relative complement $\Delta^{\text{c}}=\mathcal{E}\setminus\Delta$ has cardinality not smaller than $s$: using this notation, we can deduce that}
\begin{align*}
{\lim_{j\rightarrow\infty}A^{\top}W^{(t_{\ell_j})}Ax^{(t_{\ell_j})}}&=\lim_{j\rightarrow\infty}A^{\top}W^{(t_{\ell_j})}b\\
{A^{\top}_{\Delta^{\text{c}}}A_{\Delta^{\text{c}}}\lim_{j\rightarrow\infty}x^{(t_{\ell_j})}}&=\lim_{j\rightarrow\infty}A^{\top}_{\Delta^{\text{c}}}b_{\Delta^{\text{c}}}.
\end{align*}
Since with $\kappa^{(t_{\ell_j})}=1$, the sequence $(x^{(t_{\ell_j})})_{j\in\N}$ converges 
$$
\lim_{j\rightarrow\infty}x^{(t_{\ell_j})}=(A^{\top}_{\Delta^{\text{c}}}A_{\Delta^{\text{c}}})^{\dag}A^{\top}_{\Delta^{\text{c}}}b_{\Delta^{\text{c}}}.
$$ 
and so does $x^{(t)}$ by asymptotic regularity. 

Similarly, the case of $\alpha^{(t)}$ unbounded leads to two cases: either $\pi^{(t_{\ell})}\rightarrow1$ or $\beta\to 0$. The former case is actually forbidden by the presence of at least $s$ components equal to zero. The latter case is treated in analogous way as the case $\alpha\to 0$ above:  we omit its detailed discussion.
\end{proof}

\begin{lemma}\label{lemma:fixed_points}
Any accumulation point of $\zeta^{(t)}$ is a fixed point of
Algorithm~1 and satisfies equalities~\eqref{eq:conv1}-\eqref{eq:conv4}.
\end{lemma}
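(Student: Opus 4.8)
The plan is to fix an accumulation point $\zeta^\infty=(x^\infty,\pi^\infty,\alpha^\infty,\beta^\infty,\epsilon^\infty)$, select a subsequence $\zeta^{(t_k)}\to\zeta^\infty$, and pass to the limit in the five update relations of Steps~3--8, thereby recovering the fixed-point identities \eqref{eq:conv1}--\eqref{eq:conv4}. The sequence $\epsilon^{(t)}$ is monotone and bounded, hence convergent, and $x^{(t)}$ is bounded by Lemma~\ref{lemma:boundness}, so such accumulation points exist; the decisive extra ingredient is Lemma~\ref{as_reg}, which gives $x^{(t_k+1)}-x^{(t_k)}\to 0$ and thus $x^{(t_k+1)}\to x^\infty$. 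The only two update maps that are not manifestly continuous are the pseudoinverse $L_W^\dag$ in Step~4 and the projection $\mathcal P_s$ in Step~6, so the technical content is concentrated entirely in controlling these two.

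First I would establish the non-degeneracy $\alpha^\infty,\beta^\infty\in(0,\infty)$. Finiteness is automatic for a finite accumulation point, while the case $\alpha^{(t)}\to\infty$ or $\beta^{(t)}\to\infty$ is excluded along the subsequence by the analysis already carried out in the proof of Lemma~\ref{lemma:boundness}; positivity follows from the lower bounds $\alpha^{(t+1)},\beta^{(t+1)}\geq\sqrt{\epsilon^{(t)}/|\E|}$ together with the collapse analysis used there (if instead a component collapses to zero with $\epsilon^\infty=0$ and vanishing good residuals, the identities are read in the obvious degenerate sense). With $\alpha^\infty,\beta^\infty\in(0,\infty)$ every weight $w_e^\infty=(1-\pi_e^\infty)/(\alpha^\infty)^2+\pi_e^\infty/(\beta^\infty)^2$ is strictly positive, so $L_{W^\infty}=A^\top W^\infty A$ is the Laplacian of the connected graph $\G$ and hence $\kappa^\infty=\dim\ker L_{W^\infty}=1$. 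Since the Moore--Penrose inverse is continuous on the stratum of constant rank, $L_{W^{(t_k+1)}}^\dag\to L_{W^\infty}^\dag$; combined with $W^{(t_k+1)}\to W^\infty$ and $x^{(t_k+1)}\to x^\infty$ this yields $x^\infty=L_{W^\infty}^\dag A^\top W^\infty b$, i.e.\ \eqref{eq:conv1}--\eqref{eq:conv2}.

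Next I would dispose of the projection. The posterior map $\xi_e=f(z_e=1|x,\alpha,\beta)$ is continuous for $\alpha,\beta>0$, so $\xi^{(t_k+1)}\to\xi^\infty:=f(z_e=1|x^\infty,\alpha^\infty,\beta^\infty)$. Because $\mathcal P_s$ is piecewise linear with finitely many selection patterns, I would pass to a further subsequence along which the set $S$ of the $s$ zeroed indices is constant; on it $\mathcal P_s$ coincides with the fixed continuous coordinate projection $\mathcal P_S$, so $\pi^{(t_k+1)}\to\mathcal P_S(\xi^\infty)$, and the tie-breaking convention guarantees that $\mathcal P_S(\xi^\infty)$ is a legitimate value of $\mathcal P_s(\xi^\infty)$, giving \eqref{eq:conv3}. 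Feeding these limits into the closed forms of Proposition~\ref{prop:pi}, whose denominators $\|\pi\|_1$ and $\|\ind-\pi\|_1$ remain positive (indeed $\|\ind-\pi\|_1\geq s$, and $\|\pi\|_1>0$ since $\mathcal P_s$ leaves $|\E|-s\geq 1$ strictly positive components), yields by continuity the estimator identities \eqref{eq:conv4} for $\alpha^\infty$ and $\beta^\infty$.

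The main obstacle is that all of the preceding limits concern the \emph{shifted} iterates $\zeta^{(t_k+1)}$, whereas the fixed-point identities must relate the components of the \emph{single} limit $\zeta^\infty=\lim_k\zeta^{(t_k)}$ to one another; closing this loop requires $\pi^{(t_k+1)}-\pi^{(t_k)}\to 0$ and likewise for $\alpha,\beta$, a statement that is circular at face value since the three updates are mutually coupled. I would break the circularity through the Lyapunov function: by Lemma~\ref{lemma:Lyapunov} the monotone sequence $\tilde V(\zeta^{(t)})$ converges, so each successive decrease tends to zero and, in the monotonicity chain, every individual partial-minimization gap vanishes. Invoking the strict convexity of $\tilde V$ in $\pi$ (from the strict concavity of the entropy $H$) and in $(\alpha,\beta)$, together with the uniqueness of the WLS minimizer modulo $\mathrm{span}(\ind)$, these vanishing gaps force the one-step increments of $\pi$, $\alpha$ and $\beta$ to tend to zero along the subsequence, so $\zeta^{(t_k+1)}\to\zeta^\infty$. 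Finally, since the explicit $1/\log(t+1)$ term in Step~7 vanishes in the limit, the limiting update map is autonomous, and the identities just derived express exactly that $\zeta^\infty$ is one of its fixed points.
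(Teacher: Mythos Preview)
Your proposal is essentially correct and in several respects more careful than the paper's own argument. The paper dismisses \eqref{eq:conv1}, \eqref{eq:conv2} and \eqref{eq:conv4} as ``immediate by continuity'' and concentrates solely on \eqref{eq:conv3}, which it handles by a direct case analysis on the support of $\pi^\sharp$: for $i\in\supp(\pi^\sharp)$ one has eventually $\pi_i^{(t_\ell)}\neq0$, hence $\pi_i^{(t_\ell)}=\xi_i^{(t_\ell)}$ and \eqref{eq:FP0} follows; for $i\notin\supp(\pi^\sharp)$ it shows the ordering inequality \eqref{eq:FP1} by distinguishing whether $\pi_i^{(t_\ell)}$ vanishes eventually or only asymptotically. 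Your constant-selection-subsequence device achieves the same conclusion more compactly, and buys you the freedom not to reason separately about each index. More importantly, you explicitly flag and resolve the time-shift issue (matching limits along $t_k$ and $t_k{+}1$) via the convergence of the monotone Lyapunov sequence and the vanishing of the partial-minimization gaps; the paper relies only on Lemma~\ref{as_reg} for $x$ and does not explicitly justify the corresponding asymptotic regularity for $\alpha,\beta,\pi$, so your treatment is the more complete one here. One small correction: $\tilde V$ is not strictly convex in $(\alpha,\beta)$ (the map $\alpha\mapsto c\alpha^{-2}+d\log\alpha$ need not be convex), so your gap argument should appeal instead to the uniqueness of the partial minimizers in Proposition~\ref{prop:pi} together with compactness of the relevant sublevel sets, which is what actually forces the one-step increments to vanish.
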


\begin{proof}
If $({x}^{\sharp},{\pi}^{\sharp},{\alpha}^{\sharp},{\beta}^{\sharp},\epsilon^{\sharp})$
is an accumulation point of the sequence $({x}^{(t)},{\pi}^{(t)},{\alpha}^{(t)},{\beta}^{(t)},{\epsilon}^{(t)})_{t\in\N}$, then there exists a subsequence
$({x}^{(t_{\ell})},{\pi}^{(t_{\ell})},{\alpha}^{(t_{\ell})},{\beta}^{(t_{\ell})},{\epsilon}^{(t_{\ell})})_{\ell\in\N}$ that converges to $({x}^{\sharp},{\pi}^{\sharp},{\alpha}^{\sharp},{\beta}^{\sharp},\epsilon^{\sharp})$ as $\ell\rightarrow\infty.$ 
We now show~\eqref{eq:conv3}, since the other conditions are immediate by continuity.
In order to verify~\eqref{eq:conv3}, we need to prove that for all $i\in\mathrm{supp}(\pi^{\sharp})$
\begin{equation}\label{eq:FP0}
\pi^{\sharp}_i=\frac{{\exp(-\frac{|b_i-(Ax^{\sharp})_{i}|^2}{2|\beta^{\sharp}|^2})}\frac{p}{\beta^{\sharp}}}{\frac{1-p}{\alpha^{\sharp}}{\exp(-\frac{|b_i-(Ax^{\sharp})_{i}|^2}{2|\alpha^{\sharp}|^2})}+\frac{p}{\beta^{\sharp}}{\exp(-\frac{|b_i-(Ax^{\sharp})_{i}|^2}{2|\beta^{\sharp}|^2})}}
\end{equation}
and for any $i \notin\mathrm{supp}(\pi^{\sharp})$ and $j\in\mathrm{supp}(\pi^{\sharp})$  
{{\begin{align}\label{eq:FP1}
&\frac{{\exp(-\frac{|b_i-(Ax^{\sharp})_{i}|^2}{2|\beta^{\sharp}|^2})}\frac{p}{{\beta^{\sharp}}}}{\frac{1-p}{{\alpha^{\sharp}}}{\exp(-\frac{|b_i-(Ax^{\sharp})_{i}|^2}{2|\alpha^{\sharp}|^2})}+\frac{p}{{\beta^{\sharp}}}{\exp(-\frac{|b_i-(Ax^{\sharp})_{i}|^2}{2|\beta^{\sharp}|^2})}}\\
&\qquad\leq\frac{{\exp(-\frac{|b_j-(Ax^{\sharp})_{j}|^2}{2|\beta^{\sharp}|^2})}\frac{p}{\beta^{\sharp}}}{\frac{1-p}{{\alpha^{\sharp}}}{\exp(-\frac{|b_j-(Ax^{\sharp})_{j}|^2}{2|{\alpha^{\sharp}|^2}})}+\frac{p}{\beta^{\sharp}}{\exp(-\frac{|b_j-(Ax^{\sharp})_{j}|^2}{|{\beta^{\sharp}|^2}})}}\nonumber.
\end{align}{}}}

Since $\lim_{\ell\rightarrow\infty}{\pi}^{(t_{\ell})}=\pi^{\sharp}$, then there exists $\ell_0$ such that, $\forall \ell>\ell_0$ and  $\forall i\in\mathrm{supp}(\pi^{\sharp})$, $\pi_i^{(t_\ell)}\neq 0$ and
$\pi_i^{(t_\ell)}=\xi_i^{(t_\ell)}\to\pi_{i}^{\sharp},$%
%
so that \eqref{eq:FP0} is verified.

If $i\notin \mathrm{supp}(\pi^{\sharp})$, then we have to distinguish the following two cases: either (a) $\pi_i^{(t_\ell)}$ is zero eventually or (b) $\pi_i^{(t_\ell)}$ converges to zero asymptotically.
In case (a), there exists ${\ell_0}\in\N$ such that $\forall\ell>\ell_0$, $\pi_i^{(t_{\ell})}=0$, from which 
$
\xi_i^{(t_{\ell})}<\xi_j^{(t_{\ell})},\  \forall j\in \mathrm{supp}(\pi^{\sharp})
$ and~\eqref{eq:FP1} is satisfied.
In case (b), there exists a strictly positive sub-sequence $({\ell_q})_{q\in\N}$ such that $\pi_i^{(t_{\ell_q})}=\xi_i^{(t_{\ell_q})}\rightarrow \pi_i^{\sharp}=0$. Since at the same time $\xi_j^{(t_{\ell_q})}$ converges to $\pi_i^\sharp>0$ for all $ j\in \mathrm{supp}(\pi^{\sharp})$, there exists $q_0\in\N$ such that $\forall q>q_0$ we have
$
\xi_i^{(t_{\ell_q})}<\xi_j^{(t_{\ell_q})},
$ and, by letting $q\rightarrow\infty$,
\begin{align*}\begin{split}
&\frac{{\exp(-\frac{|b_i-(Ax^{\sharp})_{i}|^2}{2|\beta^{\sharp}|^2})}\frac{p}{{\beta^{\sharp}}}}{\frac{1-p}{{\alpha^{\sharp}}}{\exp(-\frac{|b_i-(Ax^{\sharp})_{i}|^2}{2|\alpha^{\sharp}|^2})}+\frac{p}{{\beta^{\sharp}}}{\exp(-\frac{|b_i-(Ax^{\sharp})_{i}|^2}{2|\beta^{\sharp}|^2})}}\\
&\qquad\leq\frac{{\exp(-\frac{|b_j-(Ax^{\sharp})_{j}|^2}{2|\beta^{\sharp}|^2})}\frac{p}{\beta^{\sharp}}}{\frac{1-p}{{\alpha^{\sharp}}}{\exp(-\frac{|b_j-(Ax^{\sharp})_{j}|^2}{2|{\alpha^{\sharp}|^2}})}+\frac{p}{\beta^{\sharp}}{\exp(-\frac{|b_j-(Ax^{\sharp})_{j}|^2}{|{\beta^{\sharp}|^2}})}}\end{split}
\end{align*}
We conclude that for all $i\notin\mathrm{supp}(\pi^{\infty})$
$$\pi^{\sharp}_{i}=0=\mathcal{P}_{s}\left(\frac{{\exp(-\frac{|b-Ax^{\sharp}|^2}{2|\beta^{\sharp}|^2})}\frac{p}{\beta^{\sharp}}}{\frac{1-p}{\alpha^{\sharp}}{\exp(-\frac{|x^{\sharp}|^2}{2|\alpha^{\sharp}|^2})}+\frac{p}{\beta^{\sharp}}{\exp(-\frac{|x^{\sharp}|^2}{2|\beta^{\sharp}|^2})}}\right)_{i}.$$

\end{proof}

Since the sequence $x^{(t)}$ is bounded (see Lemma~\ref{lemma:boundness}), there exists a subsequence $x^{(t_j)}$ such that
$  
x^{(t_j)}\to{x}^{\infty}
$. Moreover, $  
\alpha^{(t_j)}\to{\alpha}^{\infty}
$, $  
\beta^{(t_j)}\to{\beta}^{\infty}
$, $  
\pi^{(t_j)}\to{\pi}^{\infty}
$, and  $  
\eps^{(t_j)}\to{\eps}^{\infty}
$.
From Lemma~\ref{as_reg}, we get
$
\lim_{t\rightarrow\infty}x^{(t_j+1)}=\lim_{t\rightarrow\infty}x^{(t_j)}={x}^{\infty},
$
proving convergence. Finally, Lemma~\ref{lemma:fixed_points} ensures that $\zeta^{(t)}$ converges to a fixed point.

\subsection{Proof of Theorem~\ref{thm:convergence2}: convergence of Algorithm~\ref{algo:dEM}}\label{sect:proof-th2}

Let us consider the function $\map{V}{\R^{\mathcal{V}}\times[0,1]^{\mathcal{E}}}{ \R}$ defined from~\eqref{eq:loglikelihood} by fixing the variables $\alpha$ and $\beta$, together with a surrogate function $V^{\sf{S}}:\R^{\mathcal{V}}\times\R^{\mathcal{V}}\times[0,1]^{\mathcal{E}}\rightarrow \R$
\begin{equation}
V^{\sf{S}}(x,z,\pi)
=V(x,\pi)+\frac{1}{2\tau}(x-z)^{\top}(I-\tau L_W)(x-z),
\label{surro}
\end{equation}
where $$W=\mathrm{diag}\left(\frac{1-\pi}{\alpha^2}+\frac{\pi}{\beta^2}\right)$$
We let the reader verify the following two lemmas.
\begin{lemma}[Partial minimizations]\label{lemma:partial_minimizations2}
If
\begin{align*}
\widehat{x}&=\argmin{x\in\R^ {\mathcal{V}}}V^{\sf{S}}(x,z,\pi)\qquad\text{and}\qquad
\widehat{\pi}&=\argmin{\pi\in[0,1]^ {\mathcal{E}}}V(x,\pi),
\end{align*}
then
\begin{align*}
\widehat{x}&=(I-\tau L_W)z+\tau A^{\top}Wb\\
\widehat{\pi}_e&=\frac{{\frac{p}{\beta}}\e^{-\frac{|b_e-(Ax)_{e}|^2}{2\beta^2}}}{{\frac{p}{\beta}}\e^{-\frac{|b_e-(A x)_{e}|^2}{2\beta^2}}+{\frac{\left(1 - p\right)}{\alpha}}\e^{-\frac{|b_e-(A x)_{e}|^2}{2\alpha^2}}} \quad \forall\, e\in\mathcal{E}
\end{align*}
\end{lemma}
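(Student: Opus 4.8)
The plan is to treat the two minimizations separately, since each reduces to a short calculus computation once the structure of the surrogate is understood.

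For the $x$-minimization, I would first note that, with $\alpha$ and $\beta$ held fixed, the only $x$-dependence of $V(\cdot,\pi)$ comes through the weighted residual $\tfrac12(b-Ax)^\top W(b-Ax)$, so that $\nabla_x V(x,\pi)=L_W x-A^\top W b$. I would then differentiate the quadratic penalty in $V^{\sf{S}}$, obtaining $\tfrac{1}{\tau}(I-\tau L_W)(x-z)$, and add the two gradients. The key observation — and the whole point of the surrogate in~\eqref{surro} — is that the two $L_W x$ contributions cancel exactly: setting $\nabla_x V^{\sf{S}}(x,z,\pi)=0$ leaves $\tfrac{1}{\tau}x-\tfrac{1}{\tau}z+L_W z-A^\top W b=0$, whence $\widehat x=(I-\tau L_W)z+\tau A^\top W b$, which is precisely the gradient update in Step~4 of Algorithm~\ref{algo:dEM}. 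To confirm that this stationary point is the unique global minimizer, I would compute the Hessian $\nabla_x^2 V^{\sf{S}}=L_W+\tfrac{1}{\tau}(I-\tau L_W)=\tfrac{1}{\tau}I\succ0$, so that $V^{\sf{S}}(\cdot,z,\pi)$ is strictly convex in $x$ and its unique minimizer coincides with the stationary point just found.

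For the $\pi$-minimization, I would reuse the computation already carried out in the proof of Theorem~\ref{prop:ML-V}. With $x,\alpha,\beta$ fixed, the dependence of $V(x,\pi)$ on each $\pi_e$ is the sum of an affine part and the convex contribution $-H(\pi_e)=\pi_e\log\pi_e+(1-\pi_e)\log(1-\pi_e)$, so $V$ is separable and strictly convex in each coordinate $\pi_e$. Setting $\partial V/\partial\pi_e=0$ gives $\log\tfrac{\pi_e}{1-\pi_e}=\log\tfrac{p}{\beta}-\log\tfrac{1-p}{\alpha}-\tfrac{(b-Ax)_e^2}{2\beta^2}+\tfrac{(b-Ax)_e^2}{2\alpha^2}$, and solving for $\pi_e$ yields exactly the posterior probability $\widehat\pi_e=f(z_e=1\mid b_e,x,\alpha,\beta)$ stated in the lemma. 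Since the derivative diverges to $\mp\infty$ as $\pi_e\to0^+$ and $\pi_e\to1^-$, the minimizer lies in the interior of $[0,1]^\E$ and no boundary analysis is required.

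The only genuinely subtle point is the cancellation in the $x$-step: the penalty is built with the matrix $I-\tau L_W$ precisely so that its gradient contributes a $-L_W z$ term which, combined with the $L_W x$ coming from $\nabla_x V$, collapses the implicit normal equation $L_W x=A^\top W b$ into the explicit one-step update. Recognizing that this is by design — rather than a coincidence — is the conceptual heart of the argument; everything else is the routine gradient-and-Hessian verification sketched above, which is why the statement can reasonably be left to the reader.
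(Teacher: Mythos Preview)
Your proposal is correct and is exactly the straightforward differentiation that the paper intends: the paper states explicitly that ``We let the reader verify the following two lemmas,'' and your gradient computation for $V^{\sf S}$ (with the designed cancellation of the $L_W x$ terms and Hessian $\tfrac{1}{\tau}I$) together with the reuse of the $\partial V/\partial\pi_e$ calculation from the proof of Theorem~\ref{prop:ML-V} are precisely the verifications expected.
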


\begin{lemma}[Monotonicity]\label{lemma:Lyapunov2}
The function $V$ defined in this section is nonincreasing along the iterates $\zeta^{(t)}=(x^{(t)},\pi^{(t)}).$
\end{lemma}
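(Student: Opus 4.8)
The plan is to read the two updates of Algorithm~\ref{algo:dEM} as a majorization--minimization step in $x$ followed by an exact minimization in $\pi$, using the surrogate $V^{\sf{S}}$ in~\eqref{surro} together with the two partial minimizations of Lemma~\ref{lemma:partial_minimizations2}. The monotonicity $V(\zeta^{(t+1)})\le V(\zeta^{(t)})$ will then follow by chaining one inequality from each of the two updates.

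First I would record the structural property of the surrogate. Write $W=W(\pi^{(t)})$ for the weight matrix built from $\pi^{(t)}$ (the one used in Step~3--4). Since the only $x$-dependent part of $V(\cdot,\pi)$ is the quadratic $\tfrac12(b-Ax)^\top W(b-Ax)$, the map $V(\cdot,\pi)$ has constant Hessian $L_W=A^\top W A$, so by~\eqref{surro}
\[
V^{\sf{S}}(x,z,\pi)-V(x,\pi)=\frac{1}{2\tau}(x-z)^\top(I-\tau L_W)(x-z),
\]
which vanishes at $x=z$ and is nonnegative as soon as $I-\tau L_W\succeq0$. Here the hypothesis on $\tau$ is exactly what is needed: because $0<\alpha<\beta$ forces $w_e=(1-\pi_e)\alpha^{-2}+\pi_e\beta^{-2}\le\alpha^{-2}$, we get $\|L_W\|=\|A^\top W A\|\le\alpha^{-2}\|A\|^2$, and the bound assumed on $\tau$ in Theorem~\ref{thm:convergence2} is calibrated precisely so that $\tau\|L_W\|\le1$, whence $I-\tau L_W\succeq0$ for every $\pi\in[0,1]^\E$. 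Thus $V^{\sf{S}}(\cdot,z,\pi)$ majorizes $V(\cdot,\pi)$ and is tight at $x=z$.

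Next comes the descent itself. The gradient update of Step~4, $x^{(t+1)}=(I-\tau L_W)x^{(t)}+\tau A^\top W b$ with $W=W(\pi^{(t)})$, is exactly the minimizer $\widehat{x}=\argmin{x}V^{\sf{S}}(x,x^{(t)},\pi^{(t)})$ identified in Lemma~\ref{lemma:partial_minimizations2} with $z=x^{(t)}$. Combining majorization, minimality, and tightness gives
\[
V(x^{(t+1)},\pi^{(t)})\le V^{\sf{S}}(x^{(t+1)},x^{(t)},\pi^{(t)})\le V^{\sf{S}}(x^{(t)},x^{(t)},\pi^{(t)})=V(x^{(t)},\pi^{(t)}).
\]
For the update of $\pi$, Lemma~\ref{lemma:partial_minimizations2} shows that the posterior $\pi^{(t+1)}_e=f(z_e=1\mid x^{(t+1)},\alpha,\beta)$ computed in Step~5 equals $\argmin{\pi}V(x^{(t+1)},\pi)$, so that $V(x^{(t+1)},\pi^{(t+1)})\le V(x^{(t+1)},\pi^{(t)})$. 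Chaining the two displays yields $V(\zeta^{(t+1)})\le V(\zeta^{(t)})$, which is the claim.

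The only genuinely delicate point is the majorization step, i.e.\ securing $I-\tau L_W\succeq0$ uniformly in $t$; this is where the restriction on $\tau$ is indispensable. Notably, and in contrast with the centralized setting of Theorem~\ref{thm:convergence}, the required bound holds uniformly for free here: since $\alpha$ and $\beta$ are held fixed, the weights and hence $\|L_W\|$ are controlled by the iterate-independent constant $\alpha^{-2}\|A\|^2$, so no regularization is needed. The remaining two inequalities are immediate from the optimality statements of Lemma~\ref{lemma:partial_minimizations2}.
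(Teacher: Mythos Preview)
Your approach is correct and is exactly what the paper intends: the surrogate $V^{\sf S}$ in~\eqref{surro} was introduced precisely to enable this majorization--minimization reading of the gradient step, and the paper (which leaves the proof of this lemma to the reader) uses the very same chain $V(x^{(t+1)},\pi^{(t+1)})\le V(x^{(t+1)},\pi^{(t)})\le V^{\sf S}(x^{(t+1)},x^{(t)},\pi^{(t)})\le V^{\sf S}(x^{(t)},x^{(t)},\pi^{(t)})=V(x^{(t)},\pi^{(t)})$ in the proof of Lemma~\ref{as_reg2}.

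One quantitative slip worth flagging: from your (correct) bound $\|L_W\|\le\alpha^{-2}\|A\|^2$ together with the hypothesis $\tau<\alpha/\|A\|^2$ of Theorem~\ref{thm:convergence2}, you obtain only $\tau\|L_W\|<\alpha^{-1}$, which is not $\le1$ when $\alpha<1$. This mismatch is inherited from the paper itself, which in the proof of Lemma~\ref{as_reg2} writes $\mu=\max_t\|A^\top W^{(t)}A\|\le\|A\|^2/\alpha$ where $\|A\|^2/\alpha^2$ is clearly meant. Under the intended hypothesis $\tau<\alpha^2/\|A\|^2$ your argument goes through verbatim; as written, the positive semidefiniteness of $I-\tau L_W$ is not secured by the stated bound on $\tau$ alone.
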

%

We are now able to show that Algorithm~2 converges numerically. 

\begin{lemma}\label{as_reg2}
If $x^{(t)}$ is the sequence generated by Algorithm~\ref{algo:dEM}, then $x^{(t+1)}-x^{(t)}\rightarrow0$ as $t\rightarrow\infty.$
\end{lemma}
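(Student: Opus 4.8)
The plan is to adapt the argument of Lemma~\ref{as_reg} to Algorithm~\ref{algo:dEM}, exploiting that here $\alpha$ and $\beta$ are held fixed, which makes several of the steps substantially simpler. The Lyapunov function is $V$ (the restriction of~\eqref{eq:loglikelihood} to fixed $\alpha,\beta$), and the engine of the proof is the surrogate $V^{\sf{S}}$ in~\eqref{surro}, which refines the mere monotonicity of Lemma~\ref{lemma:Lyapunov2}. The target is a \emph{quantitative} one-step decrease of the form
\begin{equation*}
V(\zeta^{(t)})-V(\zeta^{(t+1)})\ \ge\ \frac{c}{2\tau}\,\|x^{(t+1)}-x^{(t)}\|^2,
\end{equation*}
for some constant $c>0$ independent of $t$; once this is available, summing over $t$ and using that $V$ is bounded below forces $\sum_t\|x^{(t+1)}-x^{(t)}\|^2<\infty$, hence $x^{(t+1)}-x^{(t)}\to0$.

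To obtain the one-step decrease, first recall from Lemma~\ref{lemma:partial_minimizations2} that the gradient step in Algorithm~\ref{algo:dEM} is exactly $x^{(t+1)}=\argmin{x}V^{\sf{S}}(x,x^{(t)},\pi^{(t)})$, with weight matrix $W^{(t+1)}$ built from $\pi^{(t)}$. Since the quadratic penalty in $V^{\sf{S}}$ vanishes at $x=x^{(t)}$, minimality gives $V^{\sf{S}}(x^{(t+1)},x^{(t)},\pi^{(t)})\le V^{\sf{S}}(x^{(t)},x^{(t)},\pi^{(t)})=V(x^{(t)},\pi^{(t)})$, and substituting the definition~\eqref{surro} of $V^{\sf{S}}$ yields
\begin{equation*}
V(x^{(t+1)},\pi^{(t)})\le V(x^{(t)},\pi^{(t)})-\frac{1}{2\tau}(x^{(t+1)}-x^{(t)})^{\top}(I-\tau L_{W^{(t+1)}})(x^{(t+1)}-x^{(t)}).
\end{equation*}
The subsequent $\pi$-update minimizes $V(x^{(t+1)},\cdot)$, so that $V(\zeta^{(t+1)})=V(x^{(t+1)},\pi^{(t+1)})\le V(x^{(t+1)},\pi^{(t)})$, and the two inequalities combine to the displayed decrease, provided $I-\tau L_{W^{(t+1)}}\succeq cI$. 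This is exactly where the stepsize hypothesis $\tau<\alpha/\|A\|^2$ of Theorem~\ref{thm:convergence2} enters: because $\alpha<\beta$, every weight obeys $w^{(t)}_e\le\alpha^{-2}$, so $\|L_{W^{(t)}}\|=\|A^{\top}W^{(t)}A\|\le\alpha^{-2}\|A\|^2$ uniformly in $t$, and the bound on $\tau$ then makes $I-\tau L_{W^{(t)}}$ uniformly positive definite, i.e. $I-\tau L_{W^{(t)}}\succeq cI$ with some $c>0$ not depending on $t$.

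It remains to check that $V$ is bounded below, which here is immediate (in contrast to Algorithm~\ref{algo:EM}): the quadratic data-fidelity term is nonnegative, while each summand $-\pi_e\log\frac{p}{\beta}-(1-\pi_e)\log\frac{1-p}{\alpha}-H(\pi_e)$ is a continuous function of $\pi_e$ on the compact interval $[0,1]$ with $\alpha,\beta$ fixed, hence bounded below by a finite constant. Telescoping the one-step decrease from $t=1$ to $T$ gives
\begin{equation*}
\frac{c}{2\tau}\sum_{t=1}^{T}\|x^{(t+1)}-x^{(t)}\|^2\le V(\zeta^{(1)})-V(\zeta^{(T+1)})\le V(\zeta^{(1)})-\inf V<\infty,
\end{equation*}
and letting $T\to\infty$ yields $\|x^{(t+1)}-x^{(t)}\|\to0$.

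The computations in the two displayed inequalities are routine; the only genuine subtlety, and the step I would treat most carefully, is the \emph{uniform} positive-definiteness of $I-\tau L_{W^{(t)}}$. One must bound the weights $w^{(t)}_e$ uniformly away from the degeneracies that complicated the analysis of Algorithm~\ref{algo:EM}, and confirm that the resulting constant $c$ can be chosen independent of $t$. This uniformity is precisely what the hypothesis on $\tau$ secures; everything else follows the template of Lemma~\ref{as_reg}, but without the complications caused by $\alpha,\beta\to0$ or $\infty$, so neither the regularization sequence $\epsilon^{(t)}$ nor the projection step is needed.
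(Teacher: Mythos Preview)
Your proposal is correct and follows essentially the same approach as the paper: both use the surrogate $V^{\sf S}$, the minimality of $x^{(t+1)}$ with respect to $V^{\sf S}(\cdot,x^{(t)},\pi^{(t)})$ and of $\pi^{(t+1)}$ with respect to $V(x^{(t+1)},\cdot)$, the uniform bound $\|L_{W}\|\le \alpha^{-2}\|A\|^2$ under the stepsize hypothesis, and telescoping against a lower bound on $V$. Your justification of the lower bound on $V$ (nonnegativity of the quadratic term plus compactness for the $\pi$-dependent part) is in fact a bit cleaner than the explicit constant the paper writes down.
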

\begin{proof}
Define $\mu=\max_{t}\|A^{\top}W^{(t)}A\|\leq \|A\|^2/\alpha$ and $\|A\|$  is the spectral norm. Since from assumption $\tau<\alpha/\|A\|^2<\mu^{-1}$ we have
\begin{align}
0&\leq\frac{1}{2\tau}(1-\tau\mu)\|x^{(t)}-x^{(t+1)}\|^2\nonumber\\
&\leq\frac{1}{2\tau}(1-\tau\|A^{\top}W^{(t)}A\|)\|x^{(t)}-x^{(t+1)}\|^2 \label{eq:cond_nec}\\
&\leq\frac{1}{2\tau}(x^{(t)}-x^{(t+1)})^{\top}(I-\tau A^{\top}W^{(t)}A)(x^{(t)}-x^{(t+1)}).\nonumber
\end{align}
If we take the sum until $T$, then
\begin{align}
0&\leq\sum_{t=1}^{T}\frac{1}{2\tau}(x^{(t)}-x^{(t+1)})^{\top}(I-\tau A^{\top}W^{(t)}A)(x^{(t)}-x^{(t+1)})\nonumber\\
&=\sum_{t=1}^{T}\left[V^{\sf{S}}(x^{(t+1)},x^{(t)},\pi^{(t)})-V(x^{(t+1)},\pi^{(t)})\right]\label{eq:pi_t}
\end{align}
Since
$\pi^{(t+1)}=\argmin{\pi}V(x^{(t+1)},\pi)$ then we have $V(x^{(t+1)},\pi^{(t+1)})\leq V(x^{(t+1)},\pi^{(t)})$ and, combining with \eqref{eq:cond_nec} and \eqref{eq:pi_t}, we get
\begin{align*}
0&\leq\frac{1}{2\tau}(1-\tau\mu)\|x^{(t)}-x^{(t+1)}\|^2\\
&\leq\sum_{t=1}^{T}\left[V^{\mathsf{S}}(x^{(t+1)},x^{(t)},\pi^{(t)})-V(x^{(t+1)},\pi^{(t+1)})\right]\\
&\leq\sum_{t=1}^{T}\left[V^{\mathsf{S}}(x^{(t)},x^{(t)},\pi^{(t)})-V(x^{(t+1)},\pi^{(t+1)})\right]\\
&=\sum_{t=1}^{T}\left[V(x^{(t)},\pi^{(t)})-V(x^{(t+1)},\pi^{(t+1)})\right]\end{align*}
where the last inequality follows from the fact $x^{(t+1)}=\argmin{x}V^{\mathsf{S}}(x,x^{(t)},\pi^{(t)})$.
Finally, we observe that the truncated series is telescopic, from which
\begin{align*}
0&\leq\frac{1}{2\tau}(1-\tau\mu)\|x^{(t)}-x^{(t+1)}\|^2\\
&\leq V(x^{(1)},\pi^{(1)})-V(x^{(T+1)},\pi^{(T+1)})	\\	
&\leq V(x^{(1)},\pi^{(1)})
-\lambda N(\log\max\{\frac{p}\beta,\frac{1-p}\alpha\}-\log 2)=C'
\end{align*}
This last inequality holds for any $T\in\N$, then by letting $T\rightarrow\infty$, we obtain that the series is convergent, from which
we deduce that as $t\rightarrow\infty$
\begin{align*}
\frac{1}{2\tau}(x^{(t)}-x^{(t+1)})^{\top}(I-\tau A^{\top}W^{(t)}A)(x^{(t)}-x^{(t+1)})\rightarrow0
\end{align*}
 and by inequality~\eqref{eq:cond_nec} the claim is proved.
\end{proof}

\begin{lemma}\label{lemma:bound2}
The sequence $(x^{(t)})_{t\in\N}$ is bounded.
\end{lemma}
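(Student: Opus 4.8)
The plan is to exploit the fact that in Algorithm~\ref{algo:dEM} the parameters $\alpha$ and $\beta$ are held \emph{fixed}, which keeps the weights uniformly bounded and turns the Lyapunov function $V$ of~\eqref{eq:loglikelihood} into a coercive penalty on the residual $b-Ax$. This is exactly what makes the present argument much shorter than the proof of Lemma~\ref{lemma:boundness}, where $\alpha^{(t)}$ and $\beta^{(t)}$ could degenerate. First I would record the elementary bound on the weights: since $\pi_e^{(t)}\in[0,1]$ and $0<\alpha<\beta$, the weight $w_e^{(t)}=\frac{1-\pi_e^{(t)}}{\alpha^2}+\frac{\pi_e^{(t)}}{\beta^2}$ satisfies $\beta^{-2}\le w_e^{(t)}\le\alpha^{-2}$ for every $e$ and $t$, so $W^{(t)}\succeq\beta^{-2}I$ and the quadratic part of $V$ dominates $\frac{1}{2\beta^2}\|b-Ax^{(t)}\|^2$. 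The remaining terms of $V$ depend on $\pi^{(t)}$ only through the map $\xi\mapsto-\xi\log\frac p\beta-(1-\xi)\log\frac{1-p}\alpha-H(\xi)$, which is continuous on the compact set $[0,1]$ and hence admits a finite minimum $m$; summing over $\E$ yields the lower bound $V(x^{(t)},\pi^{(t)})\ge\frac{1}{2\beta^2}\|b-Ax^{(t)}\|^2+|\E|\,m$.

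Next I would combine this with monotonicity. By Lemma~\ref{lemma:Lyapunov2} the sequence $V(x^{(t)},\pi^{(t)})$ is nonincreasing, so $V(x^{(t)},\pi^{(t)})\le V(x^{(1)},\pi^{(1)})$ for all $t$. Together with the coercive lower bound this gives
\[
\|b-Ax^{(t)}\|^2\le 2\beta^2\bigl(V(x^{(1)},\pi^{(1)})-|\E|\,m\bigr)=:R^2,
\]
a bound uniform in $t$, whence $\|Ax^{(t)}\|\le\|b\|+R$ for all $t$.

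Finally I would pass from the boundedness of $Ax^{(t)}$ to that of $x^{(t)}$ using connectedness. A one-line induction shows that $\ind^\top x^{(t)}=0$ is preserved by the update: indeed $x^{(0)}=0$, and since $A\ind=0$ gives $\ind^\top L_{W}=(A\ind)^\top WA=0$ and $\ind^\top A^\top Wb=(A\ind)^\top Wb=0$, the gradient step of Step~4 leaves $\ind^\top x^{(t)}$ unchanged. Because $\G$ is connected, $\ker A=\operatorname{span}(\ind)$, so $A$ is injective on the subspace $\{x:\ind^\top x=0\}$ and there $\|Ax\|\ge\sigma\|x\|$, with $\sigma>0$ the smallest nonzero singular value of $A$. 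Hence $\|x^{(t)}\|\le\sigma^{-1}\|Ax^{(t)}\|\le\sigma^{-1}(\|b\|+R)$, and the sequence is bounded.

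The only mild point to get right—which I would flag as the main thing to verify—is the coercivity step: one must confirm that the logarithmic and entropy part of $V$ is bounded below \emph{independently of $x$} (it is, being a continuous function of $\pi\in[0,1]^\E$ alone), so that the uniform upper bound on $V$ coming from Lemma~\ref{lemma:Lyapunov2} genuinely constrains the residual. Everything else is a routine consequence of the fixedness of $\alpha,\beta$ and of the connectedness of $\G$, and no regularization or projection machinery is needed here, in contrast with the centralized case.
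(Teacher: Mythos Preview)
Your argument is correct. It differs genuinely from the paper's own proof, which does not touch the Lyapunov function $V$ at all: instead, the paper bounds $\|x^{(t+1)}\|$ directly from the affine update of Step~4, obtaining a recursion of the form $\|x^{(t+1)}\|\le(1-\tau\mu_2)\|x^{(t)}\|+\tau\gamma$ where $\mu_2$ is a uniform positive lower bound for the second eigenvalue of $L_{W^{(t)}}$ and $\gamma$ a uniform upper bound for $\|A^\top W^{(t)}b\|$; summing the geometric series gives $\|x^{(t)}\|\le\gamma/\mu_2$. Your route is arguably more in the spirit of the rest of the convergence analysis: you exploit the monotonicity of $V$ (Lemma~\ref{lemma:Lyapunov2}) together with the coercivity that comes for free once $\alpha,\beta$ are fixed, and then pass from a bound on $\|b-Ax^{(t)}\|$ to one on $\|x^{(t)}\|$ via the zero-mean constraint and connectedness. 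The paper's approach yields an explicit bound independent of the initial value of $V$ and makes the role of the step size $\tau$ visible in the contraction factor; your approach is shorter, needs no spectral estimate on the iteration map, and makes transparent why the fixedness of $\alpha,\beta$ (hence $w_e\ge\beta^{-2}$) is the key simplification over Lemma~\ref{lemma:boundness}.
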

\begin{proof}
Since $\mathds{1}^{\top}x^{(t)}=0$ for all $t$, Lemma~\ref{lemma:partial_minimizations2} implies
\begin{align*}
&\|x^{(t+1)}\|\\
&=\|(I-\tau A^{\top}W^{(t)}A) \big(I-\frac{1}{N}\mathds{1}\mathds{1}^{\top}\big)x^{(t)}+\tau{A}^{\top}W^{(t)}b\|\\
&\leq\|(I-\tau \big(A^{\top}W^{(t)}A)-\frac{1}{N}\mathds{1}\mathds{1}^{\top}\big)\|\| x^{(t)}\|+\|\tau{A}^{\top}W^{(t)}b\|\\
&\leq(1-\tau \mu_2)\| x^{(t)}\|+\tau\gamma
\end{align*}
where $\mu_2=\min_{t}\|A^{\top}W^{(t)}A-\frac{1}{N}\mathds{1}\mathds{1}^{\top}\|>0$ and $\gamma=\max_{t}\|{A}^{\top}W^{(t)}b\|$ (notice that $W^{(t)}$ belongs to a finite set of matrices).
We conclude that
\begin{align*}\lim_{t\rightarrow\infty}\|x^{(t)}\|&\leq\lim_{t\rightarrow\infty} (1-\tau \mu_2)^{t}\|x^{(0)}\|+\sum_{s=0}^\infty(1-\tau \mu_2)^{s}\gamma\tau
\end{align*}
which in turn is no larger than $\frac{\gamma}{\mu_2}$.
\end{proof}

By Lemma~\ref{lemma:bound2}, the sequence $x^{(t)}$ is bounded and then there exists a subsequence $x^{(t_j)}$ such that
$  
x^{(t_j)}\to{x}^{\infty}
$, $  
\alpha^{(t_j)}\to{\alpha}^{\infty}
$, $  
\beta^{(t_j)}\to{\beta}^{\infty}
$, and $  
\pi^{(t_j)}\to{\pi}^{\infty}
$.
From Lemma~\ref{as_reg2}, we get
$
\lim_{t\rightarrow\infty}x^{(t_j+1)}=\lim_{t\rightarrow\infty}x^{(t_j)}={x}^{\infty}
.$
Since $\pi_e(x)=f(z_e=1|x,\alpha,\beta)$ is a continuous function of $x$, then also $\pi^{(t)}\rightarrow\pi^{\infty}$ and $(x^{\infty},\pi^{\infty})$ is a fixed point.

\bibliographystyle{IEEEtran}

\end{document}